\def\@fnsymbol#1{\ensuremath{\ifcase#1\or \!\;\or \!\;\or \ddagger\or
   \mathsection\or \mathparagraph\or \|\or **\or \dagger\dagger
   \or \ddagger\ddagger \else\@ctrerr\fi}}
\crefname{equation}{eq.}{eqs.} 
\crefname{enumi}{}{} 
\crefname{icase}{case}{cases}
\crefname{ipart}{part}{parts}
\crefname{iprop}{property}{properties}
\crefname{iinv}{invariant}{invariants}
\newcommand{\N}{\mathbb{N}}
\newcommand{\R}{\mathbb{R}}
\newcommand{\fO}{\mathcal{O}}
\DeclareMathOperator{\ex}{ex} 
\DeclareMathOperator{\tww}{tww} 
\DeclareMathOperator{\tw}{tww} 
\DeclareMathOperator{\OPT}{\mathsf{OPT}}
\DeclareMathOperator{\TSP}{\mathsf{TSP}} 
\DeclareMathOperator{\MST}{\mathsf{MST}} %
\DeclareMathOperator{\MStT}{\mathsf{MStT}} 
\DeclareMathOperator{\NN}{\mathsf{NN}} 
\newcommand{\MN}{\mathsf{SP}}
\newcommand{\SM}{\mathsf{SM}}
\renewcommand{\alpha}{\upalpha}
\newcommand{\cR}{\mathscr{R}}
\newcommand{\ceil}[1]{\lceil #1 \rceil}
\newcommand{\floor}[1]{\lfloor #1 \rfloor}
\newcommand{\connected}[1]{\def\temp{#1}\ifx\temp\empty\sim\else\overset{#1}{\sim}\fi}
\newcommand{\Av}{\mathrm{Av}}
\tikzset{
	point/.style={circle, fill, inner sep=1.5pt},
	smallpoint/.style={point, inner sep=1.2pt},
	tinypoint/.style={point, inner sep=1pt},
	hlbox/.style={fill, {white!90!black}},
	subrect/.style={draw, fill={white!80!cyan}}, 
	msrect/.style=subrect 
}
\newtheorem{theorem}{Theorem}[section]
\newtheorem{lemma}[theorem]{Lemma}
\newtheorem{corollary}[theorem]{Corollary}
\newtheorem{observation}[theorem]{Observation}
\theoremstyle{definition}
\title{Optimization with pattern-avoiding input 
}
\author[1]{Benjamin Aram Berendsohn\thanks{$^{1,2}$Supported by DFG Grant KO 6140/1-2. Work done while at Freie Universität Berlin.}}
\author[2]{L\'aszl\'o Kozma\protect\footnotemark[1]}
\affil[1]{Max Planck Institute for Informatics, Saarbrücken, Germany}
\affil[2]{Faculty of Computer Science, TU Dresden, Germany}
\author[3]{Michal Opler\thanks{$^3$This work was co-funded by the European Union under the project Robotics and advanced industrial production (reg. no. CZ.02.01.01/00/22\_008/0004590).}}
\affil[3]{Czech Technical University in Prague, Czech Republic}
\date{}
\begin{document}

\maketitle

\begin{abstract}
Permutation pattern-avoidance is a central concept of both enumerative and extremal combinatorics. In this paper we study the effect of permutation pattern-avoidance on the complexity of optimization problems. 

In the context of the \emph{dynamic optimality conjecture} (Sleator, Tarjan, STOC 1983), Chalermsook, Goswami, Kozma, Mehlhorn, and Saranurak (FOCS 2015) conjectured that the amortized search cost of an optimal binary search tree (BST) is \emph{constant} 
whenever the search sequence 
is pattern-avoiding, generalizing earlier conjectures and results (\emph{sequential}-, \emph{traversal-}, \emph{deque-} access) that can be seen as avoiding concrete small patterns.  
The best known bound to date is 
$2^{\alpha{(n)}(1+o(1))}$ recently obtained by Chalermsook, Pettie, and Yingchareonthawornchai (SODA 2024); here $n$ is the BST size and $\alpha(\cdot)$ the inverse-Ackermann function.
In this paper we resolve the conjecture, showing a tight $\fO(1)$ bound.
This indicates a barrier to dynamic optimality: any candidate online BST (e.g., splay trees or greedy trees) must match this optimum, but current analysis techniques only give superconstant bounds.

More broadly, we argue that the \emph{easiness} of pattern-avoiding input is a general phenomenon, not limited to BSTs or even to data structures. To illustrate this, we show that when the input avoids an \emph{arbitrary, fixed, a priori unknown} pattern, one can efficiently compute: 

\begin{itemize}
\item a \emph{$k$-server solution} of $n$ requests from a unit interval, with total cost $n^{\fO(1/\log k)}$, in contrast to the worst-case $\Theta(n/k)$ bound, and
\item a \emph{traveling salesman tour} of $n$ points from a unit box, of  length $\fO(\log{n})$, in contrast to the worst-case $\Theta(\sqrt{n})$ bound; similar results hold for the euclidean \emph{minimum spanning tree}, \emph{Steiner tree}, and \emph{nearest-neighbor} graphs.

\end{itemize}

We show the above results to be tight. Our techniques build on the Marcus-Tardos proof of the Stanley-Wilf conjecture, and on the recently emerging concept of \emph{twin-width}.

\end{abstract}

\newpage

\tableofcontents
\newpage

\section{Introduction}\label{sec1}

Modeling mathematical structure by a \emph{finite list of substructures} that an infinite family of objects \emph{must avoid} is a cornerstone of modern combinatorics. For instance, the \emph{graph minor theory} of Robertson and Seymour characterizes graph properties closed under edge-contractions and edge- and vertex deletions, by a finite list of forbidden minors (e.g., see~\cite{RS, Lovasz,Diestel}). On the algorithmic side, forbidden minors imply the existence of small separators~\cite{Alon_sep, Reed_sep}, which restrict the solution-structure of various optimization problems, leading to algorithmic improvements (e.g., see~\cite{epp_tw, Grohe03, reed2003algorithmic, Dem_tw, param_book}). 

In permutations a rich theory of \emph{pattern-avoidance} has developed (e.g., see the surveys~\cite[\S\,12]{bona2015handbook}, \cite{ kitaev2011patterns,bona2022combinatorics}), however, this theory has so far focused on enumeration and extremal questions, and algorithmic consequences have been much less studied. Two notable exceptions are algorithmic questions related to \emph{binary search trees} (BST), and \emph{permutation pattern matching} (PPM). In this paper we characterize the complexities of three fundamental optimization problems when the input is pattern-avoiding: BST, $k$-server on the line, and euclidean TSP. We build on the \emph{twin-width} decomposition of permutations initially developed by Guillemot and Marx~\cite{GM_PPM} for the PPM problem. We start by defining permutation pattern avoidance, the central notion of our paper.
\smallskip

\begin{tcolorbox}[boxsep=4pt,left=0pt,right=0pt,top=0pt,bottom=0pt, colframe=gray,colback=white,boxrule=1pt,arc=2.4pt]
A permutation $\tau = \tau_1, \dots, \tau_n$ \textbf{contains} a permutation pattern $\pi = \pi_1, \dots, \pi_k$, if there are indices $i_1 < \cdots < i_k$ for which $\tau_{i_j} < \tau_{i_\ell} \Leftrightarrow \pi_j < \pi_\ell$, for all $j,\ell \in [k]$. In words, $\tau$ has a subsequence (not necessarily contiguous) that is \emph{order-isomorphic} to $\pi$; otherwise  $\tau$ \textbf{avoids} $\pi$ (see Fig.~\ref{fig:perm-reprs}). 
\end{tcolorbox}

\paragraph{Binary search trees.} The BST problem
concerns one of the most fundamental data structuring tasks: serving a sequence of accesses\footnote{An \emph{access} is a successful search, where the node storing the searched key is visited. The model can be easily extended to accommodate unsuccessful searches, insertions, deletions, and other operations; the setting with accesses, however, already captures the essential complexity of the problem, e.g., see~\cite{DHIKP, FOCS15}. } $X = (x_1, \dots, x_m) \in [n]^m$ in a binary search tree that stores the set $[n]$. Standard balanced trees achieve $\fO(m\log{n})$ total cost, which is, in general, best possible~\cite{Knuth3}. For some structured sequences $X$, however, the cost can be reduced by adjusting the BST between accesses, via rotations. 
More precisely, each access starts with a pointer at the root, and we may do rotations at the pointer, or move the pointer along an edge (both being unit cost operations), so that the pointer visits the accessed node at least once. 
Denote by $\OPT(X)$ the lowest cost with which $X$ can be served in such a BST. We call $\OPT(X)$ the \emph{offline} optimum, since it can be computed with a priori knowledge of $X$ (we make this standard model more precise in~\S\,\ref{sec3.1}).

The \emph{dynamic optimality conjecture}~\cite{ST85} asks whether the \emph{splay tree}, 
a simple and elegant \emph{online} strategy for BST adjustment, achieves cost $\fO(\OPT(X))$ for all $X$ (online means that access $x_i$ is revealed only when $x_1, \dots, x_{i-1}$ have already been completed). The conjecture is more recently understood as asking whether \emph{any} online BST can achieve this; a prime candidate besides splay is the \emph{greedy BST}~\cite{DHIKP}. The question has inspired four decades of research and remains open.\footnote{For more background and related work on dynamic optimality, we refer to the surveys~\cite{iacono_survey, LT19}.} 

The quantity $\OPT(X)$ is poorly understood; much of what we know about it comes from analysing \emph{online algorithms}.\footnote{Computing $\OPT(X)$ exactly is not known to be either in P or NP-hard; the best known polynomial-time approximation ratio is $\fO(\log\log{n})$ achieved by Tango trees~\cite{Tango}, an online algorithm.}
For almost all $X \in [n]^m$, $\OPT(X) \in \Omega(m\log{n})$~\cite{BlumCK, greedy_deque}, which can be matched by a simple balanced tree. Dynamic optimality is thus mainly about sequences $X$ whose structure allows BSTs to \emph{adapt to them} via rotations, enabling $\OPT(X) \in o(m \log{n})$ or even $\OPT(X) \in \Theta(m)$.

In~\cite{FOCS15} it was observed that several classical structured sequences can be described in terms of \emph{permutation pattern avoidance},  
and well-studied corollaries of dynamic optimality, e.g., deque-, traversal-, sequential- access~\cite{ST85}, correspond to the avoidance of concrete small patterns. Thus, as a broad generalization, ~\cite{FOCS15} conjectured that $\OPT(X) \in \fO(m)$ whenever $X \in [n]^m$ is an access sequence\footnote{We mostly focus on the case when $X$ is a permutation, in particular $m=n$. This assumption is customary and not a significant restriction: As shown in~\cite{FOCS15}, bounds can be transferred from permutations to general access sequences, and dynamic optimality for permutations implies dynamic optimality in full generality.} that avoids an \emph{arbitrary} constant-sized pattern.

It is a priori not obvious why the avoidance of an arbitrary pattern should make BST access easier. Our current understanding comes from special cases, and from 
analysing the online \emph{greedy BST}~\cite{DHIKP}, which appears particularly well-behaved in this setting. In~\cite{FOCS15} it was shown that the total cost of greedy for serving an access sequence $X$ that avoids a $k$-permutation is $2^{\alpha{(n)}^{\fO(k)}} \cdot m$, which was recently improved to $2^{\fO(k^2) + (1+o(1))\alpha{(n)}} \cdot m$ in~\cite{ChalermsookPettieY}; $\alpha(\cdot)$ is the slowly growing inverse-Ackermann function. These were to date the best bounds on $\OPT(X)$, leaving open whether $\fO(1)$ cost per access, i.e., independent of the tree size $n$, is possible. 
Our first main result answers this question, settling the complexity of BST access with pattern-avoiding input: 

\begin{theorem}[informal]\label{thm1}
A BST access sequence of length $m$ that avoids a permutation pattern of size $k$ can be served with total cost $2^{\fO(k)} \cdot m$.
\end{theorem}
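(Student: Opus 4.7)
The plan is to work in the well-known geometric view of BSTs, in which $\OPT(X)$ equals (up to constant factors) the minimum size of an \emph{arborally satisfied} superset of the point set $P_X \subset [n] \times [m]$ encoding $X$. Recall that a set is arborally satisfied if, for every pair of non-collinear points, their axis-aligned bounding rectangle contains a third point of the set. Our task thus reduces to the following combinatorial problem: given that $X$ avoids a pattern of size $k$, construct an arborally satisfied superset of $P_X$ of cardinality $2^{\fO(k)} \cdot m$.

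The central tool will be the twin-width decomposition of pattern-avoiding permutations from Guillemot--Marx, which in turn refines the Marcus--Tardos proof of the Stanley--Wilf conjecture. For a $\pi$-avoiding permutation with $|\pi| = k$, this produces a hierarchical partition of the point set into coarser and coarser blocks, where at each level the combinatorial type of the block arrangement belongs to a class of size $2^{\fO(k)}$. I would first extend this decomposition from permutations to the $[n] \times [m]$ access-sequence setting by treating $X$ as a $0/1$ matrix and reapplying the Marcus--Tardos-style counting argument to control the number of non-empty blocks at each resolution.

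The core of the construction is to traverse the decomposition top-down and, at each level, add a bounded number of \emph{witness points} sufficient to satisfy all rectangles whose two defining points lie in distinct child-blocks of some common parent block. Because the block-arrangement type at each level has only $2^{\fO(k)}$ possibilities, the witnesses per block can be taken to be $2^{\fO(k)}$ in number. A global charging scheme then assigns each witness to a uniquely responsible input point of $P_X$, yielding the target bound $2^{\fO(k)} \cdot m$ on the size of the superset and hence on $\OPT(X)$.

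The main obstacle lies in the charging. A naive analysis that charges one batch of witnesses per decomposition level to each input point incurs a spurious $\log n$ factor. To eliminate it, I would exploit the fact that the twin-width decomposition is strongly hierarchical: witness points added at coarser scales can be reused to satisfy rectangles at finer scales, so an amortized analysis along the merge tree of the decomposition collapses the $\log n$ factor into a $2^{\fO(k)}$ overhead. Verifying this rigorously --- in particular, ensuring that the witness points chosen at various scales jointly satisfy \emph{all} rectangles, including those spanning many levels of the hierarchy --- is the technical heart of the proof and is where the twin-width (rather than merely the Marcus--Tardos) structure should be essential.
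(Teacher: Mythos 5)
Your framework is the same as the paper's: pass to the arborally-satisfied-superset formulation and build the superset along the Guillemot--Marx twin-width decomposition. But the technical heart of the argument is exactly the part you leave open, and the way you frame it suggests a misreading of where the difficulty lies. The paper does not use a level-by-level "hierarchical partition with $2^{\fO(k)}$ block-arrangement types per level" and then charge witnesses to input points; it processes the merge sequence $\cR_1,\dots,\cR_n$ one merge at a time ($n-1$ steps in total), and when two rectangles are merged into $Q$ it adds precisely the intersection points inside $Q$ of the grid obtained by extending the sides of the current rectangles. The count is then immediate and local: because the sequence is $d$-wide, $Q$ is non-homogeneous with at most $d$ rectangles, so at most $\fO(d)$ side-extensions cross $Q$ and at most $\fO(d^2)$ points are added per step, giving $\fO(nd^2)$ overall with $d \in \fO(c_\pi) \subseteq 2^{\fO(k)}$ by Guillemot--Marx/Fox. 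In particular, the $\log n$ factor you propose to remove by an amortization argument never arises: the relevant unit of charging is a merge step (of which there are $n-1$), not a (point, level) pair, and your stated reason for the $2^{\fO(k)}$ per-block witness count ("the block-arrangement type has $2^{\fO(k)}$ possibilities") is not the mechanism --- the bound comes from the red degree of the merge sequence, i.e., from twin-width itself.

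The second, and harder, missing piece is correctness, which you do flag but do not resolve: one must show the added points satisfy \emph{every} pair, including pairs of points lying in rectangles that were merged at very different times. The paper handles this by maintaining explicit invariants along the merge sequence --- each rectangle's points form a satisfied set, every non-homogeneous pair of rectangles is mutually connected by monotone staircase paths, and each rectangle contains all current grid intersections inside it --- and verifying them by a case analysis at each merge. Without specifying the witness set concretely (the grid-intersection choice is what makes the invariants checkable) and without such an inductive satisfaction argument, your proposal does not yet yield the theorem; as written it is a plausible plan whose two key steps (the per-merge $\fO(d^2)$ bound and the cross-scale satisfaction proof) remain to be supplied.
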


The result can also be interpreted as a barrier towards dynamic optimality. Any candidate online BST that is to be competitive with $\OPT(X)$ must achieve $\fO(m)$ cost when $X$ avoids a fixed pattern. 

For splay, no improvement over the trivial $\fO(m \log{n})$ is known, apart from very special cases, e.g., when $X = 1,2,3,\dots$, so $X$ avoids $(2,1)$~\cite{Tarjan85, LTpre}. For greedy BST, the above bounds are obtained by using a certain \emph{input-revealing property} that allows relating the cost to a problem in extremal combinatorics~\cite{FOCS15}. 
It is known that this type of analysis cannot give a linear bound for greedy~\cite{FOCS15}, and that the best known superlinear bound for the extremal problem is essentially tight~\cite{ChalermsookPettieY}.  


Our contribution is a different, perhaps more general approach: constructing and analysing an offline BST solution through the \emph{twin-width} decomposition of permutations introduced by Guillemot and Marx~\cite{GM_PPM}.



\paragraph{Encoding permutations.}
A landmark result of combinatorics is the proof of the Stanley-Wilf
 conjecture by Marcus and Tardos~\cite{MarcusTardos}, also building on results by Füredi, Hajnal~\cite{FurediHajnal}, and Klazar~\cite{Klazar}. The result states that the number of permutations of length $n$ that avoid an (arbitrary) permutation $\pi$ (the pattern) is at most ${s_\pi^n}$, i.e., single-exponential in $n$, with a base $s_\pi$ that depends only on the pattern.\footnote{The asymptotic behavior of $s_\pi$ is far from fully understood, see~\S\,\ref{sec2}.} 


Serving an $n$-permutation $\tau$ as an access sequence in a BST with total cost $\mathcal{C}$ gives an \emph{encoding} of $\tau$ with $\fO(\mathcal{C}+n)$ bits~\cite{BlumCK, greedy_deque}. To see this, observe that each unit cost step in the execution corresponds to one of constantly many choices (e.g., move pointer up/left/right, rotate at the pointer, etc.). We can reconstruct $\tau$ by replaying the sequence of operations starting from the initial tree (encoded with another $2n$ bits). Theorem~\ref{thm1} implies a cost $\mathcal{C} \in \fO_{\pi}(n)$, and thus an encoding of $\tau$ with $\fO_{\pi}(n)$ bits, which in turn implies that there are only $2^{\fO_{\pi}(n)}$ $\pi$-avoiding $n$-permutations, giving an ``algorithmization'' of the Stanley-Wilf / Marcus-Tardos theorem.\footnote{Different ``algorithmizations'' have been obtained previously in other settings~\cite{BonnetBourneufEtAl2024,PilipczukSokolowskiEtAl2022}.}

\paragraph{Permutation pattern matching.} The PPM problem
asks whether a given $n$-permutation $\tau$ avoids a given $k$-permutation $\pi$. It is a well-studied algorithmic question (e.g., see~\cite{Bose_PPM, AR_PPM, GM_PPM, BKM_PPM, GR_PPM}) and a breakthrough result by Guillemot and Marx~\cite{GM_PPM} achieved the runtime $2^{\fO(k^2\log{k})}\cdot n$, with an improvement to $2^{\fO(k^2)} \cdot n$ by Fox~\cite{jfox}, showing that PPM is \emph{fixed-parameter tractable} in terms of the pattern-size $k$.

\paragraph{Twin-width.} The PPM algorithm of Guillemot and Marx works by dynamic programming over a decomposition of permutations that they introduce. This decomposition has later been generalized to the concept of \emph{twin-width} in permutations, graphs, as well as in more general structures~\cite{bonnet2021twin, bonnet2021twinperm}. The Guillemot-Marx result (with the improvement by Fox) implies that if an $n$-permutation $\tau$ avoids a $k$-permutation $\pi$, then $\tww(\tau) \leq 2^{\fO(k)}$, where $\tw(\tau)$ is the twin-width of~$\tau$. 
We adapt the decomposition to compute a solution to the BST problem (i.e., a way to serve the access sequence $\tau$ via rotations and pointer moves) of cost $\fO{(n \cdot \tww^2(\tau))}$, implying Theorem~\ref{thm1}. 
The discussion leading to Theorem~\ref{thm1} suggests the following question:

\smallskip
\textit{Is the easiness of pattern-avoiding inputs specific to BSTs, or a broader phenomenon?}

\smallskip
In this paper we emphatically argue for the latter, illustrating it via two representative problems from different areas: the \emph{$k$-server} and the \emph{traveling salesman} (TSP) problems. 

\paragraph{$k$-server.} This problem
is one of the central problems in online optimization~\cite{Manasse, Borodin, K09}. It asks to serve $n$ requests (points in a metric space $\mathcal{M}$) revealed one-by-one, by moving one of the $k$ servers (also located at points of $\mathcal{M}$) to the request. The cost is the total movement of servers (as distance in $\mathcal{M}$); denote the optimum for a sequence $X \in \mathcal{M}^n$ of requests by $\OPT_k(X)$. 

The famous \emph{$k$-server conjecture} asks whether an online algorithm can achieve cost at most $k \cdot \OPT_k(X)$ for all $X$. In general, this is still open (e.g., see~\cite{K09}), but when $\mathcal{M}$ is the real line, a $k$-competitive algorithm is known: the simple and elegant \emph{double coverage}\,(DC)~\cite{Chrobak}. 
Our focus is this well-studied case of the $k$-server problem. 
In contrast to the BST problem, the input now 
is a sequence $X$ of $n$ \emph{reals} (for simplicity, we let $X \in[0,1]^n$). Simple examples show that in the worst case $\OPT_k(X)$ is linear, i.e., $\OPT_k(X) \in \Theta(n/k)$. Pattern-avoidance in the sequence $X$ of requests can be defined in the natural way. Our second main result shows that $\OPT_k(X)$ is significantly smaller if $X$
is pattern-avoiding:

\begin{theorem}[informal]\label{thm2}
A sequence of $n$ requests from the interval $[0,1]$ that avoids a permutation pattern $\pi$ of size $t$ can be served with $k$ servers at total cost
$k^{\fO(1)}n^{\fO_t(1/\log k)}$.
\end{theorem}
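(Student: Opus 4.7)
The plan is to combine the twin-width decomposition of Guillemot--Marx with a Marcus--Tardos-style ``few hot cells'' argument to construct an explicit offline $k$-server solution. First I would reduce the continuous input $X \in [0,1]^n$ to a permutation $\tau$ of $[n]$ by ordinal relabeling, which preserves pattern-avoidance and affects the metric by only a constant once request positions are rescaled. From Guillemot--Marx we obtain $\tww(\tau) \leq 2^{\fO(t)}$, together with a hierarchical decomposition of the point set $\{(i,\tau_i) : i \in [n]\}$ into nested rectangles such that at every scale the active rectangles meeting any horizontal or vertical strip number at most $d = 2^{\fO(t)}$.

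The construction would then be recursive in the value-axis with depth $L = \fO(\log_B k)$, where $B = \Theta(d)$. At each call, the current value-interval is split into $B$ equal sub-intervals; a direct application of Marcus--Tardos to the associated $B \times B$ time--value grid shows that only $\fO(B)$ of the $B^2$ cells are non-empty, since otherwise $\tau$ would contain the forbidden pattern $\pi$. I would allocate the $k$ servers among the active sub-intervals in proportion to their request counts and recurse; the restricted request sequences remain $\pi$-avoiding, so the Guillemot--Marx bound applies hereditarily. At the base case, each leaf uses a single server to serve its requests within a value-range of width $B^{-L} = k^{-\Omega(1)}$, and a careful balance of the leaf sizes forces the aggregated within-leaf cost to be $n^{\fO_t(1/\log k)}$.

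The cost breaks into two parts: \emph{intra-level} movement, where a server sweeps through requests in its assigned sub-interval, and \emph{inter-level} movement, where servers are reassigned between sub-intervals as the recursion switches focus. Summed over the $\fO(\log k)$ levels, the second part contributes $k^{\fO(1)}$ and the first $n^{\fO_t(1/\log k)}$, giving the stated bound. The main obstacle, I believe, is the \emph{balance} of the recursion: Marcus--Tardos controls the number of non-empty cells at each level but not their weights, so a priori a single hot cell could absorb most of the requests and halt the geometric decrease. Twin-width appears essential here, as it furnishes a \emph{consistent} hierarchical refinement rather than an ad hoc grid at each level, so that a balance established at one scale propagates to the next; proving this propagation, and hence a uniform drop of request counts at each recursion level, is the step I expect to be the most delicate, and is likely where the $2^{\fO(t)}$ dependence on the pattern size is absorbed into the $\fO_t(\cdot)$ in the exponent.
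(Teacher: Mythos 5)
There are genuine gaps, and they sit exactly where the paper had to do new work. Your very first step --- replacing the requests by their ranks and asserting that this ``affects the metric by only a constant'' --- is false. The $k$-server cost on the line depends on the actual distances between request values, and a monotone relabeling can stretch an interval of rank-width $1/n$ into an interval of real width close to $1$ (and vice versa), so a cheap schedule for the rank permutation does not transfer to the original input within any bounded factor. This distance-sensitivity is precisely why the Guillemot--Marx merge sequence cannot be used off the shelf here (as it can for BSTs), and it is the reason the paper introduces \emph{distance-balanced} merge sequences (\S\,\ref{sec:dist-bal}): the gridding there is chosen so that every cell (outside a few ``extra-tall'' rows that contain only $\fO(c_\pi)$ points) has real height $\fO(1/m)$, which is what makes recursive costs scale down geometrically. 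Relatedly, cutting the value axis into $B$ \emph{equal} sub-intervals and invoking Marcus--Tardos only bounds the \emph{total} number of nonempty cells by $\fO(B)$; it does not bound the number of nonempty cells met by a single time-column, which is what you need in order to cover each column's cells by $d$ groups of servers (the paper's gridding guarantees at most $5c_\pi$ nonempty cells per row \emph{and} per column, \cref{p:bal-gridding}).

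The second gap is the cost accounting at the leaves. With your parameters, a leaf holds a single server in a value-interval of width $k^{-\Omega(1)}$ but may contain up to $n$ requests, so the within-leaf cost can be $n\cdot k^{-\Omega(1)}$, which is far larger than $n^{\fO(1/\log k)}$ for moderate $k$ (say $k=\log n$). To get the claimed bound you must force the per-cell request counts to drop polynomially in $n$ at each of the $\Theta(\log_{c_\pi} k)$ levels, and ``twin-width furnishes a consistent hierarchy so balance propagates'' is not an argument for this --- in fact the paper does \emph{not} propagate one fixed hierarchy. Instead, at each level of the induction (on $t$ with $k=d^t$, $d=5c_\pi$) it re-grids with granularity $m=n^{1/(t+1)}$ chosen as a function of the \emph{current} instance size, bounds the inter-column server movement by $d^t\cdot \fO(m)$, charges extra-tall rows separately (they hold $\le d$ points, total cost $\fO(d)$), and then controls the recursive contribution of the remaining cells by Jensen's inequality applied to the concave function $x^{1/t}$: the adversarial ``hot cell'' you worry about is actually the easy case, since concavity shows the even split is the worst one. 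This yields $f_t(n)\le b_t\, d^t\, n^{1/(t+1)}$ with $b_t=\fO(120^t)$, i.e.\ the $k^{\fO(1)} n^{\fO_t(1/\log k)}$ bound. So the missing ingredients in your sketch are (i) a decomposition that controls real cell heights rather than ranks, (ii) per-row/per-column sparsity rather than global sparsity, and (iii) the level-by-level choice $m=n^{1/(t+1)}$ plus the Jensen step that replaces your unproven ``balance propagation''.
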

We show that the exponent of $n$ is best possible for almost all $\pi$.
On the other hand, for certain avoided patterns $\pi$, we show a stronger upper bound of roughly $n^{\fO(1/k)}$.
Note that our bound on the cost also applies to an efficient online algorithm (DC) with a cost-increase of at most a factor $k$. In our view, this exemplifies the usefulness of competitive analysis of algorithms: nontrivial structural properties shown for the offline optimum are automatically exploited by simple and general online algorithms not necessarily tailored to this structure.

Our first two theorems concern online problems with patterns defined in the temporal order of the accesses/requests. 
Key to both results is a geometric view where the input is mapped to points in the plane, with the two axes being the time-, and the key-value of the access/request sequence. It is natural to ask whether the approach can be adapted to \emph{purely geometric} optimization problems that take planar point sets as input. 

\paragraph{Traveling salesman problem.} TSP
is such a geometric problem. Let $X$ be a set of $n$ points in the plane (for simplicity, $X \subset [0,1]^2$) and denote by $\TSP(X)$ the (euclidean) length of the shortest tour visiting all $X$. 
Assume $X$ is in general position, i.e., no two points share an $x$- or $y$-coordinate.  Then, $X$ can be seen as a permutation by reading out the $y$-coordinates of the points in a left-to-right order, and pattern-avoidance for permutations naturally extends to the point set $X$; see \S\,\ref{sec2} for precise definitions.\footnote{This notion of geometric pattern can be used, e.g., to express the Erd\H{o}s-Szekeres theorem on monotone subsets. Connections to other types of geometric patterns are explored in~\cite[\S\,14]{Eppstein_book}.} 


A simple and well-known observation is that $\TSP(X) \in \fO(\sqrt{n})$ for all $X$, and this is best possible; take a uniform $\sqrt{n} \times \sqrt{n}$ grid spanning $[0,1]^2$. For point sets $X$ that avoid an arbitrary permutation pattern, the situation changes dramatically, as our third main result shows.  

\begin{theorem}[informal]\label{thm3}
A set of $n$ points in the unit box $[0,1]^2$ that avoids a permutation pattern of size $k$ admits a TSP solution of length $\fO_k(\log{n})$.
\end{theorem}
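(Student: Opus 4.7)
The plan is to build a TSP tour bottom-up over a dyadic hierarchy of $[0,1]^2$, using the Marcus-Tardos theorem at each level to bound the number of non-empty cells and hence the total length of the stitching edges.

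For $\ell = 0, 1, \dots, L$ with $L = \lceil \log_2 n\rceil$, I partition $[0,1]^2$ into the uniform $2^\ell \times 2^\ell$ grid of squares of side $1/2^\ell$, nested so that every level-$(\ell{-}1)$ cell is the union of four level-$\ell$ cells. Let $N_\ell$ be the number of non-empty level-$\ell$ cells. The $\{0,1\}$-matrix $M^{(\ell)}$ that marks them must avoid $\pi$ as a permutation-matrix pattern: any $\pi$-pattern in $M^{(\ell)}$ lifts, by picking one input point from each marked cell, to a $\pi$-pattern of points in general position (strict inequalities on both the $x$- and $y$-coordinates are guaranteed by the distinct cell-rows and cell-columns), contradicting the hypothesis. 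Marcus-Tardos then yields $N_\ell \leq c_\pi \cdot 2^\ell$ for a constant $c_\pi$ depending only on $\pi$.

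I construct a Hamilton path in two phases. At the finest level $L$, I order the $n_i$ points of each non-empty cell arbitrarily; since every intra-cell edge has length at most $\sqrt 2 / 2^L$ and $2^L \geq n$, the aggregate base cost is $\sum_i (n_i-1)\cdot \sqrt 2/2^L \leq n\sqrt 2/2^L = \fO(1)$. Then for $\ell$ running from $L$ down to $1$, inside each non-empty level-$(\ell{-}1)$ parent cell I join its (up to four) non-empty child sub-paths into one sub-path by adding bridge edges between endpoints; this introduces at most $N_\ell - N_{\ell-1}$ new edges at level $\ell$, each of length at most the parent diameter $\sqrt 2/2^{\ell-1}$. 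Hence the contribution of level $\ell$ is at most $N_\ell \cdot \sqrt 2/2^{\ell-1} \leq 2\sqrt 2 \cdot c_\pi$, independent of $\ell$. Summing the $L = \fO(\log n)$ levels and closing the final Hamilton path with one edge of length $\leq \sqrt 2$ gives total length $\fO(c_\pi \log n) = \fO_k(\log n)$.

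The main ingredient is the Marcus-Tardos reduction, which converts the combinatorial sparsity of $\pi$-avoiding inputs into geometric sparsity at every dyadic scale. I expect the hardest part is nothing more than that reduction: the stitching step itself is geometric trivia, because any two points in a common parent cell are within its diameter, so bridge edges satisfy the claimed bound regardless of which endpoints are chosen. Once Marcus-Tardos drives the per-level cost down to $\fO_k(1)$, the $\log n$ levels of the hierarchy produce the $\log n$ bound directly, with no further inductive machinery needed.
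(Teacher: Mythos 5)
Your proof is correct, but it takes a genuinely different route from the paper. The paper proves the formal version (Theorem~\ref{thm:general-mst}) by first constructing a \emph{distance-balanced merge sequence} (Theorem~\ref{thm:decomposition}), then building a spanning tree edge-by-edge along the merge sequence and bounding its length via the summed rectangle dimensions (Corollary~\ref{p:bal-rect-size}), which gives $\fO(c_\pi \log n)$. You instead use a uniform dyadic quadtree and apply the Marcus--Tardos/F\"uredi--Hajnal bound (in the form of Lemma~\ref{p:marcus-tardos-with-limit-non-square}) at every scale: the non-empty-cell matrix at level $\ell$ avoids $\pi$ (your lifting argument is sound, with the usual half-open-cell or general-position convention ensuring strict inequalities), so each level contributes $\fO(c_\pi)$ to the stitching cost, and the $\fO(\log n)$ levels plus the $\fO(1)$ base level give the same $\fO(c_\pi \log n)$ bound. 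Notably, the paper sketches a weaker recursive gridding argument as a warm-up (yielding only $\fO(\log^{\log_2(c_\pi+1)} n)$) before switching to the merge-sequence machinery; your level-by-level accounting over a fixed dyadic hierarchy sharpens that elementary route to the optimal bound without any twin-width apparatus. What each approach buys: yours is shorter and self-contained, needing only the F\"uredi--Hajnal linearity; the paper's pays the overhead of the distance-balanced decomposition but reuses that machinery for the $k$-server bounds and ties the tour directly to the merge structure, which also makes the extension to arbitrary bounded-twin-width point sets immediate.
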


\Cref{thm3} implies a similar bound for related problems like \emph{euclidean minimum spanning tree} and \emph{Steiner tree}, whose optima are at most a constant factor away from $\TSP(X)$. 

\paragraph{Discussion.}

We emphasize that our results (Theorems~\ref{thm1}, \ref{thm2}, \ref{thm3}) hold for \emph{arbitrary} avoided patterns. The three problems we consider are representative of their areas (data structures, online algorithms, geometric optimization). While adaptivity to structure has been extensively studied for all three, to our knowledge, pattern-avoidance was considered before only for BSTs. 
We study all three in a similar geometric view, but the transfer of techniques is not automatic, and the problems highlight specific technical challenges. (The \emph{effect} of pattern-avoidance also differs across the problems, with the cost changing from $\fO(n\log{n})$ to $\fO(n)$ in the first, from $\fO(n)$ to $\fO(n^\varepsilon)$ in the second, and from $\fO(\sqrt{n})$ to $\fO(\log{n})$ in the third.) 

In the BST problem, in both time- and access-key-dimensions, only the ordering matters, with no meaningful concept of distance. In $k$-server, this is true for the time-dimension, but the requests are distance-sensitive (the specific distances between neighboring requests can affect the cost, the input cannot be freely ``stretched''). The TSP problem is inherently geometric, with \emph{both} dimensions distance-sensitive. 
This distinction is reflected in our solutions: for the BST problem the Guillemot-Marx decomposition can be used more or less directly; for the other two problems we develop a refined, ``distance-balanced'' decomposition (\S\,\ref{sec4}), which may be of independent interest. 

\medskip

Pattern-avoidance can be seen as a natural restriction of the input-generating process, to which algorithms can adapt for efficiency. But why does pattern-avoidance help in the first place? In online tasks it is intuitive that \emph{reducing the entropy} of 
%
the distribution for the next access/request can reduce cost. 
However, directly taking advantage of this in an online manner seems very difficult (even when the avoided pattern is known). Another intuition is that pattern-avoidance imposes a global \emph{sparsity} on the input -- this is often studied via properties of \emph{random} pattern-avoiding permutations~\cite{Scaling}. Our results, e.g., for TSP, can be seen as capturing a certain sparsity at the level of individual pattern-avoiding point sets.


 The effect of pattern-avoidance on algorithmic complexity is mysterious and fragile (in~\S\,\ref{sec3.1} we also exhibit a natural optimization problem for which pattern-avoidance does \emph{not} help). 

\paragraph{Further developments.}

 Since the preliminary version of our paper\footnote{Appeared at the STOC 2024 conference.}, a number of follow-up works have appeared. One of the motivations for studying BST access sequences is their connection to \emph{sorting}. In particular, if splay or greedy attain the $\fO(n)$ bound for serving an $n$-permutation $\tau$ that avoids a constant-sized $\pi$ (this is an important question left open by our work), then $\tau$ can be \emph{sorted} in time $\fO(n)$ (by simple insertion sort, with insertions into a BST according to the splay or greedy strategy, or by simple heapsort, using the \emph{smooth heap} data structure~\cite{smooth1, smooth2, smooth3}, whose cost matches that of greedy~\cite{smooth1}). By a result of Fredman~\cite{Fredman}, a permutation from a family of size $2^{\fO(n)}$ can be sorted with $\fO(n)$ \emph{comparisons} (in contrast to general sorting that requires $\Omega(n\log{n})$ comparisons). Fredman's argument, however, only implies the existence of a decision tree of depth $\fO(n)$ and does not give an efficient algorithm for finding it. 
 
 Theorem~\ref{thm1} shows that the decision tree of pattern-avoiding sorting can be efficiently implemented in the BST model, making linear-time sorting of such inputs a corollary of dynamic optimality.
Subsequently, Opler~\cite{Opler24} has found such a sorting algorithm with an optimal runtime of $\fO(n\log{s_\pi})$ via different means (an approach based on \emph{mergesort}).

Building on this sorting algorithm, as well as on an adaptation of the \emph{distance-balanced merge sequence} introduced in \S\,\ref{sec:dist-bal} of this work, Kozma and Opler~\cite{KozmaOpler25} obtain a further follow-up: a \emph{compact data structure} for storing any $\pi$-avoiding $n$-permutation, requiring $\fO(n\log{s_\pi})$ bits of space, $\fO(n\log{s_\pi})$ build time, and supporting $\fO(1)$ time \emph{rank} and \emph{unrank} queries. 


\paragraph{Structure of the paper.} In \S\,\ref{sec2} we review some important definitions related to permutations and pattern-avoidance. The reader may prefer to skim these and refer to \S\,\ref{sec2} when needed. In \S\,\ref{sec3} we define the three main problems we study and overview the results at a high level. In \S\,\ref{sec4} we give our distance-balanced decomposition used in the solutions. In \S\,\ref{sec:ass}, \S\,\ref{sec:k-server}, and \S\,\ref{sec:mst} we give the detailed proofs for the BST, $k$-server, and TSP problems respectively. In \S\,\ref{sec8} we conclude with open questions. 


\section{Preliminaries}
\label{sec2}


\paragraph{Permutations, matrices, and point sets.}

An $n$-\emph{permutation} $\pi$ is an ordering $\pi_1 \pi_2 \dots \pi_n$ of $[n] = \{1,2,\dots,n\}$ (as customary, we omit commas when there is no ambiguity).
Two 
alternative representations of $\pi$ (see Figs.~\ref{fig:perm-reprs-a},~\ref{fig:perm-reprs-b}) 
are (i) as an $n \times n$ \emph{permutation matrix} 
$M_\pi$ where the $\pi_i$-th entry of the $i$-th column
\footnote{We read columns from left to right and rows from bottom to top.}
is $1$ for all $i\in[n]$, and all other entries are $0$; and (ii)
as a \emph{set $P_\pi \subset \R^2$ of $n$ points} such that there is a bijection $f \colon [n] \rightarrow P_\pi$ with $f(i).x < f(j).x \Leftrightarrow i < j$ and $f(i).y < f(j).y \Leftrightarrow \pi_i < \pi_j$ for all $i, j \in [n]$, where $p.x$ and $p.y$ denote the x- and y- coordinates of point $p$.

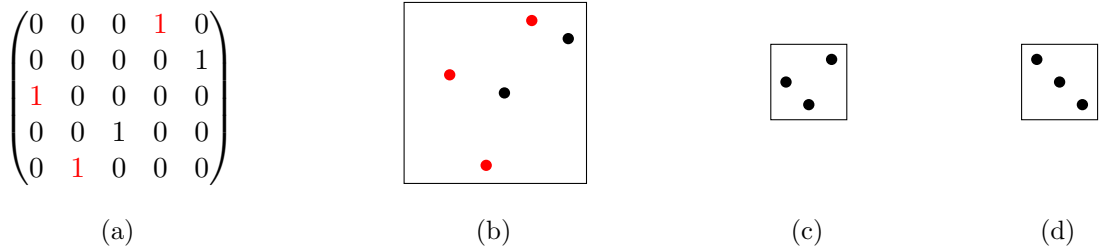
\begin{figure}[h]
	\begin{subfigure}[c]{0.3\textwidth}%
		\centering%
		\[	\begin{pmatrix}
				0&0&0&\textcolor{red}{1}&0\\
				0&0&0&0&1\\				\textcolor{red}{1}&0&0&0&0\\
				0&0&1&0&0\\
				0&\textcolor{red}{1}&0&0&0
			\end{pmatrix}
		\]
	\end{subfigure}%
	\begin{subfigure}[c]{0.3\textwidth}
 \vspace{0.1in}\ \\
		\centering
		\begin{tikzpicture}[scale=2.4]
			\draw (0,0) rectangle (1,1);
			\node[point, red] at (.25,.6) {};
			\node[point] at (.9,.8) {};
			\node[point, red] at (.45,.1) {};
			\node[point] at (.55,.5) {};
			\node[point, red] at (.7,.9) {};
		\end{tikzpicture}
	\end{subfigure}%
	\begin{subfigure}[c]{0.2\textwidth}
		\centering
		\begin{tikzpicture}[scale=1]
			\draw (0,0) rectangle (1,1);
			\node[point] at (.2,.5) {};
			\node[point] at (.5,.2) {};
			\node[point] at (.8,.8) {};
		\end{tikzpicture}
	\end{subfigure}%
	\begin{subfigure}[c]{0.2\textwidth}
		\centering
		\begin{tikzpicture}[scale=1]
			\draw (0,0) rectangle (1,1);
			\node[point] at (.2,.8) {};
			\node[point] at (.5,.5) {};
			\node[point] at (.8,.2) {};
		\end{tikzpicture}
	\end{subfigure}
	\vspace{1mm}

 \begin{subfigure}{0.3\textwidth}
    \caption{}\label{fig:perm-reprs-a}
	\end{subfigure}%
	\begin{subfigure}{0.3\textwidth}
      \caption{}\label{fig:perm-reprs-b}
	\end{subfigure}%
	\begin{subfigure}{0.2\textwidth}
      \caption{}\label{fig:perm-reprs-c}
	\end{subfigure}%
	\begin{subfigure}{0.2\textwidth}
     \caption{}\label{fig:perm-reprs-d}
	\end{subfigure}
	
	\caption{The permutation 31254 as a matrix (a) and as a point set (b); the highlighted points/ones correspond to an occurrence of the pattern 213 (c). Pattern 321~(d) is avoided by 31254.}\label{fig:perm-reprs}
\end{figure}

Observe that $n$-permutations and $n \times n$ permutation matrices are in bijection, while infinitely many different point sets correspond to a single permutation.
Note that in a point set that corresponds to a permutation, no two points can have the same x- or y-coordinate. We call this property \emph{general position}. A point set in general position represents a unique permutation. We similarly say that a sequence of reals is in \emph{general position} if its values are pairwise distinct.

\paragraph{Pattern-avoidance.}
Two sequences $x_1 x_2 \dots x_m \in \R^m$ and $y_1 y_2 \dots y_m \in \R^m$ are \emph{order-isomorphic} if for each $i, j \in [m]$, we have $x_i < x_j$ if and only if $y_i < y_j$.
We say a sequence $X$ \emph{contains} another sequence $Y$ if some subsequence (not necessarily contiguous) of $X$ is order-isomorphic to $Y$. Otherwise, we say $X$ \emph{avoids} $Y$.

We focus on the case when sequence $Y$ is a permutation; for $X$ we consider both permutations and general sequences. The definition easily extends to point sets: If $P$ and $Q$ are point sets in general position, then $P$ \emph{contains} $Q$ (or a permutation $\pi$) if the permutation corresponding to $P$ contains the permutation corresponding to $Q$ (or $\pi$). Extending the definition to permutation matrices is immediate. 

We also define pattern containment/avoidance for \emph{general} 0-1 matrices. A 0-1 matrix $M$ \emph{contains} a 0-1 matrix $P$ if $P$ can be obtained from $M$ by removing rows or columns (i.e., taking a submatrix) and turning zero or more entries from 1 to 0.
If both $M$ and $P$ are permutation matrices, we recover pattern containment for permutations. 

\paragraph{Operations on permutations.}

Let $\pi$ be a $k$-permutation. The \emph{reversal} of $\pi$ is obtained by reading $\pi$ backwards. The \emph{complement} of $\pi$ is obtained by replacing each $\pi_i$ with $k+1-\pi_i$.

Let $\pi = x_1 x_2 \dots x_k$ and $\rho = y_1 y_2 \dots y_\ell$ be permutations.
The \emph{sum} $\pi \oplus \rho$ is the permutation $x_1 x_2 \dots x_k (y_1+k) (y_2+k) \dots (y_\ell + k)$.
The \emph{skew sum} $\pi \ominus \rho$ is the permutation $(x_1+\ell)(x_2+\ell) \dots (x_k+\ell) y_1 y_2 \dots y_\ell$ (see \cref{sfig:sum,sfig:skew-sum}).

\begin{figure}
	\def\mar{0.2}
	\centering
	\begin{subfigure}{.2\textwidth}
		\centering
		\begin{tikzpicture}[scale=0.27]
			\draw (0,0) rectangle (9,9);
			\begin{scope}[shift={(0,0)}]
				\fill[hlbox] (\mar,\mar) rectangle (4-\mar,4-\mar);
				\node[point] at (1,2) {};
				\node[point] at (2,1) {};
				\node[point] at (3,3) {};
			\end{scope}
			
			\begin{scope}[shift={(4,4)}]
				\fill[hlbox] (\mar,\mar) rectangle (5-\mar,5-\mar);
				\node[point] at (1,3) {};
				\node[point] at (2,4) {};
				\node[point] at (3,1) {};
				\node[point] at (4,2) {};
			\end{scope}
		\end{tikzpicture}
		\caption{Sum}\label{sfig:sum}
	\end{subfigure}%
	\begin{subfigure}{.2\textwidth}
		\centering
		\begin{tikzpicture}[scale=0.27]
			\draw (0,0) rectangle (9,9);
			\begin{scope}[shift={(0,5)}]
				\fill[hlbox] (\mar,\mar) rectangle (4-\mar,4-\mar);
				\node[point] at (1,2) {};
				\node[point] at (2,1) {};
				\node[point] at (3,3) {};
			\end{scope}
			
			\begin{scope}[shift={(4,0)}]
				\fill[hlbox] (\mar,\mar) rectangle (5-\mar,5-\mar);
				\node[point] at (1,3) {};
				\node[point] at (2,4) {};
				\node[point] at (3,1) {};
				\node[point] at (4,2) {};
			\end{scope}
		\end{tikzpicture}
		\caption{Skew sum}\label{sfig:skew-sum}
	\end{subfigure}%
	\begin{subfigure}{.2\textwidth}
		\centering
		\begin{tikzpicture}[scale=0.27]
			\draw (0,0) rectangle (9,9);
			\begin{scope}[shift={(0,2)}]
				\fill[hlbox] (\mar,\mar) rectangle (4-\mar,4-\mar);
				\node[point] at (1,1) {};
				\node[point] at (2,2) {};
				\node[point] at (3,3) {};
			\end{scope}
			\begin{scope}[shift={(4,0)}]
				\fill[hlbox] (\mar,\mar) rectangle (2-\mar,2-\mar);
				\node[point] at (1,1) {};
			\end{scope}
			\begin{scope}[shift={(6,6)}]
				\fill[hlbox] (\mar,\mar) rectangle (3-\mar,3-\mar);
				\node[point] at (1,2) {};
				\node[point] at (2,1) {};
			\end{scope}
		\end{tikzpicture}
		\caption{Inflation}\label{sfig:infl}
	\end{subfigure}%
	\begin{subfigure}{.2\textwidth}
		\centering
		\begin{tikzpicture}[scale=0.203, pa/.style={point, black}, pb/.style={point, red}, gridline/.style={gray, very thin}]
			\draw (0.5,0.5) rectangle (12.5,12.5);
			\node[pa] at (1,2) {};
			\node[pa] at (5,3) {};
			\node[pa] at (10,1) {};
			\node[pa] at (3,4) {};
			\node[pa] at (6,6) {};
			\node[pa] at (9,5) {};
			\node[pa] at (2,9) {};
			\node[pa] at (8,8) {};
			\node[pa] at (12,7) {};
			\node[pa] at (4,10) {};
			\node[pa] at (7,11) {};
			\node[pa] at (11,12) {};
			
			\draw[gridline] (4.5,0.5) -- (4.5,12.5);
			\draw[gridline] (8.5,0.5) -- (8.5,12.5);
			\draw[gridline] (0.5,3.5) -- (12.5,3.5);
			\draw[gridline] (0.5,6.5) -- (12.5,6.5);
			\draw[gridline] (0.5,9.5) -- (12.5,9.5);
		\end{tikzpicture}
		\caption{$3 \times 4$ Grid}\label{sfig:grid}
	\end{subfigure}%
	\begin{subfigure}{.2\textwidth}
		\centering
		\begin{tikzpicture}[scale=0.203, pa/.style={point, black}, pb/.style={point, red}, gridline/.style={gray, very thin}]
			\draw (0.5,0.5) rectangle (12.5,12.5);
			\node[pa] at (4,1) {};
			\node[pa] at (8,2) {};
			\node[pa] at (12,3) {};
			\node[pa] at (3,4) {};
			\node[pa] at (7,5) {};
			\node[pa] at (11,6) {};
			\node[pa] at (2,7) {};
			\node[pa] at (6,8) {};
			\node[pa] at (10,9) {};
			\node[pa] at (1,10) {};
			\node[pa] at (5,11) {};
			\node[pa] at (9,12) {};
			
			\draw[gridline] (4.5,0.5) -- (4.5,12.5);
			\draw[gridline] (8.5,0.5) -- (8.5,12.5);
			\draw[gridline] (0.5,3.5) -- (12.5,3.5);        
			\draw[gridline] (0.5,6.5) -- (12.5,6.5);    
			\draw[gridline] (0.5,9.5) -- (12.5,9.5);   
		\end{tikzpicture}
		\caption{Canonical grid}\label{sfig:can-grid}
	\end{subfigure}%
	\caption{(a) Sum- and (b) skew sum of permutations 213 and 3412; (c) the inflation of 213 by 123, 1, and 21; 
    (d) a $3 \times 4$ grid permutation, and (e) canonical grid.}\label{fig:sum-infl-merge}
\end{figure}
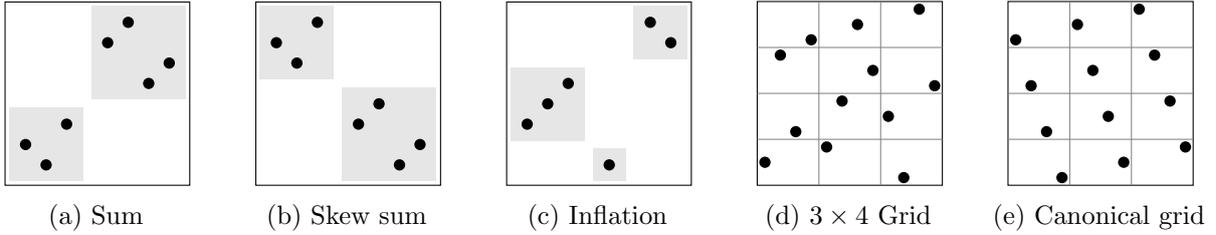

%
%

A \emph{shift by $t$} of a permutation $\pi$ is the sequence $\pi_1+t, \dots, \pi_k+t$.
The \emph{inflation} of $\pi$ by permutations $\rho_1, \rho_2, \dots, \rho_k$ is the unique permutation obtained by replacing each entry $\pi_i$ with a shift of $\rho_i$ by~$t_i$, for an appropriate integer sequence $t_1,\dots,t_k$ that is order-isomorphic to $\pi$ (see \cref{sfig:infl}). In the point set view, this amounts to replacing each point in $\pi$ by a small copy of $\rho_i$. Observe that the sum (skew sum) of permutations $\pi$ and $\rho$ is the inflation of 12 (21) by $\pi$ and $\rho$.


\paragraph{Special permutations.}

We call the unique 1-permutation \emph{trivial} and all other permutations \emph{non-trivial}.
The \emph{identity} or \emph{increasing permutation} of length $k$ is $I_k = 12\dots k$. The \emph{decreasing permutation} of length $k$ is $D_k = k (k-1) \dots 1$.

A \emph{$k \times \ell$ grid permutation} is a $k \ell$-permutation that has a corresponding point set $P \subset [0,1]^2$ such that every ``cell'' $C = [(i-1)/k, i/k] \times [(j-1)/\ell, j/\ell]$ for $i \in [k]$, $j \in [\ell]$ contains precisely one point of $P$ (see \cref{sfig:grid}). Observe that a $k \times k$ grid permutation contains all $k$-permutations. 
The \emph{canonical $k \times \ell$ grid permutation} corresponds to the point set with coordinates $((i-1)\ell+(\ell-j+1), (j-1)k + i)$ for $i\in [k], j\in [\ell]$ (see \cref{sfig:can-grid}).
Each ``row'' of this permutation is increasing, and each ``column'' is decreasing.

\paragraph{Permutation classes.}

A \emph{permutation class} is a set of permutations that is hereditary, i.e., closed under containment. It is easy to see that each permutation class can be characterized by \emph{avoidance} of a (not necessarily finite) set of permutations. Given a set $\Pi$ of permutations, we write $\Av(\Pi)$ for the class of permutations avoiding each $\pi \in \Pi$.
A permutation class $\Av(\pi)$ that avoids a single permutation $\pi$ is called a \emph{principal} 
permutation class.

We describe some simple and important permutation classes. A permutation is \emph{$k$-increasing} (\emph{$k$-decreasing}) if it avoids the permutation $D_{k+1}$ ($I_{k+1}$).
Observe that a permutation is $k$-increasing ($k$-decreasing) if and only if it is formed by interleaving 
$k$ increasing (decreasing) sequences.

A permutation is \emph{separable} if it is the trivial 1-permutation, or the sum or skew-sum of two separable permutations (for example, the permutations depicted in \cref{sfig:sum,sfig:skew-sum,sfig:infl} are separable).
The set of separable permutations is known to equal $\Av(3142,2413)$.

Yet another characterization of separable permutations is as the set of inflations of $12$ or $21$ with two separable (or trivial) permutations. This definition can be generalized: A permutation is \emph{$k$-separable} if it is the trivial 1-permutation or the inflation of a permutation of length at most $k$ with $k$-separable permutations.

The classes $\Av(132)$, $\Av(213)$, $\Av(231)$, and $\Av(312)$ are important special cases, due to their recursive structure (see \cref{sec:k-server-231-ub}, Fig.~\ref{fig:231-av-structure}). Observe that all four are subclasses of the separable permutations. As the problems we study are invariant under reversal and complement, we often consider only one of the four classes, namely $\Av(231)$.

\paragraph{Füredi-Hajnal and Stanley-Wilf limits.} 
Given a permutation $\pi$, let $\ex_\pi(n)$ be the maximum number of ones in an $n \times n$ 0-1 matrix that avoids $\pi$.
Marcus and Tardos~\cite{MarcusTardos} proved that $\ex_\pi(n) \in \fO(n)$ for each fixed $\pi$ (the \emph{Füredi-Hajnal conjecture}~\cite{FurediHajnal}). The \emph{Füredi-Hajnal limit} $c_\pi$ of $\pi$ is the constant hidden in the $\fO$-notation, i.e.,
\begin{align*}
	c_\pi = \lim_{n \rightarrow \infty} \tfrac 1n \ex_\pi(n).
\end{align*}

Cibulka~\cite{Cibulka2009} observed that, due to superadditivity of $\ex_\pi(\cdot)$, the limit $c_\pi$ indeed exists and $\ex_\pi(n) \le c_\pi \cdot n$ for all $n$. The extension to non-square matrices is immediate:

\begin{lemma}\label{p:marcus-tardos-with-limit-non-square}
	Every $m \times n$ 0-1 matrix with strictly more than $c_\pi \cdot \max(m,n)$ entries contains $\pi$.
\end{lemma}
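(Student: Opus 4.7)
The plan is a direct reduction from the non-square case to the square case handled by Cibulka's observation. Let $N = \max(m,n)$ and suppose $M$ is an $m \times n$ 0-1 matrix with strictly more than $c_\pi \cdot N$ ones. I would embed $M$ as a submatrix of an $N \times N$ matrix $M'$ by padding with $N-m$ all-zero rows and $N-n$ all-zero columns (the exact placement is irrelevant). Then $M'$ has exactly as many ones as $M$, namely strictly more than $c_\pi \cdot N$, so by Cibulka's bound $\ex_\pi(N) \le c_\pi \cdot N$ recalled just before the lemma, $M'$ must contain $\pi$.

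It remains to transfer the occurrence of $\pi$ in $M'$ back to $M$. By the definition of containment in general 0-1 matrices, an occurrence of the permutation matrix $\pi$ consists of a choice of $k$ rows and $k$ columns of $M'$ together with $k$ 1-entries at specified positions in the chosen submatrix, one per chosen row and one per chosen column. Since each of the padded rows and columns is entirely zero, none of them can be among the chosen rows or columns of any such occurrence. Hence the witnessing rows and columns all come from the original $m \times n$ block, and the same 1-entries witness an occurrence of $\pi$ in $M$.

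There is essentially no obstacle here beyond being careful with the definition of matrix pattern containment; the only thing to check is that padding truly preserves both the ones count and the impossibility of using padded rows/columns in a pattern occurrence, and both follow immediately from the fact that the pattern itself is a permutation matrix (so every row and column of the pattern must contain a $1$).
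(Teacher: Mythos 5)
Your proposal is correct and is exactly the ``immediate'' extension the paper has in mind: it gives no separate proof of this lemma, deducing it directly from Cibulka's bound $\ex_\pi(N) \le c_\pi N$, and your padding-to-square argument (with the observation that all-zero padded rows/columns cannot host any entry of an occurrence of the permutation matrix $\pi$) is the standard way to make that deduction precise.
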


The \emph{Stanley-Wilf conjecture} states that the number of $n$-permutations avoiding a fixed pattern is single-exponential in $n$ (i.e., $2^{\fO(n)}$ of the $2^{\Theta(n\log{n})}$ $n$-permutations).
Let~$\Av_n(\pi)$ be the set of $n$-permutations that avoid the pattern $\pi$. The \emph{Stanley-Wilf limit} $s_\pi$ is defined as
\[
	s_\pi = \lim_{n \rightarrow \infty} \sqrt[n]{|\Av_n(\pi)|}.
\]

As observed by Klazar~\cite{Klazar}, the Füredi-Hajnal conjecture implies the Stanley-Wilf conjecture, and hence the existence of Stanley-Wilf limits.

Fox~\cite{jfox} showed that $c_\pi \in 2^{\fO(k)}$ for all $k$-permutations $\pi$ (the Marcus-Tardos proof implied $c_\pi \in 2^{\fO(k\log{k})}$), and that $c_\pi \in 2^{\Omega(k^{1/4})}$ for \emph{some} $k$-permutations $\pi$ (an unpublished improvement to $2^{\Omega(k^{1/2})}$ is often cited, e.g., in \cite{ChalermsookPettieY, cibulka2017better}). Cibulka~\cite{Cibulka2009} showed that $s_\pi$ and $c_\pi$ are polynomially related. More precisely, $s_\pi \in \Omega({c_\pi^{2/9}}) \cap \fO({c_\pi^2})$.
A finer characterization of $c_\pi$ and $s_\pi$, in particular, of the cases when they are polynomial in $k$, is still lacking. 

\paragraph{Merge sequences and twin-width.}

A \emph{rectangle family} $\cR$ is a set of axis-parallel rectangles.
Note that a single point is a special kind of a rectangle and thus, any point set can be interpreted as a rectangle family.
Let $S, T$ be two different rectangles of a rectangle family $\cR$.
The \emph{merging} of rectangles $S, T$ is an operation that replaces $S, T$ by the smallest axis-parallel rectangle enclosing $S \cup T$, i.e., their bounding box.

A \emph{merge sequence} of a point set $P \subset \mathbb{R}^2$ of size $n$ is a sequence $\cR_1, \cR_2, \dots ,\allowbreak\cR_n$ of rectangle families where
$\cR_1 = P$ is the original point set,  $\cR_n$ contains a single rectangle, and each $\cR_{i+1}$ is obtained by merging two rectangles of $\cR_i$. Notice that each $\cR_i$ consists of exactly $n-i+1$ rectangles.

\begin{figure}
	\def\xs{{1,2,3,4,5}}
	\def\ys{{2,3,5,1,4}}
	\def\tscale{.4}
	
	\tikzset{
		disabled point/.style={point, gray},
		p0/.style={disabled point},
		p1/.style={disabled point},
		p2/.style={disabled point},
		p3/.style={disabled point},
		p4/.style={disabled point},
		red edge/.style={red}
	}
	
	\newcommand{\drawPoints}{
		\draw (0,0) rectangle (6,6);
		\foreach \i in {0,1,2,3,4} {
			\node[p\i] (p\i) at (\xs[\i], \ys[\i]) {};
		}
	}
	\newcommand{\msrect}[4]{
		\draw[msrect] (\xs[#1],\ys[#3]) rectangle (\xs[#2],\ys[#4]);
	}
	
	\begin{tikzpicture}[scale=\tscale, p0/.style={point}, p1/.style={point}, p2/.style={point}, p3/.style={point}, p4/.style={point}]
		\drawPoints
	\end{tikzpicture}
	\hfill
	\begin{tikzpicture}[scale=\tscale, p0/.style={point}, p1/.style={point}, p2/.style={point}]
		\msrect{3}{4}{3}{4}
		\drawPoints
		\draw[red edge] (p0) -- (4.5, 2.5);
		\draw[red edge] (p1) -- (4.5, 2.5);
	\end{tikzpicture}
	\hfill
	\begin{tikzpicture}[scale=\tscale, p0/.style={point}]
		\msrect{3}{4}{3}{4}
		\msrect{1}{2}{1}{2}
		\drawPoints
		\draw[red edge] (p0) -- (4.5, 2.5);
		\draw[red edge] (2.5,4) -- (4.5, 2.5);
	\end{tikzpicture}
	\hfill
	\begin{tikzpicture}[scale=\tscale]
		\msrect{0}{4}{3}{4}
		\msrect{1}{2}{1}{2}
		\draw[red edge] (2.5,4) -- (3.0, 2.5);
		\drawPoints
	\end{tikzpicture}
	\hfill
	\begin{tikzpicture}[scale=\tscale]
		\msrect{0}{4}{3}{2}
		\drawPoints
	\end{tikzpicture}
	\caption{A 3-wide merge sequence of 23514, along with its red graphs. Black points are also degenerate rectangles. 
 }\label{fig:merge-seq}
\end{figure}







%
Two rectangles $S$ and $T$ are called \emph{homogeneous} if their projections onto both $x$- and $y$-axis are disjoint.
Given a rectangle family $\cR_i$, we consider an auxiliary graph~$R_i$, called \emph{red graph}, where the rectangles of $\cR_i$ are vertices, and there is a (red) edge between every pair of distinct \emph{non-homogeneous} rectangles $S,T$.
\Cref{fig:merge-seq} shows an example merge sequence and its red graph.

Let $d$ be a positive integer.
We say that a merge sequence $\cR_1, \dots ,\cR_n$ is \emph{$d$-wide} if $\max_{i \in [n]}\Delta(R_i) < d$, i.e., the maximum degree over all red graphs associated to this sequence is strictly less than $d$.
The \emph{twin-width} $\tww(P)$ of a point set $P$ is then the minimum integer $d$ such that there exists a $d$-wide merge sequence of $P$.\footnote{Originally, Guillemot and Marx define width slightly differently, separately considering width on the horizontal and vertical projections. The current definition became more common in the literature on twin-width, and differs from the original by a factor of at most two.}
For a permutation $\pi$, we let $\tww(\pi) = \tww(P_\pi)$ for any point set $P_\pi$ corresponding to $\pi$. 

We make some simple observations about the behavior of twin-width.
\begin{observation}\label{obs:tww}~
	\begin{itemize}
		\item The trivial $1$-permutation has twin-width $1$.
		\item If $\pi$ contains $\rho$, then $\tww(\rho) \le \tww(\pi)$.
		\item $\tww(\pi \oplus \rho) = \tww(\pi \ominus \rho) = \max( \tww(\pi), \tww(\rho) )$. In particular, the separable permutations are precisely the permutations of twin-width 1.
		\item More generally, if $\sigma$ is the inflation of $\pi$ by $\rho_1, \rho_2, \dots, \rho_k$, then
			\[\tww(\sigma) = \max( \tww(\pi), \tww(\rho_1), \tww(\rho_2), \dots, \tww(\rho_k) ).\] In particular, a $k$-separable permutation has twin-width at most $k$.

	\end{itemize}
\end{observation}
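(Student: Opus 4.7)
The first bullet is immediate: the one-point set has a single-step merge sequence with empty red graph, so $\tww=1$. Bullet 2 follows by a ``restriction'' argument: given a $d$-wide merge sequence for $\pi$ and a subset $Q\subseteq P_\pi$ that realises the copy of $\rho$, play back the merges of $\pi$ but at every step replace each ambient rectangle by the bounding box of its points in $Q$ (skipping steps in which this bounding box is unchanged). Since restricting a rectangle to a subset of its points only shrinks its projections, homogeneity of two ambient rectangles implies homogeneity of their restrictions, so the red graph of the restricted sequence is a subgraph of (a quotient of) the original red graph, and the maximum degree cannot increase. Bullet 3 is the special case $\pi=12$ (or $\pi=21$) of bullet 4, and will be obtained for free once the inflation formula is proved.

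The lower bound in bullet 4 is immediate from bullet 2, since $\sigma$ contains $\pi$ and every $\rho_i$ as patterns. The upper bound is the crux, and I would prove it by constructing a two-phase merge sequence for $\sigma$. In \emph{Phase 1}, for each $i\in [k]$ I run the optimal $\tww(\rho_i)$-wide merge sequence for $\rho_i$ on the points of $\sigma$ that form the $i$-th inflation block, collapsing them to a single bounding-box rectangle $B_i$. In \emph{Phase 2} I run the optimal $\tww(\pi)$-wide merge sequence for $\pi$ on the $k$ rectangles $B_1,\ldots,B_k$. The key structural observation — and the only place the inflation hypothesis is used — is that by definition of the inflation the $k$ blocks occupy pairwise disjoint $x$-intervals \emph{and} pairwise disjoint $y$-intervals (each block's values form a contiguous band determined by $\pi_i$). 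Consequently, any rectangle built inside block $i$ has both projections contained in those of $B_i$ and is therefore homogeneous with every rectangle or point sitting outside block $i$; the Phase 1 red graph decomposes as a disjoint union of red graphs of the individual $\rho_i$-sequences, giving maximum degree $<\max_i\tww(\rho_i)$. In Phase 2 the homogeneity of two composite rectangles formed from subsets $\{B_{i_1},\ldots\}$ and $\{B_{j_1},\ldots\}$ is determined by whether the intervals $[\min i_\ell,\max i_\ell]$ and $[\min j_\ell,\max j_\ell]$ overlap, in $x$ and $y$ respectively, which coincides exactly with the homogeneity of the corresponding rectangles in $\pi$'s own merge sequence. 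Thus Phase 2 has max red degree $<\tww(\pi)$, completing the bound.

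Finally, for the characterisation of separable permutations as precisely those with $\tww=1$, the ``$\Rightarrow$'' direction is induction on the recursive sum/skew-sum definition via bullet 3. For ``$\Leftarrow$'', in a $1$-wide merge sequence every merge combines two rectangles with disjoint $x$- and $y$-projections, and such a merge is exactly a sum or a skew-sum (depending on whether the left rectangle lies below or above the right one); unrolling the sequence from the final rectangle back to singletons yields a separable decomposition of $\sigma$. The last clause on $k$-separability is a direct induction on its recursive definition using the inflation formula together with the trivial bound $\tww(\tau)\leq|\tau|$ (any merge sequence has at most $|\tau|-1$ rectangles coexisting at any moment, so every red degree is at most $|\tau|-1<|\tau|$), applied to the top permutation $\pi$ of length $\le k$.

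\paragraph{Main obstacle.} The only non-routine step is the two-phase construction for the inflation upper bound, and inside it the observation that the blocks of an inflation have pairwise disjoint $x$- \emph{and} $y$-ranges. Once this disjointness is made explicit, Phase 1 and Phase 2 decouple cleanly and their red graphs can be analysed independently; all the remaining claims follow by short inductions.
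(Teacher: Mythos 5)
Your proof is correct. Note that the paper does not prove \cref{obs:tww} at all: it simply cites Guillemot and Marx and remarks that their arguments extend to the present definition of merge sequences, so there is no in-paper proof to compare against. Your argument is a sound self-contained version of the standard one, and each step checks out: the restriction argument for containment is valid because shrinking a rectangle to the bounding box of a subset of its points can only preserve homogeneity, so the restricted red graph embeds into the ambient one; the two-phase construction for inflations works precisely because the blocks of an inflation occupy pairwise disjoint $x$- and $y$-intervals, so Phase~1 red graphs decompose blockwise and the Phase~2 red graph is isomorphic to that of a merge sequence of $\pi$ (with the minor clarification that in the $y$-direction homogeneity is governed by the value intervals $[\min \pi_{i_\ell}, \max \pi_{i_\ell}]$ rather than the index intervals, as your conclusion implicitly uses); and for the twin-width-one characterization, the merge tree of a $1$-wide sequence is itself a separating tree, since every merge in such a sequence combines two homogeneous rectangles. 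One negligible slip: a merge sequence of $\tau$ starts with $|\tau|$ coexisting rectangles, not $|\tau|-1$; the bound $\tww(\tau)\le|\tau|$ still holds because the initial family consists of pairwise homogeneous points (and in general $m$ rectangles give red degree at most $m-1<|\tau|$), so the $k$-separability clause goes through unchanged.
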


We refer to Guillemot and Marx \cite[Lemma~3.2, Propositions~3.2 and~3.3]{GM_PPM} for proofs of \cref{obs:tww}.\footnote{Note that Guillemot and Marx use a slightly different definition of twin-width but their arguments readily extend to our version.}

\begin{lemma}\label{lem:canonical-grid-tww}
	The canonical $k \times \ell$ grid permutation has twin-width $\min(k,\ell)$.
\end{lemma}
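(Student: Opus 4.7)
Assume without loss of generality that $k \le \ell$, so $\min(k,\ell) = k$. For the upper bound $\tww(G_{k,\ell}) \le k$, I construct an explicit $k$-wide merge sequence using the explicit coordinates $p_{i,j} = ((i-1)\ell + (\ell-j+1),\, (j-1)k + i)$ for $(i,j) \in [k] \times [\ell]$. Process the rows $j = 1, 2, \ldots, \ell$ in order; within each row, merge each point $p_{i,j}$ into a growing \emph{column rectangle} $C_i$, so that after rows $1, \ldots, j$ are fully processed, $C_i$ has $x$-projection $[i\ell - j + 1,\, i\ell]$ and $y$-projection $[i,\, (j-1)k + i]$. From these formulas, the $k$ column rectangles have pairwise disjoint $x$-projections and, for $j \ge 2$, pairwise overlapping $y$-projections (each projection has length $(j-1)k \ge k$ while consecutive projections are shifted by only $1$), so they form a $K_k$-clique in the red graph with max degree $k-1$; simultaneously, every column rectangle is homogeneous with every unprocessed single point above (both its $x$- and $y$-coordinates fall strictly outside the column rectangles' current projections, as one verifies from the explicit formula). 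These invariants are maintained during the intermediate merges within each row. After the last row is absorbed, the $K_k$-clique is collapsed by merging the column rectangles pairwise in any order; each such merge shrinks the clique, so the max red degree never exceeds $k-1$.

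For the lower bound $\tww(G_{k,\ell}) \ge k$, by monotonicity of $\tww$ under pattern containment (\cref{obs:tww}), it suffices to show $\tww(G_{k,k}) \ge k$. A direct coordinate check shows that restricting $G_{k,\ell}$ to its bottom $k$ rows yields a sub-point-set order-isomorphic to $G_{k,k}$, so the containment holds. To show $\tww(G_{k,k}) \ge k$, I examine any merge sequence and locate a step at which the red graph contains $k$ pairwise non-homogeneous rectangles. The natural candidate is the step at which the number of rectangles first equals $k$: these $k$ rectangles partition the $k^2$ grid points, and by exploiting the density of the canonical grid---every row and every column hosts a grid point in every ``slot''---one argues that no two of these rectangles can have both $x$- and $y$-projections disjoint without leaving grid cells uncovered by the partition. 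The resulting $K_k$-clique gives max red degree $\ge k-1$, hence width $\ge k$. The case $k > \ell$ is analogous, using the leftmost $\ell$ columns instead of the bottom $k$ rows, and swapping the roles of rows and columns in the upper-bound construction.

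The main obstacle is clearly the lower bound: while an $\Omega(k/\log k)$ bound follows trivially from the fact that $G_{k,k}$ contains every $k$-permutation as a pattern, obtaining the sharp constant $k$ requires a refined structural argument that exploits the specific geometry of the canonical grid rather than a generic containment argument. An alternative route is an inductive argument using the recursive self-similarity of $G_{k,k}$---every axis-aligned sub-grid is itself a canonical grid of smaller parameters---together with a projection/coarsening argument showing that any merge sequence of $G_{k,k}$ induces a merge sequence of the smaller sub-grid of no greater width.
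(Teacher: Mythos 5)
Your upper bound is correct and is essentially the paper's own construction (merge each successive row's point into its growing column rectangle, left to right; the partial column rectangles stay homogeneous with all not-yet-merged points, so red edges occur only among the at most $k$ column rectangles, and the final collapse of those $k$ rectangles is harmless). The lower bound, however, has a genuine gap. The step you single out --- the first moment the family consists of exactly $k$ rectangles --- need not witness width $k$: it is not true that the $k$ parts are pairwise non-homogeneous (two remaining singletons are always homogeneous), and it is not even true that some rectangle has red degree $k-1$ at that step. Concretely, for the canonical $3\times 3$ grid, whose points are $(3,1),(2,4),(1,7),(6,2),(5,5),(4,8),(9,3),(8,6),(7,9)$, merge $(2,4)$ with $(1,7)$, and merge the six points of the last two columns among themselves (five merges). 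The first time only three rectangles remain they are the singleton $\{(3,1)\}$, the box $[1,2]\times[4,7]$, and the box $[4,9]\times[2,9]$; the singleton is homogeneous with both boxes, so the red graph at this step is a single edge and the maximum degree is $1<k-1$. The large red degree forced by the grid occurs at some \emph{other} step of such a sequence, and your argument never locates it; your fallback inductive route via self-similarity is likewise only sketched, so the lower bound is not established.

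The paper's lower bound is a one-line observation that also makes your reduction to the square case unnecessary: in the canonical $k\times\ell$ grid any two distinct points differ by at least $\min(k,\ell)$ in at least one coordinate (same column: $|\Delta y|\ge k$; same row: $|\Delta x|\ge\ell$; otherwise, with $i<i'$, either $j'>j$ and $\Delta y=(j'-j)k+(i'-i)\ge k+1$, or $j'<j$ and $\Delta x=(i'-i)\ell+(j-j')\ge\ell+1$). Since the $x$-coordinates (and the $y$-coordinates) of the $k\ell$ points are exactly the consecutive integers $1,\dots,k\ell$, the very first merge of \emph{any} merge sequence produces a rectangle whose $x$- or $y$-projection contains at least $\min(k,\ell)+1$ point coordinates, hence is non-homogeneous with at least $\min(k,\ell)-1$ of the remaining points. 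Thus every merge sequence has width at least $\min(k,\ell)$, which is the desired lower bound.
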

\begin{proof}
	For the lower bound, observe that after the very first merge, the obtained rectangle is already non-homogeneous to at least $\min(k,\ell)-1$ points.
	
	For the upper bound, suppose $k \le \ell$. Merge the lowest two points in a column~$i$ into a rectangle $R_i$, from left to right. Then merge each third lowest point in column~$i$ into $R_i$, again from left to right, and so on. The rectangles stay homogeneous with all remaining points, so the degree of the red graph is always at most $k-1$. After each column is a single rectangle, merge these $k$ rectangles in an arbitrary order.
\end{proof}


The importance of twin-width for our work is derived from the following theorem.

\begin{theorem}[Guillemot and Marx~\cite{GM_PPM}]\label{thm:tw}
	A point set that avoids a pattern $\pi$ has twin-width at most $\fO(c_\pi)$.
\end{theorem}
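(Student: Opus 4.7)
The plan is to exhibit a merge sequence $\cR_1, \dots, \cR_n$ of the given $\pi$-avoiding point set $P$ in which every red graph $R_i$ has maximum degree $\fO(c_\pi)$. The construction proceeds top-down via a recursive grid partition, and is then converted into a bottom-up merge sequence; the width bound is established by applying Marcus-Tardos (via \cref{p:marcus-tardos-with-limit-non-square}) at every level of the recursion.

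First I would set up the partition. Fix $k = \Theta(c_\pi)$ and partition the bounding box of $P$ by horizontal and vertical lines so that each of the $k$ vertical strips contains $|P|/k$ points, and symmetrically for the $k$ horizontal strips; this yields a $k\times k$ grid whose populated-cell pattern is a $k\times k$ 0-1 matrix $M$. Since any pattern occurring in $M$ also occurs in $P$, the matrix $M$ avoids $\pi$, and \cref{p:marcus-tardos-with-limit-non-square} bounds the number of ones of $M$ by $c_\pi \cdot k$. By averaging, every row and column of $M$ contains $\fO(c_\pi)$ populated cells on average, and (with some massaging, e.g.\ a balancing argument on strip placement) one can force the stronger statement that every row and column of $M$ contains at most $\fO(c_\pi)$ populated cells. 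Recursing into every nonempty cell produces a rooted partition tree whose leaves are individual points; this is essentially the Guillemot--Marx decomposition.

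Next I would convert this partition tree into a merge sequence by processing it bottom-up. For each internal node with populated child-cells arranged in a $k\times k$ grid, I first merge the rectangles within each row of the grid (processing cells from left to right), producing at most $k$ intermediate row-rectangles, and then merge these row-rectangles top to bottom into the single rectangle representing the parent node. To bound the red degree during this process I would use two observations: (i) only rectangles corresponding to siblings in the partition tree, together with the ``active'' row/column being merged, are simultaneously present as distinct rectangles; (ii) a rectangle can be non-homogeneous only with rectangles whose $x$- or $y$-projection overlaps its own, and by the Marcus-Tardos bound at that level at most $\fO(c_\pi)$ other rectangles have overlapping projections. Combining these, no rectangle ever acquires red degree more than $\fO(c_\pi)$.

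The main obstacle is the bookkeeping during the intermediate merges inside a single level: when the row-rectangles are being assembled left-to-right, they temporarily span wider $x$-intervals and may be non-homogeneous with rectangles in other rows whose $y$-projections they cross. Controlling this requires choosing the inner merge order carefully (e.g.\ always merging the two currently innermost rectangles so that the ``front'' of spanning rectangles grows monotonically) and leveraging the per-row and per-column Marcus-Tardos bounds separately. A secondary technical issue is ensuring the strip partition can be chosen so that every row and column of $M$, not just on average, has $\fO(c_\pi)$ ones; this can be handled by a standard doubling or reweighting trick, or by absorbing an extra constant factor into the final bound. Put together, these yield a merge sequence of width $\fO(c_\pi)$, proving $\tww(P) \le \fO(c_\pi)$.
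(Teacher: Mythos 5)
There is a genuine gap, and it sits exactly where you flag ``the main obstacle'': converting the static, top-down recursive partition into a merge sequence does not give bounded red degree, and no choice of inner merge order repairs it. Your observation (i) is false under any traversal of the partition tree: merges are sequential, so at some moment one cell $(a,b)$ of a level has been fully merged into its bounding rectangle while sibling cells in the same column strip $a$ (or row strip $b$) still consist of unmerged individual points. The merged rectangle's $x$-projection can span essentially the whole width of strip $a$, and then it is non-homogeneous with \emph{every} point of strip $a$ whose $x$-coordinate falls in that span --- up to $\Omega(n/k)$ points, regardless of how few \emph{nonempty cells} the row/column of the occupancy matrix has. Marcus--Tardos (\cref{p:marcus-tardos-with-limit-non-square}) bounds nonempty cells, not the number of coexisting rectangles/points with overlapping projections, so your observation (ii) does not bound red degree either. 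This is precisely why Guillemot--Marx --- and the paper's own construction (\cref{thm:decomposition}) --- do not recurse cell-by-cell: they run one \emph{global} bottom-up process with an \emph{adaptive} gridding that is coarsened as rectangles disappear, maintaining the invariant that each row/column of the \emph{current} gridding contains at most $t=\fO(c_\pi)$ \emph{rectangles}, each confined to a single cell; Marcus--Tardos is then invoked only to show that a cell containing two mergeable rectangles always exists. The red-degree bound $2t-2$ falls out of that invariant, not out of a per-level occupancy count.

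Two further points. First, your step forcing \emph{every} row and column of the $k\times k$ occupancy matrix to have $\fO(c_\pi)$ nonempty cells is not justified: with strips balanced by point count, averaging gives only the average, and a single horizontal strip can meet all $k$ vertical strips (e.g.\ an increasing run of $n/k$ points spanning the full $x$-range); it is not clear any ``reweighting trick'' yields both balance and a worst-case occupancy bound for a static grid. The paper obtains the per-row/column bound (\cref{decomp:gridding}) only for its dynamically coarsened gridding, via the invariant on rectangles per row/column. Second, even granting that bound, your row-sweep creates intermediate rectangles spanning up to $k=\Theta(c_\pi)$ columns, each contributing up to $\fO(c_\pi)$ non-homogeneous partners per column, so at best you would get width $\fO(c_\pi^2)$, not $\fO(c_\pi)$; the known constructions avoid this by only ever merging two rectangles lying in a single cell of the current gridding.
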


Guillemot and Marx~\cite{GM_PPM} originally stated the slightly weaker bound $2^{\fO(k\log{k})}$ for a $k$-permutation pattern $\pi$, but the bound of $\fO(c_\pi)$ is implicit in their proof. The upper bound of Fox~\cite{jfox} on $c_\pi$ then implies an upper bound of $2^{\fO(k)}$ on the twin-width. In \S\,\ref{sec4} we describe a decomposition with additional properties that also implies this claim. 

The merge sequence implied by Theorem~\ref{thm:tw} can be found in time $\fO(n)$, if the points are accessible in sorted order by x- and y- coordinates. If this is not the case, and we view the coordinates as arbitrary comparable elements, an initial sorting step of $\fO(n\log{n})$ time is necessary.\footnote{This term is for the cost of \emph{finding the decomposition}, and does not affect the cost of the solution (e.g., for BST) that we compute with its help.}



The bound on twin-width in terms of $c_{\pi}$ cannot be asymptotically improved: For $\pi = D_k$, we have $c_\pi = 2 \cdot (k-1)$ (Füredi and Hajnal~\cite{FurediHajnal}); on the other hand, the canonical $(k-1) \times n$ grid permutation is $\pi$-avoiding and has twin-width $k-1$. Whether for \emph{some} pattern $\pi$, the twin-width of \emph{all} $\pi$-avoiding permutations can be significantly smaller than $c_\pi$  is an intriguing open question. 

\Cref{fig:classes} shows an overview of several relevant permutation classes and their inclusions.



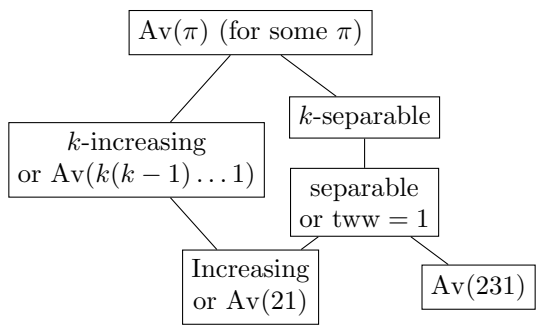
\begin{figure}
	\centering
	\small
	\begin{tikzpicture}[cls/.style={align=center, draw}, x=20mm, y=15mm, scale=0.75]
		\node[cls] (inc)  at (1,0) {Increasing\\or $\Av(21)$};
		\node[cls] (231)  at (3,0) {$\Av(231)$};
		\node[cls] (kinc) at (0,1.5) {$k$-increasing\\or $\Av(k(k-1)\dots1)$};
		\node[cls] (sep)  at (2,1) {separable\\or $\tww = 1$};
		\node[cls] (ksep) at (2,2) {$k$-separable};
		\node[cls] (piav) at (1,3) {$\Av(\pi)$ \small(for some $\pi$)};
		
		\draw (inc) -- (kinc) -- (piav) ; 
		\draw (inc) -- (sep) -- (ksep) -- (piav);
		\draw (231) -- (sep);
	\end{tikzpicture}
	\caption{A hierarchy of important permutation classes.\label{fig:classes}
 }
\end{figure}

\section{Overview of results}
\label{sec3}

In this section we present our main results and sketch some of the proofs, deferring the details to the later sections. 

\subsection{BST and arborally satisfied superset}\label{sec3.1}

Let $T$ be a binary search tree (BST) with nodes identified with the elements of $[n]$ with the usual search tree property~\cite{Knuth3}. \emph{Serving an access} $x \in [n]$ in $T$ means \emph{visiting a subtree} $T'$ of $T$ such that $T'$ contains both the root of $T$ and the node $x$. Then, $T'$ is replaced with another BST $T''$ on the same node set as $T'$ (all subtrees of $T$ hanging off $T'$ are linked in the unique location to $T''$ given by the search tree property). We can think of replacing $T'$ by $T''$ as \emph{re-arranging the tree}, in preparation for future accesses.

The cost of the access is $|T'| = |T''|$, i.e., the number of nodes \emph{touched}. Note that $T'$ necessarily contains the \emph{search path} of $x$ in $T$. This common formulation of the BST model~\cite{DHIKP, LT19} allows us to ignore individual pointer moves and rotations. It is well-known that any BST $T'$ can be transformed into any 
$T''$ (on the same nodes) with $O(|T'|)$ rotations and pointer moves~\cite{STT}), therefore the model is equivalent (up to a small constant factor) with most other reasonable models (see \cref{fig:bst-a} for illustration).

\emph{Serving a sequence} $X = (x_1,\dots,x_m) \in [n]^m$, starting from an initial tree $T_0$, means serving access $x_i$ in tree $T_{i-1}$, and replacing subtree $T'_{i}$ of $T_{i-1}$ by $T''_{i}$, resulting in tree $T_i$, for all $i \in [n]$. The total cost of serving $X$ is $|T'_1| + \cdots + |T'_n|$.

\begin{figure}[h]
\begin{subfigure}[T]{.65\textwidth}  
\vskip 0pt
		\includegraphics[scale=0.25]{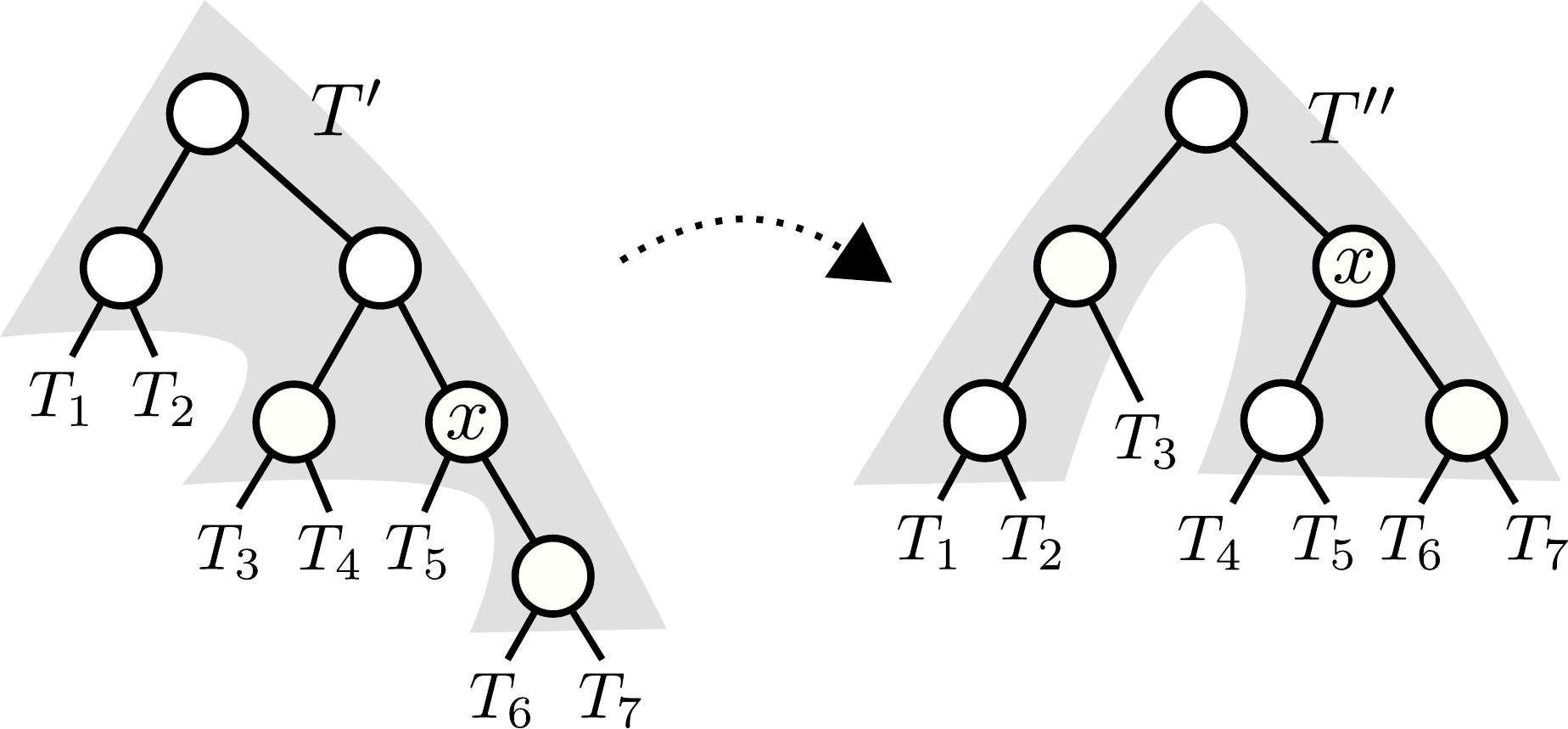}
  \end{subfigure}\hfill
\begin{subfigure}[T]{.25\textwidth} 
\vskip 0.2in
		\includegraphics[scale=0.25]{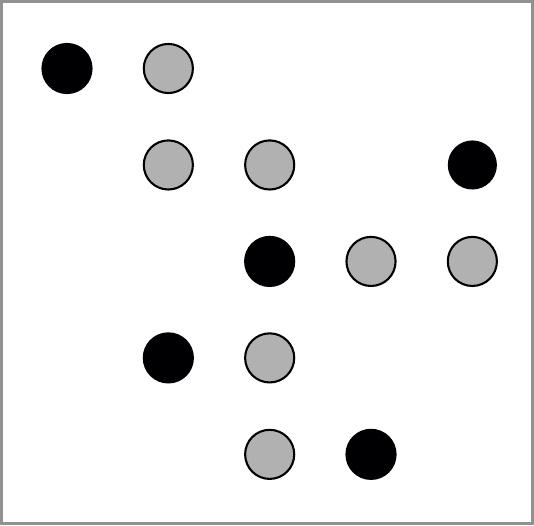}
  \end{subfigure}~~~~~
	\vspace{-4mm}

 \begin{subfigure}{0.65\textwidth}
    \caption{BST access}\label{fig:bst-a}
	\end{subfigure}%
	\begin{subfigure}{0.25\textwidth}
      \caption{satisfied superset}\label{fig:bst-b}
	\end{subfigure}%
	\caption{(a) Accessing node $x$ in a BST with subtree $T'$ re-arranged into $T''$ (shown in grey) at cost 6 and subtrees $T_1, \dots, T_7$ unchanged. (b) A satisfied superset for access sequence $5,2,3,1,4$ (black points denote accesses).\label{fig:bst}}
	\end{figure}

\vspace{-0.2in}

\paragraph{Arborally satisfied superset.} Demaine, Harmon, Iacono, Kane, and P\v{a}tra\c{s}cu~\cite{DHIKP}
give an elegant geometric characterization of the BST problem. A point set $P$ is called \emph{(arborally) satisfied}, if for any two points $p,q \in P$ one of the following holds: (a) $p.x = q.x$ or $p.y = q.y$, or (b) the axis-parallel rectangle $R$ with corners $p,q$ contains some point in $P \setminus \{p,q\}$ (possibly on the boundary of $R$). The \emph{(arborally) satisfied superset} problem asks to find, given a point set $P$, the smallest point set $P' \supseteq P$ that is arborally satisfied (\cref{fig:bst-b}).

The following theorem shows the equivalence between this geometric problem and the BST model defined earlier. 

\begin{theorem}[\cite{DHIKP}]
	Let $X \in [n]^n$ be an $n$-permutation access sequence and let $P_X$ be a corresponding point set.
	Then $X$ can be served at cost $\mathcal{C}$ if and only if $P_X$ admits a satisfied superset of size $\mathcal{C}$.
\end{theorem}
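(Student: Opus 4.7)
The plan is to prove both directions of the equivalence by identifying a BST execution with the set of its space-time ``touched'' points.

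\textbf{Forward direction ($\Rightarrow$).} Given an execution serving $X$ at cost $\mathcal{C}$, set $P' = \{(i,y) : 1\le i\le n,\ y\in T'_i\}$. Then $|P'|=\mathcal{C}$ and $P'\supseteq P_X$ since $x_i\in T'_i$ for every $i$. To verify arboreal satisfaction, I would use the basic BST fact that $T'_i$ is a rooted subtree of $T_{i-1}$, so every ancestor in $T_{i-1}$ of a touched node is itself touched. Consider two non-colinear points $(i_1,y_1),(i_2,y_2)\in P'$ with $i_1<i_2$ and (without loss of generality) $y_1<y_2$, and let $z$ be the LCA of $y_1,y_2$ in $T_{i_2-1}$. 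The BST ordering forces $y_1\le z\le y_2$, and $z$ lies on the root-to-$y_2$ path, so $(i_2,z)\in P'$; if $z\neq y_2$, this witnesses satisfaction. Symmetrically, the LCA of $y_1,y_2$ in $T_{i_1-1}$ handles $z=y_2$, except when this other LCA equals $y_1$ --- i.e., when the ancestor/descendant relation between $y_1$ and $y_2$ inverts somewhere in $(i_1,i_2)$.

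\textbf{Backward direction ($\Leftarrow$).} Given a satisfied $P'\supseteq P_X$ of size $\mathcal{C}$, let $S_i=\{y:(i,y)\in P'\}$, so $\sum_i|S_i|=\mathcal{C}$ and each $S_i\ni x_i$. I construct the execution inductively, maintaining the invariant that the root-to-$x_i$ path in $T_{i-1}$ is exactly $S_i$. A convenient encoding is as a min-heap treap whose priorities reflect next-occurrences in $P'$, suitably tiebroken so that $S_i$ sits at the top of $T_{i-1}$ and the root-to-$x_i$ path threads through it. Arboreal satisfaction is precisely the combinatorial condition ensuring that this treap is globally consistent: after serving access $i$ at cost $|S_i|$, freely re-arranging the touched subtree yields the updated treap prescribed for step $i+1$ that glues seamlessly onto the unchanged ``hanging'' subtrees. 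Summing the per-step costs gives total $\sum_i|S_i|=\mathcal{C}$.

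\textbf{Main obstacle.} The most delicate step is the forward ``inversion'' case: when the ancestor/descendant relation between $y_1$ and $y_2$ flips across a single rearrangement $T'_j\to T''_j$ at some $j\in(i_1,i_2]$, one must argue carefully that the rotations effecting the flip force a key $z\in[y_1,y_2]$ into $T'_j$, yielding the needed witness $(j,z)$ inside the rectangle. The parallel technical content in the backward direction is verifying that arboreal satisfaction really does guarantee the treap invariant survives each step --- that the priorities of touched nodes and of the untouched hanging subtrees mesh correctly so that a single free re-arrangement of $T'_i$ suffices to re-establish the invariant.
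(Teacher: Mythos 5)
This theorem is not proved in the paper at all --- it is imported verbatim from Demaine, Harmon, Iacono, Kane, and P\v{a}tra\c{s}cu~\cite{DHIKP} --- so your sketch has to be judged against the known proof of that result, and as it stands it has a genuine gap exactly where that proof does its real work. Your forward direction is essentially right and its ``obstacle'' is easy to close: when the ancestor--descendant relation between $y_1$ and $y_2$ flips during the rearrangement at step $j$, no new key has to be ``forced into'' $T'_j$ --- the current ancestor (say $y_1$) must itself lie in $T'_j$, because an untouched hanging subtree is preserved intact together with all of its descendants, so the witness is simply $(j,y_1)$; you only need the small care of taking the first LCA in the tree \emph{after} rearrangement $i_1$ (or treating $j=i_1$ separately), so that the witness does not coincide with the corner $(i_1,y_1)$.

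The backward direction, however, is the hard half of the DHIKP equivalence, and your proposal asserts rather than proves it. Saying that ``arboreal satisfaction is precisely the combinatorial condition ensuring that this treap is globally consistent'' is exactly the statement that needs proof: one must show that if the tree is kept in heap order by next-touch time, then (i) the nodes with next-touch equal to $i$ --- i.e., column $S_i$ --- form a top subtree of $T_{i-1}$ of size exactly $|S_i|$, and (ii) after updating the priorities of the touched nodes, heap order can be restored by rearranging \emph{only} those nodes, which requires comparing the new priority of every touched node with the priorities of the roots of the unchanged hanging subtrees; it is precisely here that one derives a contradiction with satisfaction (an empty rectangle spanned by a touched point and a later point in a hanging subtree) if the invariant could not be restored. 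None of this case analysis appears in your sketch. Moreover, your stated invariant that $S_i$ is exactly the root-to-$x_i$ \emph{path} in $T_{i-1}$ is stronger than what the treap construction delivers (it yields $S_i$ as a top subtree, not a path) and is in tension with your own description two sentences later; the cost accounting only needs the top-subtree version, so you should drop the path claim. Finally, note the initial-tree caveat the paper records in a footnote: the exact equality of costs presumes the freedom to choose $T_0$ (here, the initial next-touch treap), which your induction should state explicitly.
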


 More strongly, the satisfied superset can be assumed to consist of points aligned vertically and horizontally with $P_X$ and points on the $i$-th vertical line in the solution exactly correspond to the nodes $T'_{i}$ touched during the $i$-th access.\footnote{To be precise, this equivalence assumes a certain initial tree $T_0$. A different initial tree, however, may only add $\fO(n)$ to the cost, as an offline algorithm may start by rotating into an arbitrary initial tree.}

In the following we focus on the satisfied superset problem, and denote by $\OPT(P)$ the size of the smallest satisfied superset of $P$. We now state our main result. 
\begin{restatable}{theorem}{restateASSTwinWidth}\label{p:ass-tww}
	Let $P$ be a set of $n$ points in general position, of twin-width $d$. Then $\OPT(P) \in \fO( n d^2)$.
\end{restatable}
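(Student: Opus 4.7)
The plan is to process the given $d$-wide merge sequence $\cR_1, \cR_2, \ldots, \cR_n$ of $P$ along its associated binary merge tree (with $n$ leaves and $n-1$ internal nodes), building a satisfied superset $P^{\ast} \supseteq P$ incrementally. The target is to add at most $\fO(d^2)$ points at each merge, so that $|P^{\ast}| = n + (n-1) \cdot \fO(d^2) \in \fO(n d^2)$, which by the characterization of \cite{DHIKP} gives the claimed bound on $\OPT(P)$.

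I would maintain the invariant that after processing the first $i$ merges, every pair of points of the current set $P_i \supseteq P$ that lies in a common rectangle of $\cR_{i+1}$ is already satisfied within $P_i$. Thus when $S$ and $T$ merge into $U \in \cR_{i+1}$, only the \emph{cross-pairs} $(p,q)$ with $p \in P_i \cap \operatorname{bbox}(S)$ and $q \in P_i \cap \operatorname{bbox}(T)$ may be unsatisfied, and it suffices to add points that cover these.

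I would first dispatch \emph{homogeneous} merges, where $S$ and $T$ are not red-adjacent and therefore lie in opposite diagonal quadrants (say $S$ strictly lower-left of $T$). A single added point $(a,b)$ with $a \in (S.x_{\max},\, T.x_{\min})$ and $b \in (S.y_{\max},\, T.y_{\min})$ belongs to the bounding box of every cross-pair, because $p.x \le S.x_{\max} < a < T.x_{\min} \le q.x$ and similarly in $y$; so a single new point satisfies all cross-pairs at once. The delicate case is a \emph{non-homogeneous} merge, where the projections of $S$ and $T$ overlap on at least one axis. Here the plan is to maintain, for each active rectangle $R$, a small \emph{interface} $B_R \subseteq P_R$ of $\fO(d)$ boundary points with the property that cross-pair satisfaction between $R$ and any red-neighbor $R'$ can be reduced to pairs in $B_R \times B_{R'}$. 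At the merge of $S$ and $T$ we then add $\fO(|B_S| \cdot |B_T|) = \fO(d^2)$ new points to resolve these interface pairs, and extract a new interface $B_U$ of size $\fO(d)$ from $B_S \cup B_T$ together with the newly added points, to be used by later merges.

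The principal obstacle is the construction and inductive maintenance of the interfaces $B_R$: I must show that $\fO(d)$ boundary points really do suffice to represent every point of $R$ for the purposes of cross-pair satisfaction with any red-neighbor, and that $B_U$ of size $\fO(d)$ can be extracted after each merge. This is where the red-degree bound $\Delta(R_i) < d$ from the Guillemot--Marx decomposition is essential: each rectangle has at most $d-1$ ``directions of interaction'' with the rest of the current family, and a case analysis on the overlap type (on the $x$-axis only, $y$-axis only, or both axes) between $R$ and each of its red-neighbors should show that $\fO(1)$ extremal boundary representatives per direction suffice. Summing $\fO(d^2)$ additions over the $n-1$ merges yields the final $\fO(n d^2)$ bound.
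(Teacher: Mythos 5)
Your high-level accounting ($\fO(d^2)$ added points per merge, hence $\fO(nd^2)$ in total) matches the paper, but the two steps that would make your invariant go through are not sound as stated. First, the single-point fix for a homogeneous merge is insufficient: adding one point $z=(a,b)$ in the gap between $S$ and $T$ does witness every old cross-pair, but $z$ itself lies inside the merged rectangle $U$ and forms new pairs with the points of $S$ and $T$, and these are exactly pairs that your invariant requires to be satisfied from now on (and that the final arborally-satisfied set must satisfy, since $z$ is part of the superset). If the points of $P\cap S$ form a decreasing (antichain) configuration, then for every maximal point $p$ of $S$ the rectangle spanned by $p$ and $z$ may contain no other point of the current set, so up to $\Omega(|S|)$ new pairs are left unsatisfied and no constant number of additional points repairs this. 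This is not cosmetic: every point you add spawns fresh pair obligations inside its enclosing rectangle, and plain pairwise satisfaction, unlike connectivity, does not compose, so these self-referential obligations must be discharged explicitly. Second, the heart of your argument --- that each rectangle can carry an interface $B_R$ of only $\fO(d)$ boundary points such that satisfying the pairs in $B_R\times B_{R'}$ certifies satisfaction of \emph{all} cross-pairs between $R$ and a red-neighbour $R'$ --- is precisely what you leave unproved, and it is doubtful as literally stated: satisfaction of a pair $(p,q)$ needs a witness inside the specific rectangle spanned by $p$ and $q$, and witnesses for a few representative boundary pairs need not lie in the spanning rectangles of the interior pairs.

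The paper's proof avoids both problems by strengthening both the added set and the invariant. When $Q_1,Q_2$ merge into $Q$, it inserts \emph{all} intersection points inside $Q$ of the lines through the sides of the at most $\fO(d)$ rectangles non-homogeneous with $Q$ (plus $Q_1,Q_2$) --- an $\fO(d)\times\fO(d)$ grid, so still $\fO(d^2)$ points per step --- and it maintains that any two non-homogeneous rectangles (including each rectangle paired with itself) are fully connected by monotone paths with corners in the current set, i.e.\ Harmon's characterization of satisfaction. Connectivity composes transitively, so pairs involving the newly added grid points are handled by routing through corners and grid points, which is exactly what a witness-per-pair invariant cannot do. To salvage your plan you would need either to prove the $\fO(d)$-interface reduction and additionally handle the new-point obligations, or to switch to a connectivity-style invariant with a grid-like $\fO(d^2)$ interface --- at which point you essentially recover the paper's argument.
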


By Theorems~\ref{thm:tw} and \ref{p:ass-tww} it follows that if $P$ avoids a permutation $\pi$, then $\OPT(P) \in \fO(n \cdot c_{\pi}^2)$, implying Theorem~\ref{thm1} (recall that $c_\pi \in 2^{\fO(|\pi|)}$).
From the coding argument in \S\,\ref{sec1}, the lower bound $\OPT (P) \in \Omega(\log{|\Av_n(\pi)|}) \subseteq \Omega(n \cdot \log{c_\pi})$ follows. Sharpening the bounds in terms of $c_\pi$ or $|\pi|$ for all $\pi$ 
is a challenging open question. 
Previously, linear bounds on $\OPT(P)$ for pattern-avoiding $P$ were known only in special cases, e.g., 
when $P$ is $k$-increasing, $k$-decreasing, or $k$-separable~\cite{FOCS15, GoyalGupta, Isaac18, BSTdecomp}. Curiously, these cases are all \emph{low twin-width}, $\tww(P) \in \fO(k)$, but earlier studies did not use this fact.

\paragraph{Proof sketch.} In the following we sketch the proof of Theorem~\ref{p:ass-tww}, 
deferring the full details 
to \cref{sec:ass}. At a high level, we use a merge sequence $\cR_1, \cR_2, \dots, \cR_n$ of $P$ of width $d$ to construct a satisfied superset $P' \supseteq P$ (initially $P'=P$).

We maintain the invariant $(\mathtt{I_1})$ that $P' \cap Q$ is satisfied for each $Q \in \cR$, for the current set of rectangles $\cR$. This is trivially true initially for $\cR = \cR_1$, and as $\cR_n$ consists of a single rectangle containing the entire point set $P$, it is implied by $(\mathtt{I_1})$ that we obtain a valid solution $P'$.

We also maintain two technical invariants: ($\mathtt{I_2}$) states that $(Q \cup Q') \cap P'$ is satisfied for each non-homogeneous pair $Q, Q' \in \cR$, and $(\mathtt{I_3})$ states that for each $Q \in \cR$, the point set $P'$ contains the intersections of $Q$ with the grid formed by extending every side of every rectangle in $\cR$. Notice a subtletly here: as the rectangles grow, invariant ($\mathtt{I_3}$) applies to larger areas, but at the same time, $\cR$ shrinks, so the invariant becomes less stringent. 

The key step is merging two rectangles $Q_1, Q_2 \in \cR$ into a new rectangle $Q$ while maintaining the invariants $(\mathtt{I_1})$, $(\mathtt{I_2})$, $(\mathtt{I_3})$. We can achieve this by adding to $P'$ all points inside $Q$ of the grid induced by sides of rectangles in $\cR$ (including $Q_1,Q_2$);  
see Fig.~\ref{fig:ass-constr} on page~\pageref{fig:ass-constr}. 
The validity of the invariants can be verified through a case-analysis. 

Observe that $Q$ can ``see'' (is non-homogeneous with) at most $d$ rectangles (including itself), so at most $2d$ side-extensions intersect $Q$. Therefore, we add at most $\fO(d^2)$ points in every step, for a total of $\fO(nd^2)$ during the entire merge sequence. 
We defer the detailed proof to \S\,\ref{sec:ass}.

\subsubsection{Sparse Manhattan network}\label{sec311}

We briefly mention a connection between the BST/satisfied superset problem, and a well-studied network design task. 

If $P$ is a set of points, then we call $x, y \in P$ \emph{connected} and write $x \connected{P} y$ if there is a monotone path $T$ connecting $x$ and $y$ that consists of axis-parallel line segments, and each corner of $T$ is contained in $P$. For two sets $X, Y \subseteq P$ we write $X \connected{P} Y$ if $x \connected{P} y$ for all $x \in X$, $y \in Y$.
The following equivalence was observed by Harmon.

\begin{observation}[\cite{Harmon}]
A set of points $P$ is \emph{arborally satisfied} if and only if 
$P \connected{P} P$.
\end{observation}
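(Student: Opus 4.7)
The plan is to prove the equivalence by establishing each direction separately, with the easy direction $(\Leftarrow)$ following essentially from the definition of a monotone path, and the harder direction $(\Rightarrow)$ proceeding by induction on the size of the bounding rectangle.

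For $(\Leftarrow)$, suppose $P \connected{P} P$ and let $p, q \in P$ be distinct. If $p$ and $q$ share a coordinate, condition (a) of arborally satisfied holds. Otherwise, any monotone axis-parallel path from $p$ to $q$ must contain at least one corner, since it must cover both horizontal and vertical displacement. Any such corner geometrically lies in the closed rectangle $R(p,q)$ spanned by $p$ and $q$, is distinct from $p$ and $q$ (its two coordinates come from the different endpoints), and must belong to $P$ by the definition of $\connected{P}$, providing the witness required for condition (b).

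For $(\Rightarrow)$, I will prove by induction on $|P \cap R(p,q)|$ that every pair $p, q \in P$ satisfies $p \connected{P} q$. The base case $|P \cap R(p,q)| = 2$ forces the rectangle to have no internal witness, so arborally satisfied implies $p$ and $q$ share a coordinate and are connected by a single segment with no corners. For the inductive step, if $p$ and $q$ share a coordinate we are done; otherwise, arborally satisfied supplies some $r \in (P \cap R(p,q)) \setminus \{p, q\}$. A short case analysis shows that $q \notin R(p, r)$ and $p \notin R(r, q)$, so both sub-rectangles contain strictly fewer points of $P$ and the inductive hypothesis applies, yielding monotone paths realizing $p \connected{P} r$ and $r \connected{P} q$. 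Concatenating at $r$ yields the desired path: monotonicity is preserved because $r \in R(p,q)$ ensures the direction of travel agrees across the junction, and the only potential new corner lies at $r \in P$.

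The main obstacle is verifying the strict shrinking $|P \cap R(p,r)|, |P \cap R(r,q)| < |P \cap R(p,q)|$ when $r$ lies on the boundary of $R(p,q)$ by sharing a coordinate with $p$ or $q$. In each such degenerate case, one exploits that $p$ and $q$ differ in both coordinates to show that the opposite endpoint is still excluded from the relevant sub-rectangle: for instance, if $r.x = q.x$ then any point of $R(p,r)$ has $x$-coordinate at most $q.x$, but matching $q.x$ would force the $y$-coordinate to equal $r.y$, contradicting $r \ne q$. Once this boundary bookkeeping is handled, the remainder of the argument is routine.
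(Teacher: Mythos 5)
Your proof is correct. The paper itself gives no proof of this observation---it is quoted from Harmon's thesis---so there is no in-paper argument to compare against; your route (the easy direction by extracting a corner of the monotone path as the witness point, the converse by induction on $|P \cap R(p,q)|$, splitting at a witness $r$ and concatenating the two monotone subpaths, whose junction corner is $r \in P$) is the standard one, and the key geometric facts you rely on ($R(p,r), R(r,q) \subseteq R(p,q)$, and $q \in R(p,r)$ would force $r = q$, so both subrectangles lose at least one point of $P$) are all true. Two small points of precision, neither affecting validity: a corner of the path is distinct from $p$ and $q$ not because ``its two coordinates come from the different endpoints'' (for interior corners this need not describe the situation) but because a monotone path with positive-length segments never revisits its endpoints; and in the degenerate case $r.x = q.x$ your stated reason is garbled---points of $R(p,r)$ on the line $x = q.x$ have $y$-coordinate anywhere in $[p.y, r.y]$, not necessarily equal to $r.y$---while the clean argument is the same one as in the generic case: $q \in R(p,r)$ together with $r \in R(p,q)$ forces $r = q$, a contradiction, so the strict shrinking needs no separate boundary case at all.
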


In the satisfied superset problem we ask, given a point set $X$, for a point set $Y \supseteq X$ with $Y \connected{Y} Y$. A relaxation is to find a set $Y \supseteq X$ with $X \connected{Y} X$, that is, to connect all pairs of the input point set, but not necessarily the pairs involving newly added points. It is natural to ask for the smallest set $Y$ with this property. We call this the \emph{sparse Manhattan network} (\texttt{sparseMN}) problem, and denote its optimum as $\MN(X)$ (\cref{fig:mn-a}).
 
Let $P$ be a set of $n$ points in the plane. It follows from the definition that $\MN(P) \leq \OPT(P)$; in fact, $\MN(P)$ is equivalent with the well-known \emph{independent rectangle lower bound}~\cite{Harmon, DHIKP} (this is the best lower bound known for $\OPT(P)$, subsuming earlier bounds by Wilber~\cite{Wilber}). 
A central conjecture is that $\MN(P) \in \Theta(\OPT(P))$~\cite{DHIKP}. Its importance for dynamic optimality is given by the fact that for $\MN(P)$ a polynomial-time $2$-approximation is known, whereas for $\OPT(P)$ the best known approximation ratio is $\fO(\log\log{n})$.

The quantity $\MN(P)$ has been studied in computational geometry~\cite{GudmundEM,Rectang}; it is known that in the worst case, $\MN(P) \in \Theta(n\log{n})$. 
From Theorem~\ref{p:ass-tww} it immediately follows that $\MN(P) \in \fO_k(n)$ whenever $P$ avoids a $k$-permutation. Such a bound was known before~\cite{FOCS15}, interestingly, by a different algorithm (geometric sweepline) than the one implicit in our current work. 

\begin{figure}[h]
	\begin{subfigure}[T]{.5\textwidth}  
		\centering\includegraphics[scale=0.13]{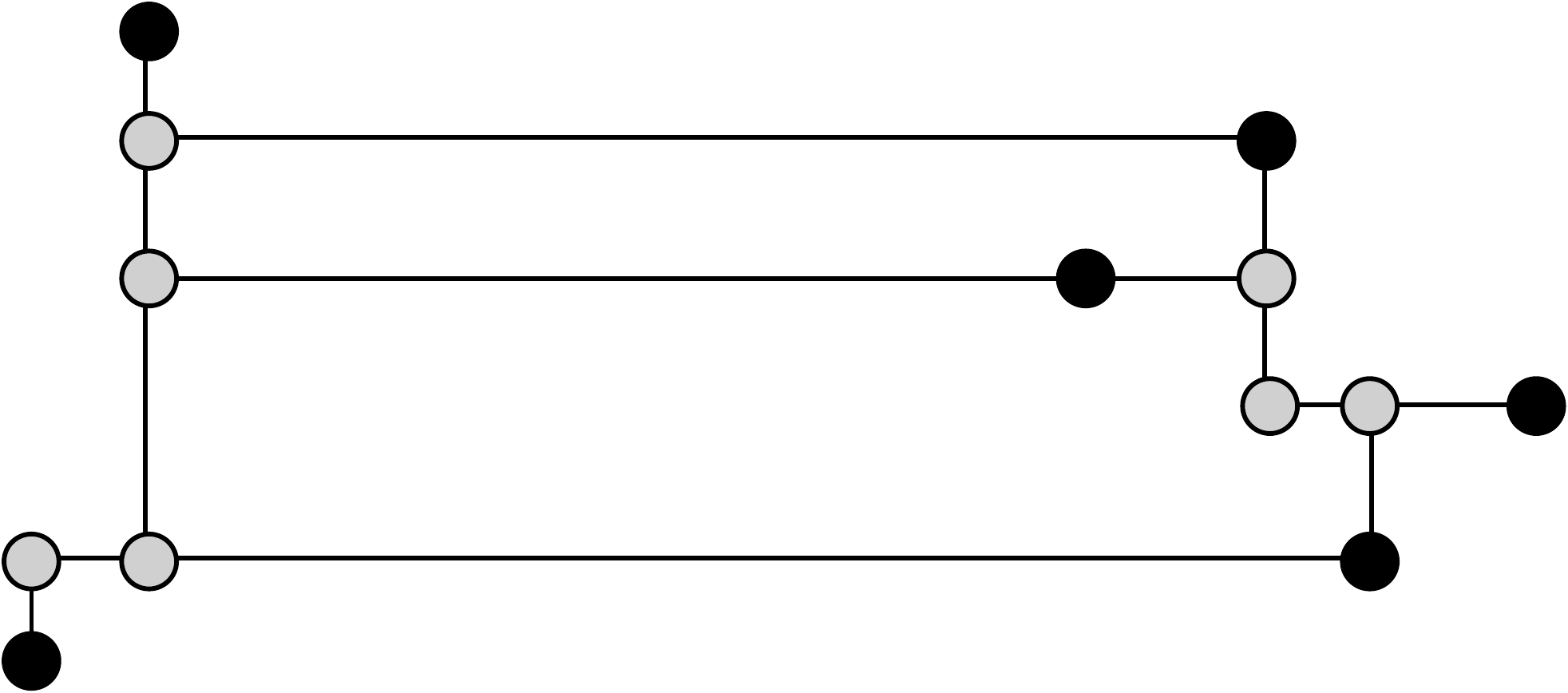}
		\caption{Sparse Manhattan network $\MN(X)$}\label{fig:mn-a}
	\end{subfigure}%
	\begin{subfigure}[T]{.5\textwidth} 
		\centering\includegraphics[scale=0.13]{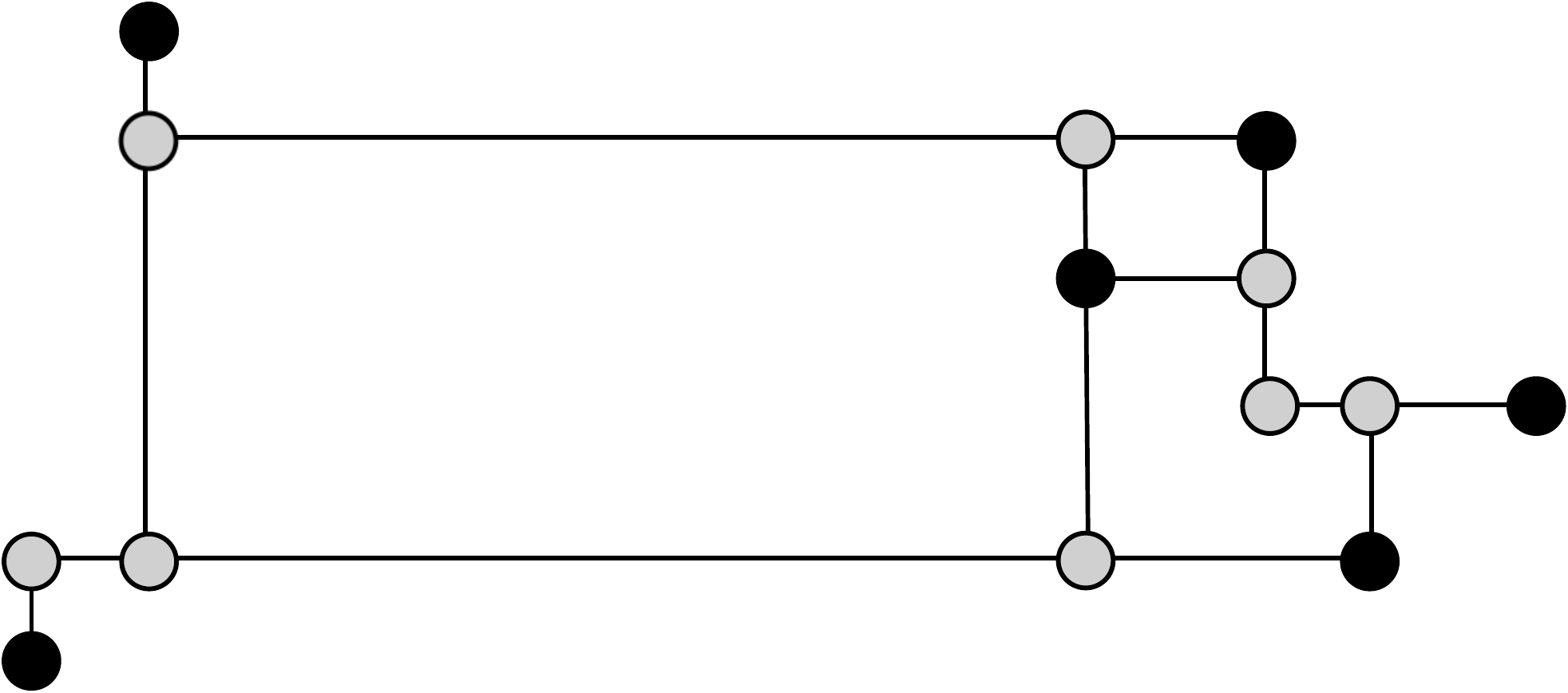}
		\caption{Small Manhattan network $\SM(X)$}\label{fig:mn-b}
	\end{subfigure}
	\caption{Manhattan networks on input $X$ (black dots) and newly added points (grey). \label{fig:mn}}
\end{figure}

\vspace{-0.2in}

\subsubsection{Small Manhattan network}

A closely related network design task is the \emph{small Manhattan network} (\texttt{smallMN}) problem. Whereas \texttt{sparseMN} aims to minimize the number of corner points, in \texttt{smallMN} the goal is to minimize the total length of the network. 

More precisely, given a set $P$ of $n$ points in the plane, we look for a collection $N$ of axis-parallel line segments so that each pair of points in $P$ can be connected by a monotone path contained in the union of the segments in $N$. The task is to minimize the total length of the segments in $N$. 
The problem is well studied~\cite{GudmundssonLN, ChinGuoSun} and known to admit a $2$-approximation~\cite{Chepoi, GuoSunZhu}.

Surprisingly, despite the similarity of \texttt{smallMN} to the other problems studied in the paper, in this case, pattern-avoidance does not lead to an asymptotic improvement. 

Given a set $P$ of $n$ points in $[0,1]^2$, let $\SM(P)$ denote the length of the smallest MN for $P$ (see \cref{fig:mn-b}). It is easy to see that $\SM(P) \in \fO(n)$: Extend each point $p \in P$ by horizontal and vertical lines until the boundaries of the box $[0,1]^2$. The resulting grid contains monotone paths between any pair $p,q \in P$, and is of total length $2n$.
The next observation indicates that \texttt{smallMN} is unaffected by non-trivial single-pattern avoidance.
\begin{observation}
	For each $n \in \N_+$, there is a $321$-avoiding set $P \subset [0,1]^2$ of $2n$ points in general position where any Manhattan network of $P$ has length at least $n$.
\end{observation}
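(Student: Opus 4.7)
For each $n$, the plan is to exhibit a concrete point set $P$. Fix a parameter $\epsilon \in \bigl(0,\,1/(2n(n+1))\bigr]$ and, for $i \in [n]$, set
\[
  p_i = \Bigl(\tfrac{2i-1}{2n},\; i\epsilon\Bigr), \qquad q_i = \Bigl(\tfrac{i}{n},\; 1-(n-i+1)\epsilon\Bigr),
\]
with $P = \{p_1,\dots,p_n\}\cup\{q_1,\dots,q_n\}$. The $2n$ x-coordinates form the distinct set $\{j/(2n):j\in[2n]\}$, and the y-coordinates $\{p_i.y\}\subset(0,n\epsilon]$ and $\{q_j.y\}\subset[1-n\epsilon,1-\epsilon]$ lie in disjoint intervals (since $\epsilon<1/(2n)$), so $P$ is in general position inside $[0,1]^2$.

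Reading $P$ from left to right gives the permutation $\pi = 1,\,n+1,\,2,\,n+2,\,\dots,\,n,\,2n$. To check that $\pi$ avoids $321$, I would note that its odd- and even-positioned subsequences are $1,2,\dots,n$ and $n+1,\dots,2n$ respectively, both increasing. Therefore any decreasing pair in $\pi$ must consist of a $q_i$ at an even position (value $n+i$) followed by a later $p_j$ at an odd position (value $j\le n$). But every value at a position beyond $p_j$ is strictly larger than $j$ (the remaining odd-positioned values continue $j+1,j+2,\dots$, and all remaining even-positioned values exceed $n\ge j$), so no third, still-smaller value exists to complete a 321 pattern.

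For the lower bound, let $N$ be any Manhattan network of $P$, and for each $i$ fix a monotone path $\pi_i \subseteq N$ from $p_i$ to $q_i$. Because $p_i$ lies strictly southwest of $q_i$, the path $\pi_i$ is monotone increasing in both coordinates, so every segment of $\pi_i$ has x-coordinate inside the narrow strip $S_i := [\tfrac{2i-1}{2n},\tfrac{i}{n}]$, of width $1/(2n)$. By construction the strips $S_1,\dots,S_n$ are pairwise disjoint, so no segment of $N$ can be shared between two distinct paths $\pi_i$. Summing Manhattan distances gives
\[
  |N| \;\ge\; \sum_{i=1}^n|\pi_i| \;\ge\; \sum_{i=1}^n \bigl((q_i.x-p_i.x)+(q_i.y-p_i.y)\bigr) \;=\; \tfrac{1}{2} + n - n(n+1)\epsilon \;\ge\; n,
\]
the last inequality using the bound $\epsilon \le 1/(2n(n+1))$.

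The only delicate point in this plan is the ``no-sharing'' claim for the paths $\pi_i$; it rests on two combined observations, namely that a monotone rectilinear path from $p_i$ to $q_i$ is confined to the x-strip $S_i$, and that the strips $S_i$ are pairwise disjoint by construction. Once this is in hand, the rest of the proof is direct computation, and the construction also shows that the $O(n)$ upper bound on $\SM(\cdot)$ noted just before cannot be improved even for the restrictive class $\Av(321)$.
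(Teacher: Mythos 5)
Your proof is correct and follows essentially the same approach as the paper: $n$ point pairs, each forcing a monotone path of Manhattan length close to $1$, with the paths confined to pairwise disjoint thin strips so their lengths add up to at least $n$. The paper's construction is simply the $90$-degree-rotated version of yours (the pairs span the square horizontally and are separated by disjoint \emph{horizontal} strips rather than vertical ones), so the two arguments are interchangeable.
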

\begin{proof}
	For $i \in [n]$, let $p_i = (\frac{i}{2n^2}, \frac{2i-2}{2n})$ and $q_i = (\frac{2n^2-n+i}{2n^2}, \frac{2i-1}{2n})$. Let $P = \{ p_i \mid i \in [n] \} \cup \{ q_i \mid i \in [n] \}$. See \cref{sfig:sm-lb-321} for a sketch. Clearly, $P$ is in general position and 321-avoiding.
	
	Each pair $p_i, q_i$ has to be connected by a path of length $\frac{1}{2n} + \frac{2n^2-n}{2n^2} = 1$. These paths cannot share common line segments (or even intersect), hence $\SM(P) \geq n$.
\end{proof}

Perturbing differently, one can make the instance avoid 231 or its symmetries (\cref{sfig:sm-lb-132}). 

\begin{figure}[h]
	\begin{subfigure}[T]{.5\textwidth}  
		\centering\includegraphics[scale=0.10]{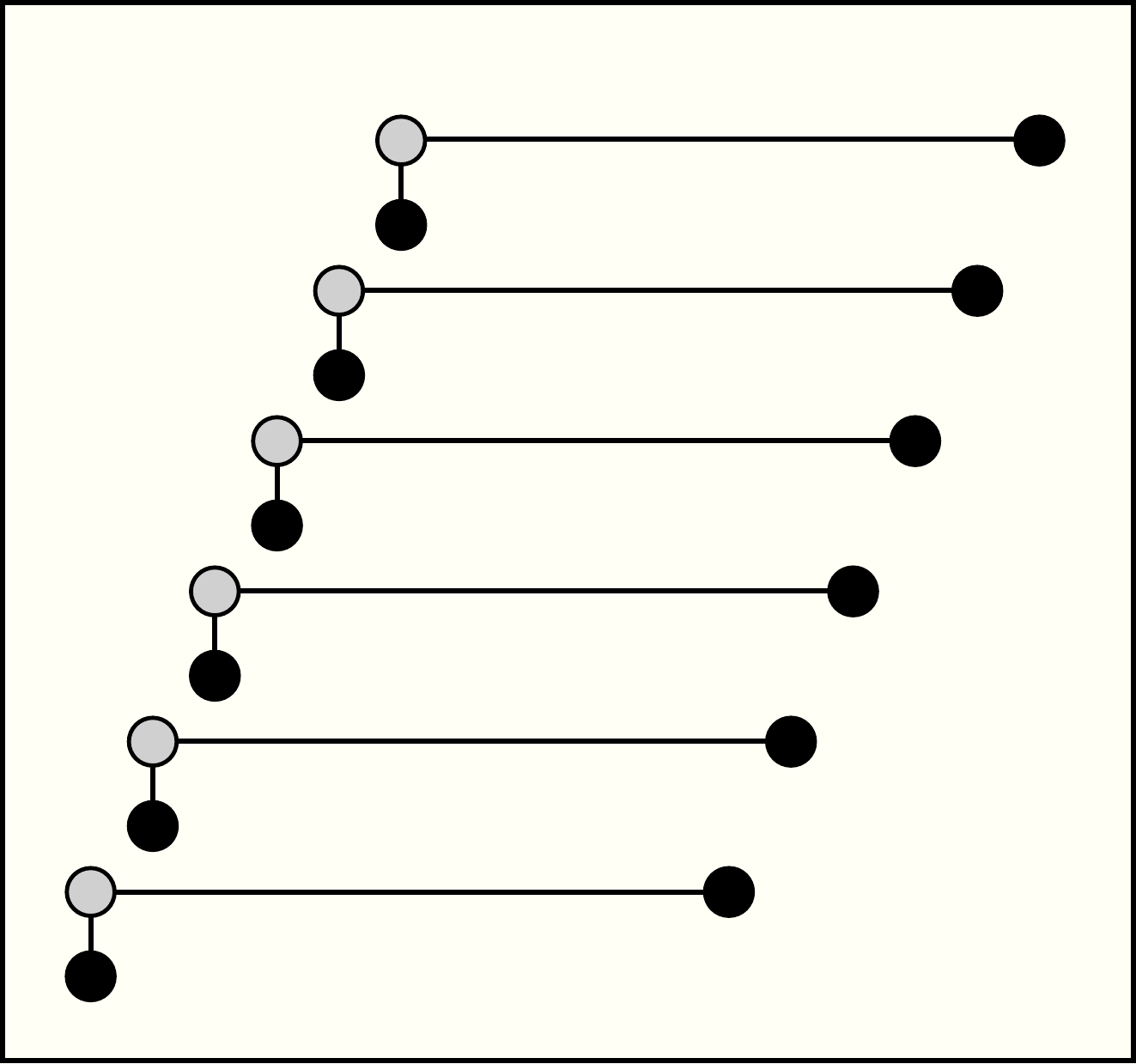}
		\caption{321-avoiding}\label{sfig:sm-lb-321}
	\end{subfigure}%
	\begin{subfigure}[T]{.5\textwidth} 
		\centering\includegraphics[scale=0.10]{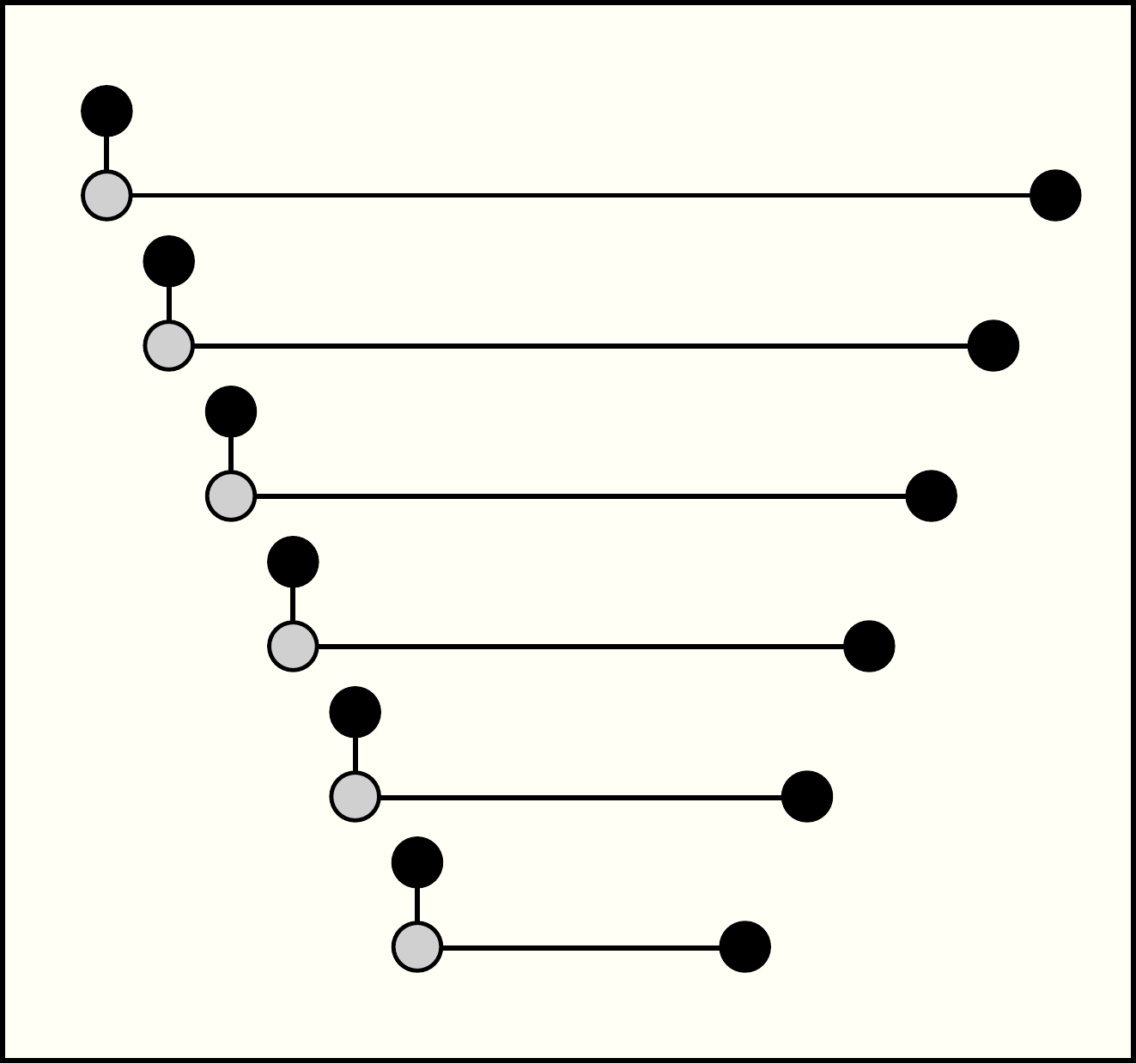}
		\caption{132- and 231-avoiding}\label{sfig:sm-lb-132}
	\end{subfigure}
	\caption{Two point sets with partial, but already long, Manhattan networks. \label{fig:mn2}}
\end{figure}

\subsection{\texorpdfstring{$k$}{k}-server on the line}


We consider the offline $k$-server problem on the line, in the following called simply \emph{$k$-server}. The input is a sequence of $n$ reals (\emph{requests}) in $[0,1]$, which we need to \emph{serve} with $k$ servers. Each of the servers has a position in $[0,1]$ throughout time, initially $0$. We serve the $n$ requests one-by-one by moving the servers, requiring that at least one server visits the requested value. 

More precisely, let $X=(x_1, \dots,x_n) \in [0,1]^n$ be the sequence of requests and let $p_i^j$ denote the position of server $j$ after serving the request $x_i$, for $j \in [k]$ and $i \in [n]$.
We fix $p_0^j = 0$ and require $p_i^j \in [0,1]$ for all $i,j$. 
The values $p_i^j$ represent a \emph{valid solution} for $X$ if for all $i\in[n]$ there is some $j \in [k]$ such that $p_i^j = x_i$. 
The \emph{cost} of the solution is the total  movement of all servers, i.e., $$\sum_{i \in [n]} \sum_{j \in [k]}{|p_i^j - p_{i-1}^j|}.$$

Let $\OPT_k(X)$ denote the minimum cost of a valid solution for $X$ using $k$ servers; this is called the \emph{offline} optimum, as it can be computed with full knowledge of $X$. 
It is well known that the offline optimum solution can be assumed to move at most one server for each request~\cite{Borodin}. (Other moves may be postponed to future requests.) Nonetheless, it will be helpful for us to also make use of solutions that move multiple servers at once.





\medskip

In general, $\OPT_k(X) \leq \frac{n}{k}$. To see this, 
assign an interval of size $\frac1k$ to each server, with each server responsible for requests in its interval, so each individual move costs at most~$\frac1k$.\footnote{Note that this is actually an \emph{online} algorithm.} This upper bound is tight up to a constant factor: Consider $m = \frac{n}{2k}$ repetitions of the sequence $(\frac1{2k}, \frac2{2k}, \dots, 1)$. In every repetition, there must be at least $k$ movements of at least $\frac1{2k}$, for a total cost of $\frac{n}{4k}$.

\begin{observation}
	$\OPT_k(X) \le \frac n k$ for every input $X \in [0,1]^n$, and for every $n$ there exists an input $X \in [0,1]^n$ with $\OPT_k(X) \ge \frac{n}{4k}$.
\end{observation}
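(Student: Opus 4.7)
The observation splits into an upper and a lower bound, both elementary. For the upper bound I exhibit a simple offline strategy: partition $[0,1]$ into $k$ consecutive intervals $I_j = [(j-1)/k, j/k]$ and permanently dedicate server $j$ to $I_j$. On each request $x_i \in I_j$, move server $j$ from its current position to $x_i$. After server $j$ first enters $I_j$, all its subsequent moves stay inside $I_j$ and thus cost at most $1/k$ each; summing over the $n$ requests yields total cost at most $n/k$, with only a negligible one-time $O(k)$ overhead from the at most $k$ initial transits out of $0$ (which the stated bound suppresses).

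For the lower bound I exhibit a hard instance. Fix the $2k$ equally spaced positions $p_j = j/(2k)$ for $j \in [2k]$, and take $X$ to be $m = \lfloor n/(2k) \rfloor$ concatenated copies of the sequence $(p_1, p_2, \dots, p_{2k})$. The plan is to show that any valid $k$-server solution pays cost at least $1/2$ on each repetition, whence $\OPT_k(X) \ge m/2 \ge n/(4k)$.

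The main obstacle --- and the only step that needs real work --- is this per-repetition claim. Fix one repetition and let $L_j$ denote the total distance travelled by server $j$ during it. The key one-dimensional observation is that the set of positions visited by server $j$ during the repetition lies inside an interval of length at most $L_j$ (its range cannot exceed its total travel). Since the $2k$ targets are spaced $1/(2k)$ apart, server $j$ can touch at most $2k L_j + 1$ of them. Every $p_i$ must be touched by some server, hence
\[
	2k \;\le\; \sum_{j=1}^k \bigl(2k L_j + 1\bigr) \;=\; 2k \sum_{j=1}^k L_j + k,
\]
which rearranges to $\sum_j L_j \ge 1/2$, giving the per-repetition lower bound and closing the proof.
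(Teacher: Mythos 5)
Your proposal is correct and takes essentially the same route as the paper: the upper bound is the identical fixed-interval assignment (the paper likewise suppresses the $\fO(k)$ start-up cost of moving servers from $0$ into their intervals), and the lower bound is the same instance of $\lfloor n/(2k)\rfloor$ repetitions of the equally spaced sweep $(\tfrac{1}{2k},\tfrac{2}{2k},\dots,1)$. Your counting argument via ``range of a server is at most its travel'' is just a spelled-out version of the paper's terse claim that each repetition forces $k$ moves of length at least $\tfrac{1}{2k}$, i.e., cost at least $\tfrac12$ per repetition.
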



A slight modification of the lower bound sequence involves $m = \frac{n}{k+1}$ repetitions of $(0,\frac{1}{k}, \frac{2}{k}, \dots, 1)$ with a total cost of at least $\frac{n}{k(k+1)}$. Observe that this sequence can be made order-isomorphic to a $(k+1)$-increasing (or $(k+1)$-decreasing) permutation by slight perturbation. This shows that avoiding the pattern $D_{k+2}$ (or $I_{k+2}$) does not help when we have $k$ servers. On the other hand, only one additional server will bring down the cost to virtually zero. In general, we can serve a $k$-increasing input $X$ with $k$ servers by partitioning $X$ into $k$ increasing subsequences and assigning a server to each; as each server moves only in one direction, their individual cost will be at most~1. (This is of course more generally true for all sequences obtained by interleaving $k$ \emph{monotone} sequences.)

\begin{observation}\label{p:k-server-k-inc}
	For each $k$-increasing input $X \in [0,1]^n$, we have $\OPT_k(X) \le k$.
	On the other hand, for each $n$, there is a $(k+1)$-increasing input $X$ with $\OPT_k(X) \in \Omega(\frac{n}{k^2})$.
\end{observation}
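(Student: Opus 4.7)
For the upper bound, I would invoke the alternative characterization of $k$-increasing permutations recorded in \S\,\ref{sec2}: a sequence avoids $D_{k+1}$ iff it can be partitioned into $k$ increasing subsequences $X^{(1)}, \dots, X^{(k)}$. First, I would compute such a decomposition. Then I would assign server $j \in [k]$ the task of serving exactly the requests in $X^{(j)}$ in their order of arrival, leaving all other servers stationary at those times. Because $X^{(j)}$ is increasing and its values lie in $[0,1]$, server $j$ only ever moves rightward from its starting position $0$, so its total travel is bounded by $1$. Summing over the $k$ servers yields $\OPT_k(X) \le k$.

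For the lower bound, my plan is to exhibit an explicit family of bad inputs. Let $m = \lfloor n/(k+1) \rfloor$ and take $X$ to consist of $m$ consecutive copies of the block $\left(0, \tfrac{1}{k}, \tfrac{2}{k}, \dots, 1\right)$, followed by arbitrary padding to length $n$. Grouping the requests by their index within a block partitions $X$ into $k+1$ constant (hence weakly increasing) subsequences; after an arbitrarily small perturbation that makes $X$ in general position and preserves the grouping, $X$ becomes $(k+1)$-increasing.

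The cost lower bound is a per-block pigeonhole argument. Within one block, $k+1$ pairwise distinct positions must each be visited by some server, but there are only $k$ servers, so some server serves at least two of the requests during that block. Since any two distinct values in $\{0, \tfrac{1}{k}, \dots, 1\}$ differ by at least $\tfrac{1}{k}$, that server travels at least $\tfrac{1}{k}$ during the block. The movements attributed to different blocks occur in disjoint time intervals and so add, yielding $\OPT_k(X) \ge m/k \in \Omega(n/k^2)$.

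I do not expect serious obstacles. The only mildly subtle point will be making sure the perturbation step simultaneously (i)~puts $X$ in general position, (ii)~preserves the $(k+1)$-increasing property via the same grouping, and (iii)~leaves the cost argument intact — the last is fine since a sufficiently small perturbation keeps the pairwise distances among the perturbed block positions at least, say, $\tfrac{1}{2k}$, which only changes the $\Omega(\cdot)$ constant.
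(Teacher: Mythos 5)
Your proposal is correct and matches the paper's argument: the upper bound partitions a $k$-increasing input into $k$ increasing subsequences, one server per subsequence, each moving monotonically for cost at most $1$; the lower bound uses roughly $n/(k+1)$ repetitions of $\bigl(0,\tfrac1k,\dots,1\bigr)$, perturbed into general position, with a per-block pigeonhole forcing movement at least $\tfrac1k$ per repetition. Your write-up merely spells out the pigeonhole and perturbation details that the paper leaves implicit.
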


The upper bound $\fO(k)$ is tight for all $k$: Scale down a hard $k$-increasing sequence to $[\frac{1}{2},1]$; this forces all $k$ servers to move at least distance $\frac{1}{2}$ before serving their first request.

In the following, we consider inputs that avoid one or more patterns. In the spirit of \cref{p:k-server-k-inc}, we assume that $k$ is ``large enough'', i.e., at least some constant that we may choose depending on the avoided pattern(s). On the other hand, we also require $k$ to be not too large (as a function of $n$), as otherwise the overhead of our techniques dominates the cost.

Our main result, stated in the introduction as \cref{thm2}, is (roughly) that $\OPT_k(X) \in \fO(n^{\varepsilon})$ if $X \in [0,1]^n$ avoids a fixed pattern $\pi$, where $\varepsilon$ depends on $\pi$ and $k$.
We further provide an almost complete characterization of $\OPT_k$ when the input is in a principal permutation class (i.e., with a single avoided pattern), which we now summarize. 

\paragraph{Overview for principal classes.} For a fixed pattern $\pi$, the worst-case cost of serving a $\pi$-avoiding sequence of $n$ requests with $k$ servers is:
\begin{itemize}
	\item $\Theta(k)$ if $\pi$ is monotone and $k \ge |\pi|$. (\Cref{p:k-server-k-inc})
	\item  $n^{\Theta(1/\log k)}$ if $\pi$ is not separable and $2 \le k \le 2^{\sqrt{\log n}}$. (\Cref{p:k-server-tww,p:k-server-tww-lb})
	\item  $n^{\Theta(1/k)}$ if $\pi \in \{132, 213, 231, 312\}$ and $k \le \frac12 \sqrt{\log n}$. (\Cref{p:k-server-231-ub,p:k-server-312-lb})
	\item  $n^{\Omega(1/k)}$ and $n^{\fO(1/\log k)}$ otherwise, if $2 \le k \le \frac12 \sqrt{\log n}$. (\Cref{p:k-server-tww,p:k-server-312-lb})
\end{itemize}

The constants in the asymptotic notation all depend on $\pi$. Hence, as mentioned above, most of our results are non-trivial when $k$ is at least some constant depending on~$\pi$.


Observe that the starting position of the servers (determined as zero above) changes the overall cost by at most $k$, and hence is irrelevant for our asymptotic results.
Further observe that the cost of the $k$-server problem is (essentially) invariant under reversal and complement of the input sequence. Here, complement means replacing each value $x$ with $1-x$, which clearly changes the overall cost by at most $k$ (because of the starting positions). A reversed input sequence can be handled by a reversed solution, which again does not change the cost up to starting positions.

\paragraph{Upper bounds.} 
%
We leverage our distance-balanced decomposition (see \cref{sec:dist-bal}) to obtain the following result (proven in \cref{sec:k-server}) that implies Theorem~\ref{thm2}.

\begin{restatable}{theorem}{restateKServerTwinWidth}\label{p:k-server-tww}
	Let $\pi$ be a permutation with Füredi-Hajnal limit $c_\pi$.
	Then, for every $\pi$-avoiding input $X \in [0,1]^n$, we have $\OPT_k(X) \in \fO( k^{1 + \log_{5c_\pi} 120} \cdot n^{1/(\floor{\log_{5c_\pi} k} + 1)})$.
\end{restatable}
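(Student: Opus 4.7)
The plan is to combine the distance-balanced decomposition of \S\ref{sec4} with a recursive server-allocation strategy. The decomposition, building on Theorem~\ref{thm:tw}, refines the Guillemot--Marx twin-width bound $\fO(c_\pi)$ into a hierarchical partition of $[0,1]$ in which each level is balanced both in the number of requests per block and in the geometric width of each block.

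Let $b := 5 c_\pi$ be the branching factor and $d := \floor{\log_b k}$ the target recursion depth. At the top level we partition the value space $[0,1]$ into $\fO(b)$ subintervals, each of width $\fO(1/b)$ and each receiving $\fO(n/b)$ of the requests, with a bounded-complexity ``interaction pattern'' between subintervals inherited from the twin-width of the coarse structure. We then recurse: allocate roughly $k/b$ servers to each subinterval to serve its subsequence of requests, and reserve a constant (depending on $c_\pi$) number of additional servers per level to shuttle across block boundaries when rebalancing is forced. The recursion terminates after $d$ levels: each base subproblem has at most $\fO(n^{1/(d+1)})$ requests and is handled by $\fO(1)$ servers at trivial cost $\fO(n^{1/(d+1)})$.

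The cost analysis splits into (i) the within-subinterval recursive cost and (ii) the inter-subinterval boundary-crossing cost at each level. For (ii), the crucial point is that at level $j$ each subinterval has width $\fO(1/b^j)$, so each boundary crossing costs only $\fO(1/b^j)$; the number of such crossings is controlled via \cref{p:marcus-tardos-with-limit-non-square} applied to the coarse pattern induced by the decomposition (which remains $\pi$-avoiding). Charging a bounded multiplicative overhead per recursion level — the source of the constant $120$ that appears in the exponent of $k$ — and summing the resulting geometric series over $j = 0, 1, \dots, d$ yields the advertised bound $\fO(k^{1+\log_b 120} \cdot n^{1/(d+1)})$.

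The main obstacle will be orchestrating the server allocation so that a single recursive call cleanly corresponds to the next level of the decomposition, while tracking how the constants compound. In particular, the distance-balanced property of \S\ref{sec4} is essential in both directions: without balanced geometric widths, boundary movements would dominate the cost at deep levels; without balanced request counts, the base-case cost would blow up. Reconciling these two requirements, together with a precise accounting of reserved shuttle servers at each of the $d$ levels, is the delicate part of the argument.
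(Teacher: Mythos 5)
There is a genuine gap: the decomposition you propose is a \emph{static one-dimensional} partition of the value space into $\fO(b)$ subintervals (with $b = 5c_\pi$), each allegedly of width $\fO(1/b)$ and with $\fO(n/b)$ requests, to which you permanently allocate $k/b$ servers and recurse. Trace the cost of this scheme: after $d = \floor{\log_b k}$ levels each leaf has $\approx n/b^d \approx n/k$ requests served by $\fO(1)$ servers inside an interval of width $\approx 1/k$, giving cost $\fO(n/k^2)$ per leaf and $\fO(n/k)$ in total over the $\approx k$ leaves — i.e., only the trivial bound, not $n^{1/(d+1)}$. Relatedly, your assertion that ``each base subproblem has at most $\fO(n^{1/(d+1)})$ requests'' does not follow from branching $b$ and depth $d$; with your scheme the leaf size is $n/b^d$, which equals $n^{1/(d+1)}$ only in a special regime of $k$ versus $n$. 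The vague ``reserved shuttle servers'' do not repair this, because the bottleneck is not boundary crossings but the fact that a proportional, value-based server allocation never exploits the temporal structure of the input.

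The paper's argument is structured differently in exactly the place your proposal is weakest. The recursion is on the number of servers ($d^t$ servers, $t \approx \log_d k$, $d = 5c_\pi$), and at each level one invokes \cref{p:bal-gridding} with $m = n^{1/(t+1)}$ — so the gridding has $\Theta(n^{1/(t+1)})$ time-columns and value-rows, \emph{not} a constant number of blocks. The pattern-avoidance (via Marcus–Tardos, used inside \cref{thm:decomposition}) guarantees sparsity: each column and row has at most $d$ non-empty cells. All $d^t$ servers, organized as $d$ groups of $d^{t-1}$, sweep the columns left to right, with each group assigned to one of the $\le d$ non-empty cells of the current column; the inter-column movement costs $\fO(d^t \cdot n^{1/(t+1)})$ in total. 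There is no simultaneous ``width $\fO(1/b)$ and count $\fO(n/b)$'' guarantee; instead, rows taller than $40/m$ contain at most $d$ points and are served trivially at total cost $\fO(d)$, while the remaining cells have height $\le 40/m$, so their recursive costs are scaled by $40/m$ and summed via Jensen's inequality over the $\le 3dm$ non-empty cells, giving the recurrence $b_t = 120\, b_{t-1} + 4$ (your intuition about the constant $120$ per level is the one piece that matches). To fix your proposal you would need to replace the constant-arity value partition by this two-dimensional, $\Theta(n^{1/(t+1)})$-column gridding with column-sparsity, and redo the accounting accordingly.
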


For more restricted classes, we can prove better upper bounds (\cref{sec:k-server-sep-ub,sec:k-server-231-ub,sec:k-server-sep-subclasses}).


\begin{restatable}{theorem}{restateKServerAvUB}\label{p:k-server-231-ub}
	For every input $X \in [0,1]^n$ that avoids 231 (or its symmetries), we have $\OPT_k(X) \in \fO(k^2 + k \cdot n^{1/k})$.
\end{restatable}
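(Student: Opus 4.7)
The approach leverages the recursive structure of $\Av(231)$ to design a $k$-server algorithm via a balanced decomposition. Any $X \in \Av(231)$ writes uniquely as $X = L \cdot m \cdot R$, where $m$ is the maximum, every value of $L$ is strictly less than every value of $R$, and both $L, R \in \Av(231)$. Iterating this rule yields a Cartesian tree in which every subtree corresponds to a contiguous position range and a contiguous value range, sibling subtrees are disjoint in both coordinates, and any two subtrees in no ancestor--descendant relation are ordered consistently by position and by value.

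With $r = \lceil n^{1/k} \rceil$, I would construct a balanced decomposition of $X$: by a tree-partitioning argument on the Cartesian tree (cutting $O(r)$ subtrees whose size lies between $n/(2r)$ and $n/r$, and cutting long spines into path-segments when no such subtree exists), one carves $X$ into $O(r)$ contiguous blocks $B_1, \ldots, B_{r'}$ with $|B_i| \le n^{1-1/k}$, each in $\Av(231)$, occupying pairwise disjoint value intervals $V_i \subset [0,1]$ with $\sum_i |V_i| \le 1$. The $O(r)$ pivot requests cut off by the partitioning form a separate set handled by a dedicated \emph{pivot server}.

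The pivot server traverses its $O(r)$ targets with total movement $O(1)$, while the remaining $k-1$ servers serve each block $B_i$ recursively within its value interval $V_i$, contributing $|V_i|\cdot T(|B_i|, k-1)$ where $T(n,k)$ is the worst-case cost for $n$ requests with $k$ servers. Reconfiguring the $k-1$ servers across block boundaries costs $O(k\cdot r) = O(k\cdot n^{1/k})$ in aggregate, since each transition may move up to $k-1$ servers. Using $\sum_i |V_i| \le 1$ and $|B_i| \le n^{1-1/k}$, the recurrence
\[
    T(n, k) \;\le\; T\bigl(n^{1-1/k},\, k-1\bigr) + O\bigl(k\cdot n^{1/k}\bigr)
\]
unrolls over $k$ levels, the subproblem size shrinking $n \to n^{1-1/k} \to \cdots \to n^{1/k}$; applying the base case $T(\ell, 1) \le \ell$ and collecting the additive contributions gives $T(n, k) \in O(k^2 + k \cdot n^{1/k})$.

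The main technical challenge is producing the balanced decomposition uniformly across all $\Av(231)$ sequences. For "bushy" Cartesian trees, a standard centroid-style argument finds subtrees of size close to $n/r$; for highly skewed Cartesian trees --- notably the long spines arising from near-monotone $231$-avoiding sequences --- such a subtree need not exist, and one must instead cut the spine into $r$ path-segments, arguing that each segment still corresponds to a contiguous value range by virtue of the Cartesian-tree invariant. A careful case analysis combining both regimes, together with the cross-subtree ordering property of $\Av(231)$, is what enables the recurrence to close cleanly with the claimed $O(k^2 + k \cdot n^{1/k})$ bound.
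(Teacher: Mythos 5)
Your plan is in the same spirit as the paper's proof (exploit the recursive structure of $\Av(231)$, cap subproblem sizes near $n^{(k-1)/k}$, scale costs by the width of the enclosing value interval, and drop to $k-1$ servers on small parts), but two steps do not hold up. First, the balanced decomposition you invoke is not established, and its strongest claims look false in general. Cutting off subtrees of the max-rooted Cartesian tree of size about $n/(2r)$ leaves a residual top part that can contain up to $\Theta(n^{1-1/k})$ requests, not $\fO(r)$, so the ``pivot'' set is not small; its value range also overlaps every block, which is why you must treat it separately. And even an $\fO(r)$-sized pivot set is in general just another $231$-avoiding subsequence, which a single server cannot serve with total movement $\fO(1)$: the inward zig-zag $0.1, 0.9, 0.2, 0.8, 0.3, 0.7, \dots$ avoids $231$ and forces a lone server to pay $\Omega(r)$. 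Your fallback of cutting spines into path segments produces pieces that are not contiguous in time (their requests interleave with the hanging subtrees), which breaks both the ``serve blocks one after another'' schedule and the transition accounting. Second, and independently, the cost bookkeeping does not give the stated bound. Moving up to $k-1$ servers at each of the $r \approx n^{1/k}$ block boundaries costs $\Theta(k\,n^{1/k})$ at the top level, and this overhead does not shrink at deeper levels: with $k-j$ servers and blocks of size $n^{(k-j)/k}$ confined to intervals of total width at most $1$, level $j$ contributes $\Theta\bigl((k-j)\,n^{1/k}\bigr)$, so the recurrence $T(n,k) \le T(n^{1-1/k},k-1) + \fO(k\,n^{1/k})$ unrolls to $\Theta(k^2 n^{1/k})$, not $\fO(k^2 + k\,n^{1/k})$; in the regime $k \le \tfrac12\sqrt{\log n}$ where this theorem is used, that is weaker by an unbounded factor $k$.

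The paper's proof is built precisely to avoid that factor-$k$ loss. It decomposes at the \emph{first} request, $X = (x_0)\circ X_1 \circ X_2$ with $X_1 \le x_0 \le X_2$, and uses a threshold $p = \lfloor n^{(k-1)/k}\rfloor$: if $|X_1| \le p$, one server is parked at $x_0$ and $X_1$ is served recursively with $k-1$ servers; if $|X_1| > p$, only a \emph{single} server crosses the split (cost $2w_2 \le 2$) and $X_1$ is served with all $k$ servers and the same threshold, so such expensive events occur at most $n/p \approx n^{1/k}$ times, while relocations involving all servers contribute only $\fO(k)$ per level. A weighted-partition argument with Jensen's inequality (using $n_i \le p$, so $p^{1/(k-1)} \le n^{1/k}$) then yields per-level overhead $\fO(k + n^{1/k})$ and the bound $g_k(n) \le 4k\,n^{1/k} + k(k-1)$. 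To repair your argument you would need your block scheme to guarantee, in an amortized sense, that only $\fO(1)$ servers cross a boundary per transition (plus $\fO(k)$ wholesale moves per level), together with a correct treatment of the residual/pivot requests.
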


\begin{restatable}{theorem}{restateKServerSepSubUB}\label{p:k-server-sep-subclasses-ind}
	Let $\pi \in \Av(231)$ and $k = 2^{|\pi|+1}$.
	Then, for every input $X \in [0,1]^n$ that avoids both $231$ and $\pi$, we have $\OPT_k(X) \le 2^{|\pi| + 2}$.
\end{restatable}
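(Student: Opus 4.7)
The plan is to prove \cref{p:k-server-sep-subclasses-ind} by induction on $|\pi|$. The base case $|\pi| = 1$ is trivial: every non-empty sequence contains the unique $1$-permutation, so $X$ must be empty and the cost is $0 \le 2^{|\pi|+2}$.

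For the inductive step, I would use the $\Av(231)$-decomposition $\pi = \pi_L \, \hat{m} \, \pi_R$, where $\hat{m}$ is the maximum of $\pi$ and $\pi_L, \pi_R \in \Av(231)$ with $|\pi_L|, |\pi_R| \le |\pi| - 1$. The core structural claim to establish is the following \emph{Partition Claim}: every $X \in \Av(231, \pi)$ admits a position-disjoint partition $X = X^{(1)} \sqcup X^{(2)}$ such that $X^{(i)} \in \Av(231, \pi^{(i)})$ for some $\pi^{(i)} \in \Av(231)$ with $|\pi^{(i)}| \le |\pi| - 1$.

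Granted this claim, I invoke the inductive hypothesis on each $X^{(i)}$ to obtain a solution using $k_i = 2^{|\pi^{(i)}|+1} \le 2^{|\pi|}$ servers at cost at most $2^{|\pi^{(i)}|+2} \le 2^{|\pi|+1}$. Running the two solutions on disjoint pools of servers yields a total of $k_1 + k_2 \le 2 \cdot 2^{|\pi|} = 2^{|\pi|+1} = k$ servers and total cost at most $2 \cdot 2^{|\pi|+1} = 2^{|\pi|+2}$, matching the claimed bound.

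To prove the Partition Claim, I would analyse the Cartesian tree $T_X$ obtained by recursively placing the maximum at the root (this is exactly the tree that encodes the $\Av(231)$-decomposition of $X$). Since $X$ avoids $\pi$, the tree $T_X$ does not admit an ordered embedding of $T_\pi$; equivalently, at every internal node $v$ of $T_X$ at least one of the following holds: $T_{v.L}$ does not embed $T_{\pi_L}$, or $T_{v.R}$ does not embed $T_{\pi_R}$. Label $v$ by $L$ in the first case and by $R$ in the second. A top-down traversal using these labels then assigns each element of $X$ to $X^{(1)}$ or $X^{(2)}$ according to the label of a suitably chosen ancestor (e.g., the nearest ancestor whose labelled side contains it).

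The main obstacle is precisely the Partition Claim: from the local $L/R$ labelling one must verify that the two globally-constructed subsequences $X^{(1)}, X^{(2)}$ genuinely avoid patterns of strictly smaller size within $\Av(231)$, and pin down what those patterns $\pi^{(1)}, \pi^{(2)}$ are (natural candidates being $\pi$ with a single node of $T_\pi$ removed, or the skeleton of $T_\pi$ with one branch pruned). The delicate point is that nested $L/R$ alternations along paths in $T_X$ can, if handled naively, cause the subsequences to inherit patterns of size comparable to $|\pi|$ rather than strictly smaller ones; choosing the right global convention for converting local labels into a partition — and possibly strengthening the inductive hypothesis so that the $\pi^{(i)}$ track not only a size bound but also a specific position within $T_\pi$ — is the crux of the argument.
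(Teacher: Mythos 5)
Your plan hinges entirely on the Partition Claim, and that claim is both unproven and, as stated, false at the bottom of your induction. For $|\pi|=2$ (say $\pi=12$), any decreasing sequence of length $n\ge 1$ avoids $231$ and $\pi$, yet a partition into two subsequences each avoiding a pattern of size $\le 1$ forces both parts to be empty. So the induction cannot bottom out at $|\pi|=1$; at the very least monotone patterns must become separate base cases. More seriously, for general $\pi$ the claim is a strong ``splittability'' statement about the classes $\Av(231,\pi)$: the local dichotomy you derive from the Cartesian tree (at each node, the left subtree avoids $\pi_L$ or the right subtree avoids $\pi_R$) is correct and is essentially the same observation the paper uses, but the $L/R$ choice can vary from node to node, and turning these local choices into one global two-coloring whose classes avoid \emph{strictly smaller} patterns inside $\Av(231)$ is exactly the step you flag as open. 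It happens to work for $|\pi|=3$ (those double-avoidance classes are merges of two monotone sequences), but there is no argument for $|\pi|\ge 4$, and nothing in your outline suggests one; this auxiliary claim is arguably harder than the theorem itself.

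The paper's proof avoids needing any such global partition. It decomposes $X=(b)\circ X_1\circ X_2$ by the \emph{first} request and $\pi=(p)\circ\alpha\circ\beta$ likewise, notes that $X_1$ avoids $\alpha$ or $X_2$ avoids $\beta$, and then recurses keeping the \emph{same} pattern $\pi$ on the other part, doing induction on $|X|+|\pi|$ rather than on $|\pi|$ alone. The reason this closes is a resource your plan never uses: the cost bound is proportional to the width of the enclosing interval, so the recursive costs telescope geometrically, e.g.\ $(b-a)\cdot 2^{|\alpha|+2}+(c-b)\cdot 2^{|\pi|+2}+2(b-a)\cdot 2^{|\pi|}\le (c-a)\cdot 2^{|\pi|+2}$, and servers are reused between the two subcalls rather than split into disjoint pools. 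If you want to salvage your route, you would have to prove the Partition Claim for all $\pi\in\Av(231)$ (with smaller patterns still in $\Av(231)$), which is currently a gap; otherwise the interval-scaling induction on $|X|+|\pi|$ is the mechanism that replaces it.
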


The proofs for \cref{p:k-server-231-ub,p:k-server-sep-subclasses-ind} do not rely on distance-balanced merge sequences. In fact, they are not based on twin-width at all, but use the decomposition inherent to $231$-avoiding permutations (see \cref{sec:k-server}). 
The idea is to take the top-level decomposition, and serve each part of the decomposition with a number of servers depending (mainly) on its size. A similar approach can be used when $X$ is $t$-separable, sharpening the bound of Theorem~\ref{p:k-server-tww} in this special case.

\begin{restatable}{theorem}{restateKServerSepUB}\label{p:k-server-sep-ub}
	Let $t \ge 2$.
	Then, for every $t$-separable input $X \in [0,1]^n$, we have $\OPT_k(X) \in \fO( k \cdot t \cdot n^{1/(\floor{\log k} + 1)})$.
\end{restatable}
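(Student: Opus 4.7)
}
The plan is to generalize the divide-and-conquer approach behind \cref{p:k-server-231-ub} from the binary skew-sum structure of $\Av(231)$ to the $t$-ary inflation structure of $t$-separable permutations, so as to obtain a recursion of depth $\Theta(\log k)$. Recall that a non-trivial $t$-separable input $X$ decomposes as the inflation of an outer permutation $\tau$ of length $s\le t$ by $t$-separable sub-permutations $\rho_1,\dots,\rho_s$; geometrically, this partitions the request sequence into $s\le t$ blocks $B_1,\dots,B_s$ that are consecutive in time and occupy pairwise disjoint value intervals $V_1,\dots,V_s\subseteq[0,1]$.

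I would prove the bound by induction, maintaining the scaling invariant that the same bound applies to a sub-input scaled into any $[a,b]\subseteq[0,1]$, with cost multiplied by $b-a$. The base case $k=1$ is trivial, giving cost at most $n$, well within the bound. For the inductive step on an input of size $n$ with $k$ servers, I would first walk down the heavy path of the decomposition tree (always descending into the largest sub-permutation) until reaching a node at which every sub-block has size at most $n/2$; along the way, the light siblings of the path are served with a geometrically decreasing server budget, and this is precisely the mechanism giving rise to the $\lfloor\log k\rfloor+1$ factor in the exponent.

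At the balanced node, since the $s\le t$ sub-blocks are time-consecutive, I would partition them into two contiguous time-groups $G_1,G_2$ of sizes in $[n/4,3n/4]$ (which is possible because no single sub-block exceeds $n/2$) and process each with $\lceil k/2\rceil$ of the servers in sequence. The servers of each group move only within that group's union of value intervals during its time-window, and remain idle (at zero cost) during the other group's time. Transitioning between the two halves incurs $O(k)$ overhead, amortized across $\Theta(\log k)$ levels of recursion into a lower-order term. Unrolling produces $\fO(k)$ single-server leaf sub-problems of size $\fO(n^{1/(\lfloor\log k\rfloor+1)})$, and summing their costs yields the main term $\fO(k\cdot n^{1/(\lfloor\log k\rfloor+1)})$; the extra factor $t$ in the stated bound absorbs the out-degree $s\le t$ of internal nodes and the transition overheads accumulated over the heavy paths.

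The main obstacle I anticipate is calibrating the server-allocation scheme to obtain the exact exponent $1/(\lfloor\log k\rfloor+1)$, since a naive balanced recurrence such as $C(n,k)\le 2C(n/2,k/2)+O(k)$ only yields the weak bound $\fO(n+k\log k)$. The sharper bound additionally requires pushing recursive cost into the multiplicative factors $|V_i|$ arising from scaling sub-inputs into sub-intervals of $[0,1]$, combined with the heavy-path amortization to shrink the residual size super-linearly across levels. A secondary subtlety is that concatenating several consecutive sub-blocks into a single time-group does not literally yield a $t$-separable sub-input, so the induction has to be stated in a slightly strengthened form that accommodates partial concatenations of $t$-separable sub-permutations lying on a common sub-interval.
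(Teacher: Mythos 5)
There is a genuine gap: your recursion halves the server budget at a \emph{balanced split of sizes} (two time-groups of size in $[n/4,3n/4]$, each getting $\lceil k/2\rceil$ servers), but that mechanism cannot produce the exponent $1/(\floor{\log k}+1)$. After the $\Theta(\log k)$ halvings available, balanced splitting leaves single-server subproblems of size $n/k^{\fO(1)}$, i.e.\ $n^{1-o(1)}$, not the $\fO(n^{1/(\floor{\log k}+1)})$ you assert when ``unrolling''; that leaf-size claim does not follow from the scheme you describe. The width-scaling you hope will rescue the recurrence also breaks down for time-groups: a group is a union of several blocks whose value intervals may interleave with the other group's, so its bounding interval (which is what governs single- or few-server movement cost inside it) can be the whole ambient interval at every level, and even under ideal scaling the best you could extract from size-$n/k$ leaves is an $\Omega(n/k)$-type bound. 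What is missing is the actual engine of the paper's proof: a \emph{size threshold} $p=\floor{n^{\ell/(\ell+1)}}$ such that a block is served with half the servers only when its size is at most $p$ (so each halving shrinks the size polynomially, from $m$ to $m^{\ell/(\ell+1)}$, and Jensen's inequality over the width-weighted small blocks gives $p^{1/\ell}\le n^{1/(\ell+1)}$), while blocks larger than $p$ are recursed with the \emph{full} server budget on their (narrow) value interval; since there are at most $n/p\le n^{1/(\ell+1)}$ such large blocks, their $\fO(2^\ell(t+1))$ per-call movement overhead sums to the main term. Balancing these two contributions is exactly where the exponent comes from, mirroring the threshold $\floor{n^{(k-1)/k}}$ in the proof of \cref{p:k-server-231-ub}; your proposal never identifies this threshold, and its ``heavy-path plus geometrically decreasing budget for light siblings'' substitute leaves light siblings of size up to $n/2$ with too few servers.

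A secondary accounting issue: you amortize an ``$\fO(k)$ transition overhead'' over $\Theta(\log k)$ \emph{levels}, but the overhead is paid per \emph{node} of the recursion tree, and nothing in your scheme bounds the number of nodes processed at full server budget (in the paper this is precisely the $n/p$ bound on large blocks, plus the special case ensuring that when exactly one block is large no server ever crosses it). On the other hand, the parts you do share with the paper are sound: recursing on the inflation decomposition into $s\le t$ blocks, charging cost proportionally to the value-interval widths of genuine blocks, halving servers to get recursion depth $\floor{\log k}$, and an $\fO(kt)$ per-node movement cost; also, your worry about concatenations of consecutive blocks is unnecessary, since such a concatenation is itself an inflation of a pattern of length at most $t$ by $t$-separable permutations and hence $t$-separable.
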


\paragraph{Lower bounds.} We present two constructions based on a similar general idea.

Start with an appropriate input that can be efficiently served by our $k$ servers (or even less), but restricts movement somehow.
For example, in an efficient solution, some of the servers must remain in certain intervals most of the time.
Then, repeatedly \emph{inflate} some values in the input with a smaller copy of the original input, further restricting efficient solutions.
At some point, the servers find themselves in a lose-lose situation: Either they stick with the restrictions imposed by the ``global'' permutation, lacking enough servers in place for some of the small copies; or they make sure there are enough servers for most of the small copies, causing too much movement globally.


\begin{restatable}{theorem}{restateKServerTwwLB}\label{p:k-server-tww-lb}
	For each $n, k, d$, there is an input $X \in [0,1]^n$ of twin-width $d$ such that $\OPT_k(X) \in \Omega(d^{-2}k^{-3} \cdot n^{1/(\floor{\log_{2d} k} + 1)})$.
\end{restatable}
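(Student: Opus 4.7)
The plan is to construct the hard input by iterated inflation of a carefully chosen base of twin-width $d$. Fix a base permutation $\alpha$ of size $s = \Theta(d)$ with $\tww(\alpha) = d$; a concrete candidate is the canonical $d \times d$ grid from \cref{lem:canonical-grid-tww} (where $s = d^2$), with any discrepancy between $\log_{s} k$ and $\log_{2d} k$ absorbed into the polynomial slack $d^{-2} k^{-3}$ of the target bound. Embed $\alpha$ in $[0,1]^2$ so that its cells in a regular grid partition are pairwise at distance $\Omega(1/d)$. Recursively define $X_0 := \alpha$ and $X_j :=$ the inflation of $\alpha$ by $s$ rescaled copies of $X_{j-1}$, one per cell; then $|X_j| = s^{j+1}$ and, by \cref{obs:tww}, $\tww(X_j) = d$ for every $j$. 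Given $n$ and $k$, set $L := \floor{\log_{s} k}$ and take $X$ to be $X_m$ for the largest integer $m$ with $s^{m+1} \le n$, padded to length $n$ by sum with an identity permutation (which preserves twin-width and changes $k$-server cost by at most $\fO(k)$).

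The heart of the proof is an inductive lower bound $\OPT_k(X_j) \in \Omega(d^{-2} k^{-3} \cdot |X_j|^{1/(j+1)})$. The base case $j = 0$ is direct: serving $\alpha$ costs $\Omega(s/d)$ since consecutive top-level time steps land in cells $\Omega(1/d)$ apart. For the inductive step, fix a $k$-server solution on $X_j$ and examine its top level: requests partition into $s$ time-chunks, each processing a rescaled copy of $X_{j-1}$ inside a sub-rectangle of side $\Theta(1/d)$. Call a server \emph{local} to chunk $i$ if it stays inside sub-rectangle $i$ throughout the whole chunk. I would then apply a dichotomy: either (A) in at least half the chunks, fewer than $k/s$ servers are local, in which case the inductive hypothesis applied with $k/s$ servers and distances rescaled by $1/d$ gives cost $\Omega((s/k)^{3} d^{-2} \cdot s/d)$ per such chunk and summing over $\Omega(s)$ chunks yields the target; or (B) most chunks have at least $k/s$ local servers, which forces $\Omega(k)$ boundary crossings between consecutive sub-rectangles (since the top-level grid cycles through all $d$ rows), each crossing costing $\Omega(1/d)$, for a total cost of $\Omega(k/d)$ that again meets the target after multiplying by the $|X_{j-1}|^{1/j}$-sized local contributions.

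The main obstacle will be making the dichotomy fully rigorous, especially avoiding double-counting of costs across levels of the recursion: a server's movement inside a sub-rectangle is recursively chargeable to the inductive bound, while its movement between sub-rectangles should only be charged to the top level. A clean way to decouple these is to charge every boundary crossing $\Omega(1/d)$ directly to the top-level movement, entirely separate from any cost accounted for within sub-chunks; the inductive hypothesis is then only invoked for the movement of local servers within a chunk, and the two contributions can be summed without conflict. With this decoupling, a simple pigeonhole argument over the $s$ top-level chunks ensures that at least one of cases (A) or (B) dominates and yields the claimed $\Omega(d^{-2} k^{-3} \cdot n^{1/(L+1)})$ bound.
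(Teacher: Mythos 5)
Your construction cannot work, independently of how the dichotomy is formalized: iterated self-inflation of a \emph{constant-size} base permutation produces request sequences that are cheap even for a \emph{single} server. With one server the optimal cost is just the total variation $\sum_i |x_{i+1}-x_i|$ of the sequence, and for your $X_j$ this telescopes level by level: the base contributes $\fO(d)$, and each deeper level contributes at most the same amount after rescaling (there are $s=d^2$ copies per level, each living in a value-interval of length $\fO(1/d^2)$), so the total variation of $X_j$ is $\fO(d\cdot j)=\fO(d\log n/\log d)$. Hence $\OPT_k(X_j)\le\OPT_1(X_j)\in\fO(d\log n/\log d)$, which is far below the claimed $\Omega(d^{-2}k^{-3}n^{1/(\floor{\log_{2d}k}+1)})$ for large $n$. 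The same observation shows where your case analysis breaks: at the top level there are only $s=d^2$ chunks, each occupying a single contiguous time interval and a value-interval disjoint from all others, so a solution can pre-position (or simply drag along) servers and is never forced to make boundary crossings whose number grows with $n$; case (B) therefore yields only an $n$-independent cost, and case (A) is defeated by the fact that even one server suffices. A further, secondary issue is that ``fewer than $k/s$ \emph{local} servers'' does not let you invoke the inductive hypothesis with $k/s$ servers, since non-local servers may also enter a chunk and serve requests; any correct induction has to count the servers that ever \emph{touch} the sub-instance. Finally, the claim that the gap between $\log_{d^2}k$ and $\log_{2d}k$ can be ``absorbed into the polynomial slack'' is false: a change in the exponent of $n$ cannot be compensated by factors polynomial in $d$ and $k$.

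The paper's construction avoids all of this by making the base \emph{grow with $n$} rather than be a fixed pattern. The level-$t$ instance is $S_d(X^d_{t-1}(n),n)$, consisting of $n$ \emph{epochs}, each epoch visiting $2d$ blocks (scaled copies of the previous level) that lie in $2d$ value-bands at pairwise distance $\Omega(1/d)$, with the blocks inside each band drifting slowly across epochs. Because every epoch revisits all $2d$ distant bands, a single server already pays $\Omega(n/d)$ at the top level, and parking servers does not trivialize the instance. The induction is then parameterized by the server budget $(2d)^t$: call an epoch \emph{saturated} if every one of its $2d$ blocks is touched by at least $(2d)^{t-1}$ servers. If at least half the epochs are saturated, a pigeonhole over the fewer than $(2d)^t$ servers forces, in each such epoch, some server to touch two blocks at distance $\Omega(1/d)$, giving cost $\Omega(n/d)$; otherwise at least half the epochs contain a block touched by fewer than $(2d)^{t-1}$ servers, and the inductive bound applies to those scaled copies. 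Choosing $t=\floor{\log_{2d}k}+1$ and base length $m\approx\frac{1}{2d}n^{1/t}$ yields exactly the stated bound, with twin-width $d$ certified because the base is order-isomorphic to an $m\times 2d$ grid permutation with alternating increasing and decreasing rows.
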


Since separable inputs are precisely those of twin-width 1, \cref{p:k-server-tww-lb} with $d=1$ implies that \cref{p:k-server-sep-ub} is tight up to a factor of $t \cdot k^4$.
Moreover, \cref{p:k-server-tww} is tight if the avoided pattern~$\pi$ is \emph{non-separable} (up to $\pi$-dependent constants in the exponent). This is because the separable input obtained from \cref{p:k-server-tww-lb} with $d=1$ must avoid all non-separable patterns.
Since almost all patterns are non-separable~\cite{brignall2010survey}, 
\cref{p:k-server-tww} is tight in this way for almost all $\pi$.

Since the construction in \cref{p:k-server-tww-lb} has twin-width $d$, it avoids the canonical $(d+1) \times (d+1)$ grid permutation $\gamma$ (which has twin-width $d+1$). We know $c_\gamma \in 2^{\fO(d)}$~\cite{jfox}, which yields the following. 
\begin{corollary}\label{p:k-server-tww-conseq}
	For each $n, k, d$, there is a permutation $\pi$ of size $d^2$ and a $\pi$-avoiding input $X \in [0,1]^n$ such that $\OPT_k(X) \in \Omega(d^{-2}k^{-3}) \cdot n^{\Omega( \log\log c_\pi / \log k)}$.
\end{corollary}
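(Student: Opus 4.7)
The plan is to use \cref{p:k-server-tww-lb} as a black box and then certify that its construction avoids a specific grid pattern whose Füredi--Hajnal limit is well-controlled; the whole argument is essentially a reparametrization.

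First I would instantiate \cref{p:k-server-tww-lb} with the given parameter $d$, obtaining an input $X \in [0,1]^n$ of twin-width $d$ with
\[
\OPT_k(X) \in \Omega\!\left(d^{-2} k^{-3} \cdot n^{1/(\lfloor \log_{2d} k \rfloor + 1)}\right).
\]
Next I would exhibit the avoided pattern. Take $\pi$ to be the canonical $(d+1) \times (d+1)$ grid permutation, so $|\pi| = (d+1)^2 = \Theta(d^2)$, matching the stated ``size $d^2$'' up to an absorbable constant (reparametrize $d$ if one insists on exact size). By \cref{lem:canonical-grid-tww}, $\tww(\pi) = d+1$; and by \cref{obs:tww}, if $X$ contained $\pi$ as a pattern, then $\tww(X) \geq \tww(\pi) = d+1$, contradicting $\tww(X) = d$. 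Hence $X$ avoids $\pi$, as required.

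The remaining step is to rewrite the exponent $1/(\lfloor \log_{2d} k \rfloor + 1)$ in terms of $c_\pi$. Fox's bound~\cite{jfox} (quoted just before the corollary in the excerpt) gives $c_\pi \in 2^{\fO(d)}$, so $\log \log c_\pi \in \fO(\log d)$. Then, in the interesting regime $k \geq 2d$,
\[
\frac{1}{\lfloor \log_{2d} k \rfloor + 1} \;\geq\; \Omega\!\left(\frac{\log d}{\log k}\right) \;\geq\; \Omega\!\left(\frac{\log\log c_\pi}{\log k}\right),
\]
while for $k < 2d$ the left-hand side equals $1$, which trivially dominates. Combining with the lower bound from \cref{p:k-server-tww-lb} yields the claimed $\Omega(d^{-2}k^{-3}) \cdot n^{\Omega(\log \log c_\pi / \log k)}$.

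I do not anticipate a real obstacle, as the whole statement is a direct consequence of \cref{p:k-server-tww-lb} together with \cref{lem:canonical-grid-tww,obs:tww}. The only care-point is the \emph{direction} of the asymptotic conversion: one needs an \emph{upper} bound on $c_\pi$ so that $\log\log c_\pi$ is not too large---and Fox's theorem supplies exactly that. Had only a lower bound on $c_\pi$ been available, the exponent conversion could not be carried out in the required direction.
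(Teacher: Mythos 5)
Your proposal is correct and matches the paper's own argument: the paper likewise takes the twin-width-$d$ construction from \cref{p:k-server-tww-lb}, notes it avoids the canonical $(d+1)\times(d+1)$ grid permutation (via \cref{lem:canonical-grid-tww} and monotonicity of twin-width), and uses Fox's bound $c_\pi \in 2^{\fO(d)}$ to rewrite the exponent. Your extra remarks on the size-$(d+1)^2$ versus $d^2$ reparametrization and on needing an \emph{upper} bound on $c_\pi$ are exactly the right care-points.
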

Note that the corresponding upper bound (\cref{p:k-server-tww}) is $k^{\fO(1)} \cdot n^{\fO( \log c_\pi / \log k)}$. \Cref{p:k-server-tww-conseq} implies that the dependence on $\pi$ in the exponent of $n$ is necessary, and we cannot hope for an equivalent of \cref{p:k-server-sep-ub} for every avoided pattern.

\medskip

Our second construction is specific to $\Av(231)$ and its symmetries. 

\begin{restatable}{theorem}{restateKServerAvLB}\label{p:k-server-312-lb}
	For each $n, k$ there is an input $X \in [0,1]^n$ avoiding 231 or its symmetries such that $\OPT_k \in \Omega(4^{-k} \cdot n^{1/k})$.
\end{restatable}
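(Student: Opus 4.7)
The plan is to construct, for each $k$ and $n$, a $231$-avoiding request sequence $X \in [0,1]^n$ whose $k$-server cost is $\Omega(4^{-k} n^{1/k})$, via a recursive construction aligned with the canonical decomposition of $\Av(231)$, followed by an inductive pigeonhole argument. This is the same high-level inflation paradigm sketched in the overview immediately preceding the statement, but specialized to a class that does not admit free inflation: for instance inflating $21$ by $(12, 1)$ already yields $231$, so inner pieces must be placed carefully using the ``max-split'' decomposition $L \cdot \{m\} \cdot R$ (with $L$-values strictly below $R$-values) rather than arbitrary skew sums.

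For the base case I would take the ``comb'' permutation $\sigma_m = (m, 1, m{-}1, 2, m{-}2, 3, \dots)$. Its odd-indexed values form a strictly decreasing sequence and even-indexed values a strictly increasing one, so any $v_a < v_b$ with $a<b$ puts $v_a$ at an even position, making it impossible for a later value to be smaller than $v_a$; hence $\sigma_m \in \Av(231)$. Embedded into $[0,1]$ by dividing values by $m$, $\sigma_m$ has total variation $\Theta(m)$, so $\OPT_1(\sigma_m) \in \Omega(m)$. I then define $\pi^{(j)}$ inductively from $\pi^{(j-1)}$ by arranging $m$ scaled copies of $\pi^{(j-1)}$ into an outer shape that follows the comb-like interleaving of $\sigma_m$, with sub-ranges separated by ``max'' entries chosen so that the canonical $\Av(231)$ max-split structure is respected at every level; this preserves $231$-avoidance by induction because only between-block orderings allowed by the decomposition are used. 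Setting $m = \Theta(n^{1/k})$ makes $|\pi^{(k)}| = \Theta(n)$.

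The cost lower bound is then proved by induction on $j$, establishing a recursion of the form
\[
\OPT_k\bigl(\pi^{(j)}\bigr) \;\geq\; \tfrac{m^{1/k}}{4}\,\OPT_k\bigl(\pi^{(j-1)}\bigr),
\]
which unrolls to the target $\Omega(4^{-k}\,n^{1/k})$ once combined with the base cost. The inductive step partitions time into the $m$ epochs dictated by the top-level outer shape, and classifies each epoch-boundary by which top-level $y$-subinterval contains which server. Because the comb-like interleaving repeatedly forces the next epoch to demand a different $y$-window than the previous, at least a constant fraction of the epochs start in a configuration where relocating enough servers into the upcoming window already costs $\Omega(m^{1/k})$, while the remaining epochs pay the inductive cost of the nested sub-instance.

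The main obstacle is the pigeonhole/charging step that converts local ``wrong configuration'' epochs into a quantitative cost lower bound, robust against an offline adversary who can pre-position servers and amortize moves across epochs. One has to set up a potential function measuring the misalignment between the current server configuration and the upcoming $y$-window, and track it through each epoch boundary so that any cheap amortization is compensated by increased potential (and hence forced extra cost later). The crucial ingredient is the rigidity of $\Av(231)$: the max-split decomposition guarantees disjoint $y$-windows per sub-block with widely separated gaps, so every cross-window server move is charged its full length and cannot be hidden inside a ``$231$-bridge''. Tuning the per-level loss to a factor of $\tfrac14$ is what determines the constant in $4^{-k}$; sharpening this loss constant to match the $\fO(k \cdot n^{1/k})$ upper bound from \cref{p:k-server-231-ub} is the natural open direction.
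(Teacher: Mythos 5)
Your construction is essentially the right one, but note an ambiguity that matters: in the paper's sequence only the \emph{low} blocks are inflated with scaled copies of the previous level, while the ``high'' separators stay single points ($Y_0 \circ (z_0) \circ Y_1 \circ (z_1) \circ \cdots$ with $Y_i \subset [0,\tfrac14]$ and $z_i \in [\tfrac34,1]$). If you place full copies of $\pi^{(j-1)}$ at the high positions of the comb as well, any ascent inside a high copy followed by a later, lower value is an occurrence of $231$, so avoidance breaks. Your phrase ``sub-ranges separated by max entries'' can be read as the correct asymmetric construction, but you should state it explicitly, since it is exactly this asymmetry (and the resulting far-away singletons) that drives the lower bound.

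The genuine gap is in the cost induction. Your recursion $\OPT_k(\pi^{(j)}) \ge \tfrac{m^{1/k}}{4}\,\OPT_k(\pi^{(j-1)})$ keeps the number of servers fixed on both sides, and it cannot yield the theorem: the base comb is served by $k \ge 2$ servers at cost $\fO(1)$, so unrolling gives only about $m^{(k-1)/k} 4^{-(k-1)} \approx n^{(k-1)/k^2}$ with $m = \Theta(n^{1/k})$, not $n^{1/k}$; moreover a single ``relocation'' step can never cost $\Omega(m^{1/k})$, since the line segment has diameter $1$, and with all $k$ servers available each nested copy \emph{can} be served cheaply, so no per-level multiplicative loss of this form holds. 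The correct argument (the paper's \cref{p:k-server-231-constr}) couples the recursion level to the \emph{number of servers}: for each of the $m$ low blocks, either the block is ``saturated'' (touched by all $k$ servers), in which case a server must immediately afterwards travel to the isolated high point $z_i$, paying at least $\tfrac12$; or it is touched by at most $k-1$ servers, in which case the inductive bound for $k-1$ servers applies to the $\tfrac{1}{4m}$-scaled copy, contributing $\tfrac{1}{4m}\cdot\tfrac{m}{4^{k-1}} = 4^{-k}$. Since these per-block charges live on disjoint blocks (and the forced trip to $z_i$ is attributable to that block alone), summing $\min(\tfrac12 - 4^{-k},\,4^{-k})$-type contributions over the $m$ blocks gives cost at least $m/4^k$ outright; the amortization/potential-function machinery you flag as the ``main obstacle'' is not needed and is precisely the step your proposal leaves unresolved.
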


This shows that \cref{p:k-server-231-ub} is tight up to a factor of $\fO(k^2 \cdot 4^k)$. Further, \cref{p:k-server-sep-subclasses-ind} reveals that \cref{p:k-server-312-lb} is, in a sense, best possible, since for every proper subclass of $\Av(231)$ or its symmetries, $\OPT_k(X)$ is already bounded.



\subsection{Euclidean TSP}

In this subsection we consider the following problem: Given a set $P \subset [0,1]^2$ of $n$ points, find a \emph{tour} that visits each point in $P$. Let us denote the shortest euclidean length of such a tour as $\TSP(P)$.

It is well-known that $\TSP(P) \in \fO(\sqrt{n})$~\cite{beardwood1959shortest}.
For an easy argument, consider a uniform $\sqrt{n} \times \sqrt{n}$ grid inside $[0,1]^2$, and observe that traversing the grid points and routing each input point to the nearest grid point has total cost $\fO(\sqrt{n})$.

This bound is tight as the example of the $n$ grid points itself shows, since each point must be routed to a neighbor at cost at least $1/\sqrt{n}$; in fact, a uniform random set of $n$ points has w.h.p.\ cost $\Theta(\sqrt{n})$~\cite{beardwood1959shortest}.

The following warm-up example motivates the study of pattern-avoiding point sets. 

\begin{observation}
	If $P \subset [0,1]^2$ is $k$-increasing, then $\TSP(P) \in \fO(k)$.
\end{observation}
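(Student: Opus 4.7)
The plan is to decompose $P$ into monotone chains and traverse each chain efficiently. Since $P$ avoids $D_{k+1}$, the well-known pigeonhole/Dilworth argument (or Mirsky/Erdős-Szekeres) partitions the underlying permutation into $k$ increasing subsequences, giving a partition $P = C_1 \cup C_2 \cup \cdots \cup C_k$ where each $C_i$ is \emph{increasing}, meaning its points can be listed as $p^{(i)}_1, \ldots, p^{(i)}_{n_i}$ with both coordinates nondecreasing.

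For a single such chain in $[0,1]^2$, I would observe that visiting the points in order gives a path of length at most $2$: using $\sqrt{a^2+b^2} \leq a+b$ for $a,b \geq 0$,
\[
	\sum_{j=1}^{n_i-1} \bigl\| p^{(i)}_{j+1} - p^{(i)}_j \bigr\| \;\leq\; \sum_{j=1}^{n_i-1} \bigl( \Delta x_j + \Delta y_j \bigr) \;\leq\; 1 + 1 \;=\; 2,
\]
because the $\Delta x_j$ and $\Delta y_j$ are all nonnegative and each sum telescopes inside the unit interval.

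Then I would stitch the chains together into one tour. Connect the endpoint of chain $C_i$ to the starting point of chain $C_{i+1}$ for $i = 1, \ldots, k-1$, and close the tour by connecting the last endpoint back to the first. Each connecting segment has length at most $\operatorname{diam}([0,1]^2) = \sqrt{2}$, so the total connection cost is at most $k\sqrt{2}$. Adding the in-chain cost of at most $2k$ yields a tour of length at most $(2+\sqrt{2})k \in \fO(k)$, as desired.

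There is essentially no technical obstacle; the only thing to be careful about is the decomposition step, which follows from the standard fact that avoiding $D_{k+1}$ is equivalent to being expressible as the interleaving of $k$ increasing sequences (mentioned explicitly in \S\,\ref{sec2}). Everything else is a triangle-inequality bookkeeping argument.
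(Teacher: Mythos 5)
Your proof is correct and follows essentially the same route as the paper: partition the $k$-increasing set into $k$ increasing chains, bound each chain's path length by $2$ via the telescoping $L_1$ argument, and pay $\fO(1)$ per connection between chains, for a total of $\fO(k)$. The only difference is bookkeeping of constants (e.g., $\sqrt{2}$ per connecting segment), which does not affect the claim.
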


To see this, partition $P$ into $k$ increasing subsets, connect each subset by a path (of length at most $2$), and connect the paths to each other at a further cost of at most $k$.


Let us attempt to generalize the observation to arbitrary avoided patterns. We sketch a simple but suboptimal argument first. Let $P \subset [0,1]^2$ be a set of $n$ points avoiding $\pi$. Partition $[0,1]^2$ into $\frac{\sqrt{n}}{c_\pi} \times \frac{\sqrt{n}}{c_\pi}$ equal square cells. Observe that at most $\sqrt{n}$ of these cells have any point in them. Otherwise, by \cref{p:marcus-tardos-with-limit-non-square} we would have $|\pi|$ nonempty cells forming the pattern $\pi$.

Construct a tour recursively, by constructing a tour of each nonempty cell, and connecting these mini-tours by a tour of a representative in each. 
Assume for simplicity that the $n$ points are distributed equally among the nonempty cells (this can be shown to maximize the bound via Jensen's inequality; in this proof sketch we omit the formal justification). 
The total length $t(n)$ can then be bounded as: $$t(n) ~~\leq~~ t\left(\sqrt{n}\right) + c_\pi \cdot t\left(\frac{n}{\sqrt{n}}\right) ~~=~~ (c_\pi + 1) \cdot t \left(\sqrt{n}\right).$$

The recurrence already gives a non-trivial bound of the form $\fO(\log^{\log_2{(c_\pi+1)}}{n})$. Our main theorem (all proofs in \cref{sec:mst}), however, will yield a significant improvement.


\begin{restatable}{theorem}{restateThmGeneralMST}
	\label{thm:general-mst}
	Let $\pi$ be a permutation with Füredi-Hajnal limit $c_\pi$. Then, for every $\pi$-avoiding point set $P \subset [0,1]^2$ in general position, we have $\TSP(P) \in \fO(c_\pi \log |P| )$.
\end{restatable}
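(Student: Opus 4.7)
The plan is to leverage the distance-balanced merge sequence from \S\,\ref{sec4}, which for any $\pi$-avoiding point set $P \subset [0,1]^2$ of size $n$ yields a merge tree $T$ of depth $L = \fO(\log n)$ while the associated red graph has maximum degree at most $d = \fO(c_\pi)$ at every intermediate step. I would assemble the tour recursively over $T$ and bound its length level-by-level.

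First, I would build a Hamiltonian path of $P$ by induction over $T$. At each internal node $v$ with merged rectangle $R_v$ and children $u_1, u_2$, concatenate the inductively-constructed Hamiltonian paths $\mathcal{H}_{u_1}, \mathcal{H}_{u_2}$ by a single segment joining an endpoint of one to an endpoint of the other. Since both endpoints lie inside $R_v$, this connecting segment has length at most $\mathrm{diam}(R_v) \leq \mathrm{width}(R_v) + \mathrm{height}(R_v)$. Closing the resulting Hamiltonian path of $P$ into a tour costs a further $\fO(1)$, so
\[
\TSP(P) \;\leq\; \sum_{v \text{ internal}} \bigl( \mathrm{width}(R_v) + \mathrm{height}(R_v) \bigr) + \fO(1).
\]

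Second, the key lemma I would establish is that for each depth $j \in [1, L]$,
\[
\sum_{v \text{ at depth } j} \mathrm{width}(R_v) \;\leq\; d, \qquad \sum_{v \text{ at depth } j} \mathrm{height}(R_v) \;\leq\; d.
\]
To prove it, schedule the merges in a bottom-up fashion so that at the end of phase $j$ all depth-$j$ rectangles coexist as members of the current family $\cR_j$; the distance-balanced decomposition is what permits such a schedule while preserving the red-graph degree invariant. Then I would apply the following geometric observation: in any rectangle family $\cR \subset [0,1]^2$ whose red graph has maximum degree $< d$, at most $d$ rectangles can contain any fixed $x$-coordinate $x_0$ in their $x$-projection, since such rectangles are pairwise non-homogeneous and hence form a clique of size at most $d+1$ in the red graph. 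Integrating over $x_0 \in [0,1]$ gives $\sum_{R \in \cR} \mathrm{width}(R) \leq d$, and the same argument bounds the sum of heights. Since the depth-$j$ internal rectangles are a subset of $\cR_j$, the lemma follows. Summing over the $L$ depths yields $\TSP(P) \leq 2 d L + \fO(1) = \fO(c_\pi \log n)$.

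The main obstacle will be the level-by-level scheduling step: it must be verified that the distance-balanced construction of \S\,\ref{sec4} produces a merge tree of depth $\fO(\log n)$ \emph{and} that merges can be reordered by tree depth without violating the red-graph degree bound in the intermediate configurations. A naive merge sequence (even one with bounded twin-width) may have depth $\Omega(n)$, so the whole bound hinges on the balanced nature of the decomposition; controlling the interplay between the combinatorial width $d$ and the geometric aspect ratios/sizes enforced by balancing is the technical heart of the argument. Once that is in place, both the per-level sum and its integration over $L = \fO(\log n)$ levels proceed straightforwardly, and the same reasoning transfers to the MST, Steiner tree, and nearest-neighbor variants mentioned after \Cref{thm3}.
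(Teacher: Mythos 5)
Your reduction to a sum of rectangle dimensions is the right instinct, and your integral-geometry observation (any vertical line meets at most $d$ pairwise non-homogeneous rectangles of a single family, hence $\sum_{R \in \cR} \mathrm{width}(R) \le d$ for any family whose red graph has degree $< d$) is correct. But the two load-bearing claims of your plan are exactly the ones you flag and do not prove, and neither is supplied by the paper's decomposition. First, \cref{thm:decomposition} gives no bound on the \emph{depth} of the merge tree: a distance-balanced merge sequence is free to grow a single rectangle through a chain of $\Omega(n)$ successive merges (for instance, on a monotone point set one can keep absorbing the contents of the neighboring column each time the gridding coarsens, exactly as in the column-by-column sequence of \cref{lem:canonical-grid-tww}), so $L = \fO(\log n)$ is not available and $2dL$ can be linear in $n$. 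Second, even granting a shallow tree, your per-level lemma needs all depth-$j$ rectangles to \emph{coexist} in some family with red degree $< d$. The depth-$j$ nodes of the merge tree are in general not a subfamily of any $\cR_i$ (a deep node is created only after shallow nodes elsewhere have already been merged away), and reordering a $d$-wide merge sequence level-by-level does not preserve $d$-wideness, since the red degree is a property of the intermediate families, which change under rescheduling. So the "technical heart" you defer is a genuine gap, not a verification.

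The paper avoids both issues by never looking at levels or depth: it replays the merge sequence in its given order, connects an arbitrary point of $Q_1$ to an arbitrary point of $Q_2$ at each merge (yielding a spanning tree, with $\TSP \le 2\MST$), and bounds the edge created at step $i$ by the dimensions of the newly created rectangle. The distance-balanced properties give a \emph{per-step} size bound relative to the current gridding, namely width at most $20/p_{i+1} \le 10d/(n-i)$ by \cref{decomp:rect-size,decomp:grid-size}, and summing over all $n-1$ steps is a harmonic sum, $\fO(d \log n)$ (\cref{p:bal-rect-size}). In other words, the $\log n$ comes from the harmonic series over creation steps, not from a balanced tree; if you want to salvage your level-by-level scheme you would have to construct a new decomposition that is simultaneously depth-balanced and width-bounded on every level antichain, which is substantially more than what \cref{thm:decomposition} asserts and is unnecessary for the stated bound.
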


We give a lower bound that shows the result to be asymptotically tight for essentially all avoided patterns. The proof is by a ``balanced tree'' point set.

\begin{restatable}{theorem}{restateMSTSepLowerBound}\label{p:mst-sep-lb}
	For each $n$, there is a 231-avoiding point set $P \subset [0,1]^2$ of size $n$ with $\TSP(P) \in \Omega( \log n )$.
\end{restatable}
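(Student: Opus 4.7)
The plan is to construct, for each $k$, a 231-avoiding point set $P_k \subset [0,1]^2$ of size $n_k = 2^{k+1}-1$ with $\TSP(P_k) \in \Omega(k) = \Omega(\log n_k)$; the result for general $n$ follows by padding with a few nearly-identical dummy points. Set $\pi_0 = (1)$ and, for $k \geq 1$,
\[
\pi_k \;=\; \pi_{k-1} \;\; n_k \;\; (\pi_{k-1} + n_{k-1}),
\]
where $\pi_{k-1}+n_{k-1}$ denotes shifting each entry of $\pi_{k-1}$ up by $n_{k-1}$. Let $P_k = \{(i/n_k,\pi_k(i)/n_k) : i \in [n_k]\}$. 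Geometrically this decomposes as $P_k = L_k \cup \{m_k\} \cup R_k$: a uniformly scaled copy $L_k$ of $P_{k-1}$ (scale factor $n_{k-1}/n_k \to 1/2$) in the lower-left, a uniformly scaled copy $R_k$ in the middle-right with $y$-coordinates strictly above those of $L_k$, and a ``max'' point $m_k$ sitting just below $y=1$. A structural induction confirms that $\pi_k$ avoids $231$: any 231-pattern spanning different sub-parts would conflict with the value hierarchy $L_k < R_k < m_k$.

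For the lower bound, I would argue by induction on $k$ that $\MST(P_k) \geq M_{k-1} + c$ for a constant $c > 0$, which telescopes (absorbing lower-order corrections of total size $O(1)$) to $\MST(P_k) \in \Omega(k)$; the theorem then follows from $\TSP(P_k) \geq \MST(P_k)$. Two geometric observations drive the inductive step. First, the point $m_k$ lies at euclidean distance bounded below by an absolute constant from every other point of $P_k$: its nearest neighbor is the max of $R_k$, at distance $\sqrt{(n_{k-2}+1)^2 + 1}/n_k \to 1/4$, and every other candidate is at distance $\geq 1/2$. Hence every spanning tree of $P_k$ contains at least one edge incident to $m_k$ of length $\Omega(1)$. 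Second, since $L_k$ and $R_k$ are \emph{uniform} scalings of $P_{k-1}$ by factor $n_{k-1}/n_k$, we have $\MST(L_k)+\MST(R_k) = (1-1/n_k)\,M_{k-1}$; the $1/n_k$ loss per level telescopes to $O(1)$ because $M_j = O(j)$.

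The hard part is making the step $\MST(P_k) \geq \MST(L_k)+\MST(R_k) + c$ rigorous. A naive argument splits the MST into $L$-internal, $R$-internal, and $m$-adjacent edges, and works cleanly if the MST restricted to $L_k$ (resp.\ $R_k$) is a spanning tree. It need not be: very short cross-edges between $L_k$ and $R_k$ (of length $\Theta(1/n_k)$) allow the MST to break $L_k$ or $R_k$ into several components and thereby shortcut some internal MST edges. The resulting per-component ``savings'' are bounded by the diameters of the sub-boxes $L_k, R_k$, which shrink as $(n_{k-1}/n_k)^k \to 2^{-k}$ down the recursion; a careful bookkeeping (using the identity $\deg_{m}(\MST) + |\text{$L$--$R$ edges}| = c_L + c_R$ between the degree at $m_k$ and the numbers $c_L, c_R$ of components of the MST restriction) shows that these savings sum to $O(1)$ across all $k$ levels and cannot erase the $+c$ contribution of the constant-length $m_k$-edge at each level.
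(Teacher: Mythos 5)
Your construction is fine (it is 231-avoiding, and it is the same balanced-binary-tree idea as the paper's construction, where at each of the $\Theta(\log n)$ scales one distinguished point lies at distance proportional to that scale from everything else), but the lower-bound argument has a genuine gap at exactly the step you flag as ``the hard part''. The inductive inequality $\MST(P_k)\ge \MST(L_k)+\MST(R_k)+c$ is not established by your sketch. If the global MST restricted to $L_k$ (resp.\ $R_k$) falls into $c_L$ (resp.\ $c_R$) components, the correct comparison is $\MST(L_k)\le(\text{restricted weight})+(c_L-1)\cdot\mathrm{diam}(L_k)$, so the per-level ``savings'' are of order $(c_L+c_R)\cdot\mathrm{diam}$, and at the split of $P_k$ into $L_k,R_k,m_k$ the relevant diameters are $\Theta(1)$ --- not $(n_{k-1}/n_k)^k\to 2^{-k}$ as you assert; the $2^{-k}$ scaling applies only to the deepest copies, not to the top-level step of the induction. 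Your identity $\deg_{m_k}+|\{L\text{--}R\text{ edges}\}|=c_L+c_R$ is true (contract the components and count edges of the resulting tree), but it gives no upper bound on $c_L+c_R$. Such a bound is genuinely needed: the boxes of $L_k$ and $R_k$ nearly touch at a corner, and points of both copies accumulate near that corner at all scales (e.g.\ the rightmost point of $L_k$ lies within $O(\log n_k/n_k)$ of the leftmost point of $R_k$), so a priori the MST may use many short cross edges, breaking $L_k$ and $R_k$ into many components whose reconnection cost is not controlled by your bookkeeping. As written, the savings are not shown to sum to $O(1)$, and the induction does not go through.

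The paper avoids this entirely: it never performs MST surgery, but instead bounds the nearest-neighbor sum, using $\TSP(P)\ge\MST(P)\ge\tfrac12\NN(P)$ and showing pointwise, via a short descendant/unrelated case analysis in its recursive tree structure, that every level-$\ell$ point has nearest neighbor at distance $\Omega(2^{\ell}/n)$; summing over levels gives $\Omega(n\log n)$ on the integer grid, i.e.\ $\Omega(\log n)$ after scaling. The cleanest repair of your proof is to run the same argument on your set: check (by a case analysis over own copy, parent's $m$-point, and foreign copies, which only approach a copy's box near its bottom-left and top-right corners) that the distinguished point $m_j$ of each level-$j$ copy has no neighbor in all of $P_k$ closer than $\Omega(n_{j-1}/n_k)$, while the low-level points crowding the $L$--$R$ corners have tiny nearest-neighbor distances but contribute negligibly; then $\NN(P_k)\ge\sum_j 2^{k-j}\cdot\Omega(n_{j-1}/n_k)\in\Omega(k)$, which yields the theorem without any component bookkeeping. (Two minor slips that do not affect this route: the second-nearest candidate to $m_k$ is at distance about $3/8$, not $\ge 1/2$; and the padding to arbitrary $n$ should add the dummy points as a tiny increasing cluster in the bottom-left corner so that 231-avoidance is preserved.)
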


Observe that each non-monontone pattern $\pi$ contains 231 or one of its symmetries (132, 213, 312). Hence, we have the dichotomy that if our point set $P$ avoids a monotone pattern, then $\TSP(P)$ is constant, and if $P$ avoids a non-monotone pattern $\pi$, then $\TSP(P) \in \Theta_\pi( \log n)$ in the worst case.

We also show a sharp dichotomy between $\Av(231)$ and its subclasses. 

\begin{restatable}{theorem}{restateMSTSepSubclasses}\label{p:mst-sep-subclasses-ind}
	Let $\pi \in \Av(231)$. Then, for every point set $P \subset [0,1]^2$ that avoids 231 and $\pi$, we have $\TSP(P) \le 12 |\pi|$.
\end{restatable}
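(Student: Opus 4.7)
The plan is to induct on $|\pi|$, using the Cartesian-tree (top-element) decomposition of 231-avoiding objects already exploited in \cref{sec:k-server-231-ub}. The base case $|\pi| = 1$ is immediate, since then $P$ must be empty. For the inductive step, write $\pi = L_\pi \cup \{m_\pi\} \cup R_\pi$, where $m_\pi$ is the maximum of $\pi$ and $L_\pi, R_\pi \in \Av(231)$ are the subpatterns to its left and right, each of strictly smaller size. Decompose $P = L \cup \{p^*\} \cup R$ analogously via the topmost point $p^*$. A direct 231-obstruction argument (if some $a \in L$ had $a.y > b.y$ for some $b \in R$, then $a,\,p^*,\,b$ would form a 231 pattern) shows that $L$ lies in an axis-aligned sub-rectangle $B_L$ in the bottom-left of the bounding box of $P$, $R$ lies in a disjoint sub-rectangle $B_R$ in the top-right, and $p^*$ sits at the apex.

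The core combinatorial observation is that either $L$ avoids $L_\pi$ or $R$ avoids $R_\pi$; otherwise, copies of $L_\pi$ inside $L$ and $R_\pi$ inside $R$ combined with $p^*$ would realize $\pi$ inside $P$, contradicting $\pi$-avoidance. In the favorable case where both sides avoid their respective subpatterns, I would apply the inductive hypothesis to $L$ and $R$ independently and connect the two sub-tours via $p^*$ at constant extra cost (bounded by the diameter of the unit box), obtaining a tour of total length at most $12(|L_\pi| + |R_\pi|) + O(1) \leq 12|\pi|$ once the constants are tuned.

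The more delicate case is when only one side, say $L$, avoids its subpattern. I would handle this by iterating the decomposition along a root-to-leaf \emph{spine} inside $R$'s Cartesian tree: at each spine node, one of the two branching subtrees avoids a smaller subpattern of $\pi$ and is handled by induction with the smaller pattern, while the other continues the spine. Because all off-spine subtrees lie in pairwise disjoint sub-rectangles whose widths and heights telescope to those of the original bounding box, the sum of their inductive costs remains $O(|\pi|)$ without multiplicative blow-up. The main obstacle will be bounding the spine traversal plus the detour jumps into each off-spine subtree: the spine is $y$-monotone (each heavy child has smaller $y$ than its parent) but its $x$-coordinates may zigzag, so a careful charging argument (absorbing each horizontal move into the width of a previously unvisited sub-rectangle of the nested Cartesian decomposition) will be needed to keep the total cost within $12|\pi|$ rather than a larger polynomial or logarithmic factor.
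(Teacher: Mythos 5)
Your decomposition and dichotomy are sound: splitting $P$ at its topmost point and $\pi$ at its maximum, the 231-obstruction argument and the claim ``$L$ avoids $L_\pi$ or $R$ avoids $R_\pi$'' are both correct (and are the analogue of the paper's split at the \emph{leftmost} point with $\pi = (p)\circ\alpha\circ\beta$). The genuine gap is in the case you yourself flag as delicate, which is in fact the generic case: when only one side, say $L$, avoids a proper subpattern, the other side $R$ still only avoids $\pi$ itself, and your flat induction hypothesis ``$\TSP \le 12|\pi'|$ inside the unit square'' cannot absorb the resulting recursion. Unrolling along your spine produces up to $\Theta(n)$ off-spine subtrees, each avoiding a pattern of size $\le |\pi|-1$; summing the flat bound over them gives $\Theta(n|\pi|)$, not $O(|\pi|)$. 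The telescoping you invoke (``widths and heights telescope'') only helps if the inductive bound is itself \emph{proportional to the dimensions of the bounding box} of each piece — but that is a strictly stronger statement than the one you are inducting on, and you never formulate or prove it. Likewise, the ``charging argument'' for the spine traversal and the detours from each spine apex $p^\ast$ down into the off-spine boxes is left as an acknowledged obstacle; as set up (connections routed through $p^\ast$), each detour costs on the order of the \emph{current spine box's} height, and these do not telescope, so it is not clear the construction stays within $12|\pi|$ at all.

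The paper closes exactly this gap by strengthening the induction hypothesis to be scale-aware: for $P$ in a $w\times h$ box avoiding $231$ and $\pi$, there is a spanning tree on $P$ together with the four corners of the box of total weight at most $2|\pi|(w+h)$; this yields $\TSP(P) \le 3\,\MStT(P) \le 12|\pi|$ at the top level. With this formulation one inducts on $n+|\pi|$ and never needs a spine: the smaller-pattern side $A$ is charged $2|\alpha|(w_A+h_A)$, the same-pattern side $B$ (which has strictly fewer points) is charged $2|\pi|(w_B+h_B)$, the two sub-boxes share a corner so gluing costs only $w_A+h_A$, and the slack $2(|\pi|-|\alpha|)(w_A+h_A) \ge 2(w_A+h_A)$ pays for it, giving $2|\pi|(w+h)$ overall. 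So the missing idea in your proposal is precisely this perimeter-scaled strengthening (plus corner-based gluing instead of routing through $p^\ast$); once you add it, your max-based decomposition would also go through, and the entire spine/charging machinery becomes unnecessary.
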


Finally, we show a lower bound implying that in general, an (almost) linear dependence on twin-width cannot be avoided. 


\begin{restatable}{theorem}{restateMSTTwwLB}\label{p:mst-tww-lb}
	For each $n$ and $d \ge 2$, there is a set $P \subset [0,1]^2$ of $n$ points with twin-width~$d$ such that $\TSP(P) \in \Omega(\frac{d}{\log d} \log n)$.
\end{restatable}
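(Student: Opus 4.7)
The plan is to construct $P$ by iterated inflation of the canonical $d\times d$ grid and to lower-bound $\TSP(P)$ via the Euclidean minimum spanning tree, which admits a clean recursion under this construction.

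Let $\sigma_d$ denote the canonical $d\times d$ grid permutation from \cref{sec2}, which by \cref{lem:canonical-grid-tww} has twin-width exactly $d$, realized as a point set $P_0 \subset [0,1]^2$ via its canonical coordinates. For $m \ge 1$, let $P_m$ be the point set corresponding to the inflation of $\sigma_d$ by $d^2$ copies of $P_{m-1}$; then $|P_m| = d^{2(m+1)}$, and iterating \cref{obs:tww} gives $\tww(P_m) = d$. For arbitrary $n$, pick $m$ with $d^{2(m+1)} \le n \le 2\,d^{2(m+1)}$ (possible since $d \ge 2$), and form the geometric direct sum $P_m \oplus I_{n-|P_m|}$ by placing $P_m$ inside $[0,\tfrac{1}{2}]^2$ and a monotone tail inside $[\tfrac{1}{2},1]^2$. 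By \cref{obs:tww} the twin-width stays at $d$; by restricting any Hamiltonian tour of the combined set to the $P_m$-part, the overall $\TSP$ can drop by at most a factor $1/2$ under this linear rescaling.

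The heart of the proof is an inductive bound $\MST(P_m) \ge c\cdot d\cdot(m+1)$ for some absolute constant $c > 0$. Combined with $\TSP \ge \MST$ and $m+1 = \Theta(\log n / \log d)$, this yields the claimed $\Omega(\frac{d}{\log d}\log n)$ bound. The base case $m = 0$ is immediate: a direct computation from the canonical coordinates shows that the $d^2$ points of $P_0$ have pairwise distances $\Omega(1/d)$, so any spanning tree uses at least $d^2 - 1$ edges each of length $\Omega(1/d)$, giving $\MST(P_0) = \Omega(d)$. For the inductive step, $P_m$ partitions into $d^2$ \emph{clusters}, each a rescaled copy of $P_{m-1}$ lying inside a $1/d^2 \times 1/d^2$ box, and sitting at the position of the corresponding point of $\sigma_d$ in $[0,1]^2$. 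A similar computation from the canonical coordinates shows that for every $d \ge 2$ the minimum inter-cluster distance is $\Omega(1/d)$ while each cluster has diameter $O(1/d^2)$, so the clusters are \emph{well-separated}; consequently the MST of $P_m$ decomposes (via Kruskal) into the intra-cluster MSTs plus a spanning tree on the cluster graph. The intra-cluster contribution is $d^2 \cdot (1/d^2) \cdot \MST(P_{m-1}) = \MST(P_{m-1})$ by rescaling and induction, while the inter-cluster spanning tree contains $d^2 - 1$ edges of length $\Omega(1/d)$ and thus contributes $\Omega(d)$. Adding these gives the recurrence $\MST(P_m) \ge \MST(P_{m-1}) + \Omega(d)$, closing the induction.

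The only non-routine step will be verifying the well-separation claim with explicit constants, since for small $d$ (in particular $d = 2$) it just barely holds. This requires a direct calculation from the explicit coordinates $((i-1)d + (d-j+1), (j-1)d + i)$: one shows that the gap between an inflated sub-block and the neighbouring sub-block inside an adjacent cell is at least $(d-1)/d^2 = \Omega(1/d)$ in at least one coordinate, which strictly dominates the block diameter $\sqrt{2}/d^2$, so that the MST decomposition is always valid.
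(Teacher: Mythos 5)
Your construction is the same as the paper's (iterated inflation of the canonical $d\times d$ grid, realized with blocks of side $1/d^2$ at the grid positions), but your inductive step is genuinely different: you want a strong well-separation property (every inter-block distance exceeds every intra-block distance) so that Kruskal splits $\MST(P_m)$ into the block MSTs plus $d^2-1$ connecting edges of length $\Omega(1/d)$. The paper instead works with Steiner trees and charges the boundaries of the $d^2$ boxes (total extra length $4$) to glue the inside and outside parts of an optimal tree, which avoids any separation requirement but forces $d\ge 40$ and a fallback to the $231$-avoiding construction (\cref{p:mst-sep-lb}) for small $d$. Your route is cleaner when it works, and for $d\ge 3$ it does: the minimum gap between blocks is $(d-1)/d^2$ (attained for grid points with $|i-i'|+|j-j'|=1$, where the blocks are aligned in one coordinate), and $(d-1)/d^2\ge 2/d^2>\sqrt2/d^2$, which bounds the block diameter, so the Kruskal decomposition and the recurrence $\MST(P_m)\ge \MST(P_{m-1})+\Omega(d)$ are sound.

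The gap is exactly at $d=2$, the case you flag as delicate and then dispose of with a false inequality: $(d-1)/d^2=\tfrac14$ does \emph{not} strictly dominate the block diameter bound $\sqrt2/d^2=\tfrac{\sqrt2}{4}\approx 0.354$. Worse, the true diameter of the iterated set is not obviously below $\tfrac14$ either (the $x$- and $y$-extents of $P_m$ for $d=2$ tend to the full block width, so a block's diameter is roughly $\tfrac14\cdot\mathrm{diam}(P_{m-1})$ with $\mathrm{diam}(P_{m-1})$ slightly above $1$), so whether well-separation holds at $d=2$ requires a careful computation with the exact point positions that you have not supplied; without it the claimed MST decomposition, and hence the induction, is unjustified for $d=2$. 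The fix is easy and worth stating: for any constant $d$ the target bound degenerates to $\Omega(\log n)$, so you can handle $d=2$ (or all $d$ below any fixed threshold) by the $231$-avoiding lower bound of \cref{p:mst-sep-lb} padded to twin-width $d$, exactly as the paper does for $d<40$, and keep your Kruskal argument for $d\ge 3$. A separate, minor slip: an integer $m$ with $d^{2(m+1)}\le n\le 2d^{2(m+1)}$ need not exist (consecutive powers jump by $d^2\ge 4$); take the largest $m$ with $d^{2(m+1)}\le n$ and pad as you describe, which still gives $m+1\in\Omega(\log n/\log d)$ and leaves the rest of your argument intact.
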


In particular, if $\gamma_d$ is a $d \times d$ grid permutation, then, for a worst-case $\gamma_d$-avoiding point set $P$ of $n$ points, $\TSP(P)$ is between $\Omega(\frac{d}{\log d} \log n)$ and $2^{\fO(d)} \log n$. 

\subsection{Symmetries and general position}

We remark that all three problems are invariant under reversal and complement. For the BST and $k$-server problems, however, reversal implies running the operation sequence backwards, which is only allowed in the offline case. In addition, the TSP problem is invariant under rotation, and the BST problem (in the satisfied superset formulation) under 90-degree rotation; the latter implies a swap between time- and keyspace-dimensions, thus less easily interpretable in the BST view. The restriction of the input to $[0,1]$, resp., $[0,1]^2$ is without loss of generality, as the offline algorithm can conceptually scale/shift the input. 


In our bounds for all three problems, we usually require the input to be in general position.
We sketch a way of extending our results to inputs that are not in general position. 

Let $P \subset [0,1]^2$ be a point set, not necessarily in general position, let $D$ be the minimum distance between two points in $P$, and let $0 < \varepsilon < D$. 
Construct $P'$ from $P$ by replacing every point $(x,y)$ with $(x+\varepsilon \cdot y, y + \varepsilon \cdot x)$.
Clearly, $P'$ is in general position; every ``row'' and every ``column'' of points is replaced by an increasing point set.
Strict inequalities between point coordinates are preserved, but pattern-avoidance is not.
However, it can be shown that if $P$ avoids a permutation $\pi$, then $P'$ avoids a permutation $\sigma$, obtained by inflating $\pi$ with $|\pi|$ copies of the permutation 21.
Note that for BST and $k$-server, the $x$-coordinates describe the temporal order of accesses/requests, and thus are already distinct; here, only the transformation of $y$-coordinates has any effect.

If $P$ corresponds to a $k$-server or TSP input, by choosing $\varepsilon$ appropriately, we can keep the cost of the transformed point set $P'$ arbitrarily close to the cost of $P$.  
Hence, \cref{p:k-server-tww,thm:general-mst} generalize to inputs not in general position; only the constant $c_\pi$ in the bounds is replaced. 

For the BST (and satisfied superset) problem, we use the same transformation. 
Let $X' \in [m]^m$ be the permutation access sequence obtained by perturbing the point set corresponding to $X \in [n]^m$. It can be shown that $\OPT(X') \in \fO(m)$ if and only if $\OPT(X) \in \fO(m)$ (see~\cite[\S\,E]{FOCS15}), thus generalizing Theorem~\ref{p:ass-tww}.


\section{Arborally satisfied superset}\label{sec:ass}

In this section, we prove:
\restateASSTwinWidth*

As defined in \S\,\ref{sec311}, if $P$ is a set of points and $x, y \in P$, then we call $x$ and $y$ \emph{connected} and write $x \connected{P} y$ if there is a monotone path $T$ connecting $x$ and $y$ that consists of axis-parallel line segments, and each corner of $T$ is contained in $P$. For two sets $X, Y$ we write $X \connected{P} Y$ if $x \connected{P} y$ for all $x \in (X \cap P)$, $y \in (Y \cap P)$. A set of points is called \emph{arborally satisfied} if every pair of points in $P$ is connected in $P$ (i.e., $P \connected{P} P$).

Let $\cR_1, \cR_2, \dots, \cR_n$ be a merge sequence of a point set $P$. For each rectangle family~$\cR_i$, let $G(\cR_i)$ be the set of all intersection points of lines $v$ and $h$, where $v$ is a vertical line containing a left or right side of a rectangle in $\cR_i$, and $h$ is a horizontal line containing the top or bottom side of a rectangle in $\cR_i$. In particular, note that $G(\cR_i)$ contains the corners of each rectangle in $\cR_i$.



We construct an arborally satisfied superset $A \supseteq P$ as follows. Let $A_0 = P$. Consider a step from $\cR_i$ to $\cR_{i+1}$ in the merge sequence where two rectangles $Q_1$ and $Q_2$ are merged into a rectangle $Q$. Let $A_i = A_{i-1} \cup (Q \cap G(\cR_i))$, and finally, $A = A_n$.
In words, we add to the point set each point in $Q$ that lies on the ``grid'' induced by rectangles in $\cR_i$ (including $Q_1$ and $Q_2$), if that point is not already present (see \cref{fig:ass-constr}).

%
%
%
%

\begin{figure}[h]
	\centering
	\begin{tikzpicture}[
		boxBG/.style={msrect,draw=none},
		boxFG/.style={msrect,fill=none, thick},
		bigboxBG/.style={fill, {white!90!black}},
		bigboxFG/.style={draw, thick},
		gridline/.style={gray},
		newpoint/.style={point, blue},
		oldpoint/.style={point, gray}
	]
		\def\bboxDist{0.4}
		\input{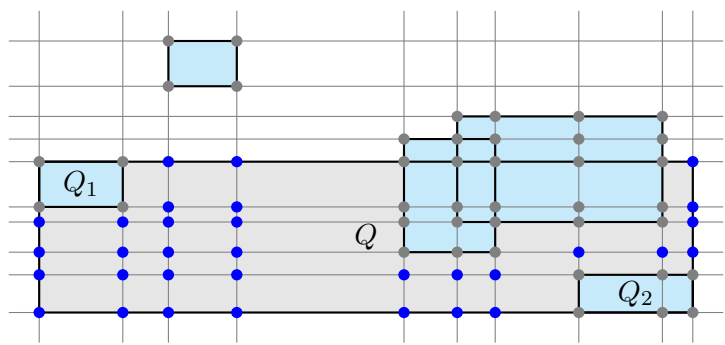}
	\end{tikzpicture}
	\caption{Sketch of points added in a single step in \cref{p:ass-tww}. Grey lines are the grid induced by rectangle sides, grey points are known to be present already, and blue points are new.}\label{fig:ass-constr}
\end{figure}

\begin{lemma}\label{p:ass-tww-invariants}
	For each $i \in \{0, 1, \dots, n\}$, the following two invariants hold:
	\begin{enumerate}[(i)]
		\item For each rectangle $Q \in \cR_i$, we have $(Q \cap G(\cR_i)) \subseteq A_i$.
		\label[iinv]{inv:corners}
		\item If two (not necessarily distinct) rectangles $Q, Q' \in \cR_i$ are not homogeneous, we have $Q \connected{A_i} Q'$.\label[iinv]{inv:sep-connected}
	\end{enumerate}
\end{lemma}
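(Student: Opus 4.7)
The plan is to prove both invariants simultaneously by induction on $i$. The base case is the initial state, where $\cR$ consists of the single-point rectangles $P$ and $A = P$: invariant~(i) is immediate since $\{p\} \cap G(\cR) = \{p\} \subseteq P$, while invariant~(ii) is vacuous for distinct singletons (homogeneous by general position) and trivial in the diagonal case $Q = Q'$.

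For the inductive step, suppose the current merge replaces $Q_1, Q_2 \in \cR$ with their bounding box $Q$, yielding the new family $\cR'$ and the updated set $A' := A \cup (Q \cap G(\cR))$. Invariant~(i) is cheap: the sides of $Q$ are inherited from those of $Q_1, Q_2$, so $G(\cR') \subseteq G(\cR)$, and for every $R \in \cR'$ the set $R \cap G(\cR')$ is already contained in $A$ when $R \ne Q$ (inductive hypothesis) or in $A'$ when $R = Q$ (by construction). Applying invariant~(i) inductively to $R \in \cR \setminus \{Q_1, Q_2\}$ shows that the new points $Q \cap G(\cR)$ contribute nothing strictly inside such an $R$, hence $R \cap A' = R \cap A$; together with the inductive form of invariant~(ii), this already yields invariant~(ii) for pairs $R, R' \in \cR'$ neither of which equals $Q$.

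The essential case for invariant~(ii) is when one of the paired rectangles is $Q$. I would first handle the diagonal sub-case $R = R' = Q$ (i.e., that $Q \cap A'$ is arborally satisfied) and then bootstrap: if $Q$ and $R'$ are non-homogeneous in $\cR'$, then since $Q$'s projections are the unions of those of $Q_1$ and $Q_2$, at least one of $Q_1, Q_2$ is non-homogeneous with $R'$ in $\cR$; the inductive hypothesis supplies a monotone connection from that rectangle to $R'$ within $A$, which we concatenate with a monotone path inside $Q \cap A'$. The routing toolkit for the diagonal sub-case consists of three ingredients: (a) every point $r \in A \cap Q$ lies in some $S \in \cR$, and invariant~(ii) for $S$ together with invariant~(i) yields monotone paths in $A$ from $r$ to each of the four corners of $S \cap Q$ (all grid points of $S \cap G(\cR) \subseteq A$); (b) the set $Q \cap G(\cR) \subseteq A'$ is a complete axis-aligned lattice inside $Q$, so any two of its points are joined by a monotone grid walk; (c) two old points lying in a common rectangle of $\cR$, or more generally in two non-homogeneous rectangles of $\cR$, are already connected by the inductive hypothesis.

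Combining (a)--(c), to connect arbitrary $p, q \in Q \cap A'$, I would fix the quadrant direction determined by the relative positions of $p$ and $q$, project each of them onto a grid point of $Q \cap G(\cR)$ via the corner of its owning rectangle in the chosen direction, and then join the two projections by a monotone walk through the grid. The main obstacle I anticipate is bookkeeping monotonicity across the concatenated subpaths: the routing from $p$ to its projection, through the grid, to $q$'s projection, and (in the bootstrap step) onward through $Q_1$ into $R'$ must all proceed in one consistent direction in both coordinates. What saves the argument is that every owning rectangle has four corners, and the inductive hypothesis grants monotone access from any interior point to grid points in all four quadrant directions; hence the projection direction can always be aligned with the globally fixed one.
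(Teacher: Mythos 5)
Your base case, your treatment of invariant (i) (including the observation that $G(\cR_{i+1}) \subseteq G(\cR_i)$), and your reduction showing $R \cap A_{i+1} = R \cap A_i$ for every old rectangle $R$ --- hence that pairs not involving the new rectangle $Q$ follow from the inductive hypothesis --- are correct and agree with the paper. The gap is in your bootstrap for pairs involving $Q$. The projections of $Q$ are \emph{not} the unions of the projections of $Q_1$ and $Q_2$; they are their interval hulls, since $Q$ is a bounding box. Hence a rectangle $R'$ can be homogeneous with both $Q_1$ and $Q_2$ and still be non-homogeneous with $Q$ (it can even be contained in $Q$): take $Q_1=\{(0,0)\}$, $Q_2=\{(5,5)\}$, $R'=\{(2,3)\}$ --- nothing in the definition of a merge sequence forbids merging two homogeneous rectangles, and a variant in which $Q_1,Q_2$ are non-homogeneous is just as easy (place $R'$ in the gap between the $x$-projections, above both). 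For such an $R'$ the inductive hypothesis gives no connection at all between $Q_1 \cup Q_2$ and $R'$, so your concatenation has nothing to start from. This configuration is exactly where the paper's proof does its real work (\cref{case:grid-to-homog,case:rect-to-rect}): it reaches $y \in R'$ through auxiliary points that lie in $A_i$ or $A_{i+1}$ by invariant (i), namely a corner $c$ of the rectangle facing the relevant quadrant together with the grid point $z$ where a side-line through $c$ meets the horizontal line through $x$; these specific choices are also what keep the concatenated path monotone.

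Relatedly, the monotonicity repair you propose for the diagonal sub-case (access from a point to the four corners of its owning rectangle intersected with $Q$) is not sufficient. If a newly added grid point $p \in Q \cap G(\cR_i)$ lies outside the rectangle $S$ owning $q$ but inside the $y$-range of $S$ (so $p$ and $S$ are non-homogeneous, yet $p \notin A_i$ in general, so your ingredient (c) does not apply), then the corners of $S \cap Q$ on the side facing $p$ can overshoot $p$ in the $y$-coordinate, and every route of the form $p \to \text{corner} \to q$ reverses direction. The paper's \cref{case:grid-to-non-homog} instead enters $S$ at the boundary point aligned with $p$ (the leftmost intersection of the horizontal line through $p$ with $S$), which lies in $G(\cR_i) \cap S \subseteq A_i$ precisely because the horizontal line through a point of $G(\cR_i)$ supports a rectangle side. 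So while your overall architecture (first show $Q \cap A_{i+1}$ is satisfied, then attach outside rectangles) could be made to work, the two routing devices that handle rectangles homogeneous with $Q_1,Q_2$ but not with $Q$, and grid points non-homogeneous with an overlapping old rectangle, are missing, and the non-homogeneity-transfer claim meant to replace them is false.
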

\begin{proof}
	\Cref{inv:corners} clearly holds initially. It is then maintained, since the required points are added whenever adding a new rectangle, points are never removed, and the set $G(\cR_i)$ does not gain new points with increasing $i$.
	
	To prove \cref{inv:sep-connected}, we proceed by induction on $i$. In the beginning, each rectangle $Q$ consists of a single point, hence trivially $Q \connected{A_0} Q$. Moreover, any two distinct rectangles are homogeneous, so the claim holds.
	
	Now consider the $i$-th step, where two rectangles $Q_1, Q_2 \in \cR_i$ are merged into a single rectangle $Q \in \cR_{i+1}$. Suppose our claim holds for $(A_i, \cR_i)$.
	Let $B = Q \cap G(\cR_i)$. Observe that $B$ contains all newly added points. 
	
	
	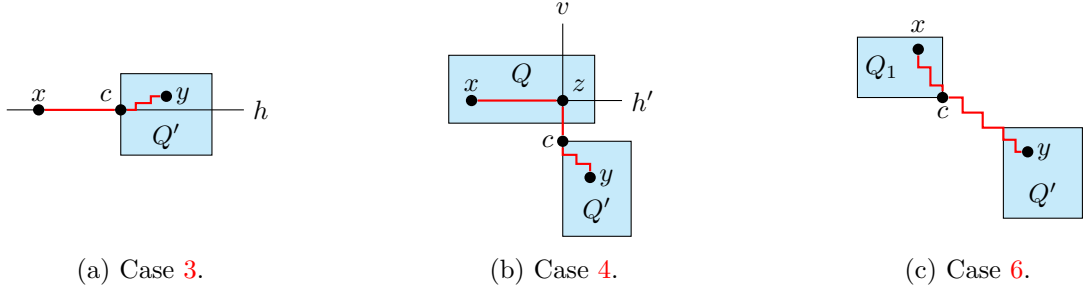
\begin{figure}
		\tikzset{conline/.style={thick, red}}
		\def\linemar{.7}
		\small
		\begin{subfigure}[c]{.33\textwidth}
			\centering
			\begin{tikzpicture}[scale=0.6]
				\def\yoff{1,.3}
				\draw[subrect] (2,0) rectangle node[below] {$Q'$} (4,1.8);
				
				\node[point] (x) at (.2,1) {};
				\node[above] at (x) {$x$};
				
				\node[point] (c) at (2,1) {};
				\node[above left] at (c) {$c$};
				
				\node[point] (y) at ($(c)+(\yoff)$) {};
				\node[right] at (y) {$y$};
								
				\draw ($(x)-(\linemar,0)$) -- (4+\linemar,1) node[right] {$h$};
				
				\draw[conline] (x) -- (c) -- ++(.33,0) -- ++(0,0.15) -- ++(.33,0) -- ++(0,.15) -- (y);
			\end{tikzpicture}
		\end{subfigure}%
		\begin{subfigure}[c]{.33\textwidth}
			\centering
			\begin{tikzpicture}[scale=0.6]
				\def\yoff{.6,-.8}
				\draw[subrect] (-.5,2.5) rectangle node[above=-1mm] {$Q$} (2.7,4);
				\draw[subrect] (2,0) rectangle node[below] {$Q'$} (3.5,2.1);
				
				\node[point] (x) at (0,3) {};
				\node[above] at (x) {$x$};
				
				\draw (x) -- (3.3,3) node[right] {$h'$};
				
				\node[point] (c) at (2,2.1) {};
				\node[left] at (c) {$c$};
				
				\draw (c) -- (2,4.7) node[above]{$v$};
				
				\node[point] (z) at (2,3) {};
				\node[above right] at (z) {$z$};
				
				\node[point] (y) at ($(c)+(\yoff)$) {};
				\node[right] at (y) {$y$};
				
				\draw[conline] (x) -- (z) -- (c) -- ++(0,-.3) -- ++(.3,0) -- ++(0,-.2) -- ++(.3,0) -- (y);
			\end{tikzpicture}
		\end{subfigure}%
		\begin{subfigure}[c]{.33\textwidth}
			\centering
			\begin{tikzpicture}[scale=0.8]
				\def\xoffset{-.4,.8}
				\def\yoffset{.4,-.4}
				
				\coordinate (cc) at (1.1,2);
				\coordinate (cc2) at (2.1,1.5);
				
				\draw[subrect] (-0.3,3) rectangle node[left=-1mm] {$Q_1$} (cc);
				\draw[subrect] (cc2) rectangle node[below] {$Q'$} (3.4,0);

				\node[point] (c) at (cc) {};
				\node[below] at (c) {$c$};
				
				\node[point] (x) at ($(c)+(\xoffset)$) {};
				\node[above=1mm] at (x) {$x$};
				
				\node[point] (y) at ($(cc2)+(\yoffset)$) {};
				\node[right] at (y) {$y$};
				
				\draw[conline] (x) -- ++(0,-.3) -- ++(.2,0) -- ++(0,-.3) -- ++(.2,0) -- (c)
						-- ++(.33,0) -- ++(0,-.25) -- ++(.33,0) -- ++(0,-.25) -- (cc2)
						-- ++(0,-.2) -- ++(.2,0) -- ++(0,-.2) -- (y);
			\end{tikzpicture}
		\end{subfigure}%
		
		\begin{subfigure}[c]{.33\textwidth}\caption{\Cref{case:grid-to-non-homog}.\label{sfig:ass-grid-to-non-homog}}\end{subfigure}%
		\begin{subfigure}[c]{.33\textwidth}\caption{\Cref{case:grid-to-homog}.\label{sfig:ass-grid-to-homog}}\end{subfigure}%
		\begin{subfigure}[c]{.33\textwidth}\caption{\Cref{case:rect-to-rect}.\label{sfig:ass-rect-to-rect}}\end{subfigure}%
		
		\caption{Illustrations for the proof of \cref{p:ass-tww-invariants}. The red lines indicate connections between $x$ and $y$. Note that in (b), $Q$ may be much larger and overlap or even contain $Q'$.}
	\end{figure}
	
	We now prove \cref{inv:sep-connected} for $(A_{i+1}, \cR_{i+1})$. Let $x, y \in A_{i+1}$. We consider the following cases.
	\begin{enumerate}
		\item $x, y \in A_i \setminus Q$.  If $x \in Q'$ and $y \in Q''$ for two rectangles $Q', Q''$ that are not homogeneous, then $x \connected{A_i} y$ by induction, implying $x \connected{A_{i+1}} y$. Otherwise, there is nothing to prove.
		
		\item $x, y \in B$. Since $B$ forms a full grid inside $Q$, clearly $x$ and $y$ are connected, possibly via another point in $B$.
		
		\item $x \in B$ and $y \in Q' \setminus B$, where $Q' \in \cR_i$, and $x$ is not homogeneous with $Q'$. If $x \in Q'$, then $x \in A_i$ by \cref{inv:corners} and thus $x \connected{A_i} y$ by \cref{inv:sep-connected}.
		Otherwise, w.l.o.g.\ $Q'$ is to the right of $x$. (See \cref{sfig:ass-grid-to-non-homog}.) Let $h$ be the horizontal line through $x$. Since $x$ and $Q'$ are not homogeneous, $h$ must intersect $Q'$. Let $c$ be the leftmost intersection point of $h$ and $Q'$.
		Since $x \in B \subseteq G(\cR_i)$, we know that $h$ contains the top or bottom side of some rectangle in $\cR_i$, and so $c \in (G(\cR_i) \cap Q') \subseteq A_i \subseteq A_{i+1}$ by \cref{inv:corners}. Further, $c \connected{A_i} y$ by \cref{inv:sep-connected}, so $x$ is connected to $y$ via $c$.\label[icase]{case:grid-to-non-homog}
		
		\item $x \in B$ and $y \in Q' \setminus B$, where $Q' \in \cR_i$, and $x$ is homogeneous with $Q'$. W.l.o.g., $Q'$ is below and to the right of $x$. (See \cref{sfig:ass-grid-to-homog}.)
		
		If $Q'$ and $Q$ are also homogeneous, there is nothing to do. Otherwise, let $c$ be the top left corner of $Q'$. Since $c \connected{A_i} y$ by \cref{inv:sep-connected}, it suffices to show that $x \connected{A_{i+1}} c$.
		
		Let $v$ and $h$ be the vertical and horizontal line through $c$. One of $v$, $h$ intersects $Q$, say $v$. Let $h'$ be the horizontal line through $x$, and let $\{z\} = v \cap h'$. We have $z \in Q$ and $z \in G(\cR_i)$, hence $z \in B \subseteq A_{i+1}$ by \cref{inv:corners}. Thus, $x$ and $c$ are connected via $z$.\label[icase]{case:grid-to-homog}
		
		\item $x, y \in Q_1 \setminus B$. Then $x \connected{A_i} y$ by induction.
		
		\item $x \in Q_1 \setminus B$ and $y \in Q' \setminus B$, where $Q' \in \cR_i \setminus \{Q_1\}$. If $Q'$ and $Q_1$ are not homogeneous, then $x \connected{A_i} y$ by induction. If $Q'$ and $Q$ are homogeneous, then there is nothing to prove.
		
		Now suppose $Q'$ and $Q_1$ are homogeneous, but $Q'$ and $Q$ are not homogeneous. W.l.o.g., $Q'$ is below and to the right of $Q_1$. (See \cref{sfig:ass-rect-to-rect}.)
		Let $c$ be the bottom right corner of $Q_1$. We have $x \connected{A_i} c$ by \cref{inv:sep-connected}. Since $c \in B$, we have $c \connected{A_{i+1}} y$ (shown in \cref{case:grid-to-non-homog}), which implies $x \connected{A_{i+1}} y$.\label[icase]{case:rect-to-rect}
%
	\end{enumerate}
	Observe that $A_i \subset \bigcup_{Q \in \cR_i} Q$ for all $i$. This means that the list of cases is exhaustive, up to swapping $x$ with $y$ and $Q_1$ with $Q_2$.
\end{proof}

It remains to bound the number of points added in each step.

\begin{lemma}\label{p:ass-tww-count}
	If $\cR_1, \cR_2, \dots, \cR_n$ is $d$-wide, then each step in the algorithm adds at most $(2d+4)^2$ points.
\end{lemma}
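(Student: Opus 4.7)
The strategy is to bound $|Q \cap G(\cR_i)|$ by a product of line counts, and then to control each factor using the $d$-wide property applied to the \emph{post-merge} family $\cR_{i+1}$ rather than to $\cR_i$. Concretely, let $V$ denote the number of distinct vertical lines from $G(\cR_i)$ whose $x$-coordinate lies in $Q$'s $x$-range, and let $H$ be the analogous count for horizontal lines. The grid points inside $Q$ are precisely intersections of such line pairs, so $|Q \cap G(\cR_i)| \le V \cdot H$, and it suffices to prove $V \le 2d+4$ and, by symmetry, $H \le 2d+4$.

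To bound $V$, I will observe that every vertical line contributing to $V$ is the $x$-extension of a left or right side of some rectangle $Q' \in \cR_i$ whose $x$-coordinate $x'$ lies in $Q$'s $x$-range. Since $x'$ also lies in $Q'$'s $x$-range, the $x$-projections of $Q$ and $Q'$ overlap, and hence $Q$ and $Q'$ are non-homogeneous. If $Q' \in \{Q_1, Q_2\}$ we count it directly; otherwise $Q'$ survives the merge and belongs to $\cR_{i+1}$, where it is a red neighbor of $Q$ in $R_{i+1}$. The $d$-wide condition $\Delta(R_{i+1}) < d$ bounds the number of such neighbors by $d-1$. Hence at most $2 + (d-1) = d+1$ rectangles contribute vertical lines, each contributing at most two, giving $V \le 2(d+1) \le 2d+4$.

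The step I anticipate as the main obstacle is choosing the right red graph to apply the degree bound to. A direct attempt to bound the contributing rectangles via the red neighborhoods of $Q_1$ and $Q_2$ in $R_i$ yields only the weaker estimate of about $4d$, and, worse, misses rectangles that sit entirely in the $x$- or $y$-gap between $Q_1$ and $Q_2$ and are homogeneous with both yet still intersect $Q$. Charging the count to the single vertex $Q$ in $R_{i+1}$ sidesteps both issues at once, since any rectangle of $\cR_i \setminus \{Q_1, Q_2\}$ that meets $Q$'s $x$-range is automatically non-homogeneous with $Q$ in $\cR_{i+1}$. Beyond this observation the proof reduces to elementary counting.
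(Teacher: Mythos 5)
Your proof is correct and takes essentially the same route as the paper: every rectangle of $\cR_i$ whose side-extensions can meet $Q$ is either $Q_1$, $Q_2$, or non-homogeneous with $Q$ and hence a red neighbour of $Q$ in $R_{i+1}$, so the $d$-wide bound on $\Delta(R_{i+1})$ limits the contributing rectangles to $d+1$, giving at most $2d+4$ lines per axis and $(2d+4)^2$ added points. Your per-axis count (omitting $Q$ itself, which is not in $\cR_i$) even yields the marginally sharper bound $(2d+2)^2$.
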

\begin{proof}
	Consider a step of merging $Q_1, Q_2 \in \cR_i$ into a single rectangle $Q \in \cR_{i+1}$. By assumption, there are at most $d$ rectangles in $\cR_{i+1}$ that are not homogeneous with $Q$ (this includes $Q$). In $\cR_i$, there are two more, namely $Q_1$ and $Q_2$. Extending each side of these rectangles into lines yields at most $4d+8$ lines that intersect $Q$, creating at most $(2d+4)^2$ intersection points within $Q$.
\end{proof}

\Cref{p:ass-tww-invariants,p:ass-tww-count} together imply \cref{p:ass-tww}. 

\section{Distance-balanced merge sequences}
\label{sec4}\label{sec:dist-bal}

We say that a tuple of intervals $(I_1, \ldots, I_s)$ is a \emph{partition} of the real interval $[0,1]$ if (i) the intervals $I_j$ are disjoint, (ii) their union is equal to $[0,1]$, and (iii) $I_1 < I_2 < \dots < I_s$.
For two positive integers $r,s$, an $r \times s$-\emph{gridding} of the unit square $[0,1]^2$ is a pair $G = (P_1, P_2)$ where $P_1 = (I_1, \ldots, I_r)$ and $P_2 = (J_1, \ldots, J_s)$ are partitions of $[0,1]$.
We say that $I_x$ is the \emph{$x$-th column} of $G$, $J_y$ is the \emph{$y$-th row} of $G$, and $I_x \times J_y$ is the \emph{$(x,y)$-cell} of $G$.
A gridding $G$ is a \emph{coarsening} of a gridding $G'$ if $G$ can be obtained from $G'$ by repeated merging of consecutive rows and columns.

Recall the definition of a $d$-wide merge sequence from \S\,\ref{sec2}. 
We call a $d$-wide merge sequence $\cR_1, \dots, \cR_n$ of a point set $P \subset [0,1]^2$, augmented with a sequence of griddings $G_1, \dots, G_n$, \emph{distance-balanced} if the following properties hold. Here $p_i$ denotes the number of columns in $G_i$, and $q_i$ denotes the number of rows in $G_i$.
\begin{enumerate}[(a)]
	\item Each row (or column) of $G_i$ contains at most $\frac{d}{2}$ non-empty cells.\label[iprop]{decomp:gridding}
	\item Each rectangle of $\cR_i$ is contained in a single cell of $G_i$ and has width smaller than $\frac{20}{p_i}$ and height smaller than $\frac{20}{q_i}$.\label[iprop]{decomp:rect-size}
	\item Each column of $G_i$ wider than $\frac{40}{p_i}$ contains at most $\frac{d}{2}$ points of~$P$ and each row of $G_i$ taller than $\frac{40}{q_i}$ contains at most $\frac{d}{2}$  points of~$P$.\label[iprop]{decomp:super-wide}
	\item $\frac{9}{40} d \cdot (\max(p_i, q_i)-1) \le n-i+1 \le \frac12 d \cdot \min(p_i, q_i)$.\label[iprop]{decomp:grid-size}
\end{enumerate}

\Cref{decomp:gridding} implies that the 
grid is sparsely populated. 
\Cref{decomp:rect-size} implies that rectangles are never much wider or taller than the \emph{average} column width or row height.
Ideally, we would like to also avoid columns/rows that are significantly wider/taller.
This is not always possible, but \cref{decomp:super-wide} at least ensures that very wide/tall columns/rows contain only few points. Finally, \cref{decomp:grid-size} implies that $p_i, q_i \in \Theta(n-i+1) = \Theta(|\cR_i|)$, so the number of rows and columns is linear in the number of rectangles.

\begin{theorem}
\label{thm:decomposition}
Let $\pi$ be a permutation with Füredi-Hajnal limit $c_\pi$ and let $P$ be a $\pi$-avoiding set $P \subset [0,1]^2$ of $n$ points in general position. Then $P$ has a distance-balanced $10\,c_\pi$-wide merge sequence.
\end{theorem}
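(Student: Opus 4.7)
The plan is to build the merge sequence $\cR_1, \ldots, \cR_n$ together with its accompanying griddings $G_1, \ldots, G_n$ in \emph{phases}. In phase $j$ the gridding has a fixed size $p^{(j)} \times q^{(j)}$, and I perform merges that shrink $|\cR|$ from $\Theta(p^{(j)} c_\pi)$ down to about half that amount; between phases I rebuild the gridding at roughly half the size, so that property~(d), i.e., $p_i, q_i \in \Theta((n-i+1)/c_\pi)$, holds throughout. Within a phase each merge combines two rectangles that share a cell, while the single phase-transition step both performs a merge and rebuilds the gridding.

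\paragraph{Building a good gridding.}
The core subroutine, given the current rectangle family $\cR$ supported on a $\pi$-avoiding point set, produces a gridding of prescribed size $p \times q$ satisfying (a)–(c). I would build the row partition $P_2$ greedily: sweep rectangles bottom-to-top and close the current row as soon as it contains $d/2 = 5c_\pi$ points; if the next rectangle lies at vertical distance more than $40/q$ above the previous one, close the current row and insert a dedicated ``wide'' row absorbing the gap. Every row then contains at most $d/2$ points of $P$, hence at most $d/2$ non-empty cells, giving (a); any row exceeding the width threshold $40/q$ is exactly one of the wide rows, which contains at most $d/2$ points, giving (c). The column partition is symmetric. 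The chosen $p$ and $q$ satisfy the window in (d) by design, using Lemma~\ref{p:marcus-tardos-with-limit-non-square} to bound the overall cell occupancy. Property~(b) is maintained inductively: merges occur only inside cells that are themselves narrow (width $<20/p$), while rectangles that happen to lie in wide cells go unmerged during the phase and inherit their small size from the previous phase.

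\paragraph{In-phase merges and red degree.}
Within a phase, an in-cell merge trivially preserves (b)–(d) and can only decrease the red-graph degree. Such merges are always available while $|\cR| \gg c_\pi \max(p,q)$: applying Lemma~\ref{p:marcus-tardos-with-limit-non-square} to the cell-incidence matrix (which avoids $\pi$) shows there are at most $c_\pi \max(p,q)$ non-empty cells, forcing some cell to hold at least two rectangles by pigeonhole. The red-degree bound $\Delta(R_i) < d = 10 c_\pi$ is the structural heart of the argument: a rectangle $R \in \cR_i$ lies in a single cell $C$ by (b), and can be non-homogeneous only with rectangles whose cell shares the row or column of $C$; by (a) at most $d/2 + d/2 - 1$ such cells are non-empty, and the in-cell merge invariant guarantees each holds only a constant number of rectangles, yielding red degree strictly below $d$.

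\paragraph{Main obstacle.}
The most delicate point is the phase transition, where a single merge step is paired with a refresh of the gridding to strictly coarser dimensions $p^{(j+1)} \times q^{(j+1)}$. Naively placing new boundaries can cut through existing rectangles and destroy~(b). The plan is to align each new row and column boundary with an empty horizontal or vertical slab of the current state; such slabs exist in sufficient density because at the end of a phase the rectangle count ($\Theta(p^{(j+1)} c_\pi)$) is much smaller than the number of candidate boundaries available from the old gridding ($\Theta(p^{(j)})$). Getting the constants right so that (a)–(d) all continue to hold across the transition, and so that the single transition merge can be chosen compatibly with the red-degree bound, is the main technical obstacle; once this alignment step is shown feasible, the remaining invariants propagate routinely through the subsequent phase.
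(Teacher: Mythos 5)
Your overall strategy (a gridding controlled by the Marcus--Tardos bound, merges confined to cells, red degree bounded via rectangles-per-row/column) is in the spirit of the paper's proof, but the specific construction has gaps at exactly the points where the paper's argument does its real work. First, your greedy gridding closes a row once it contains $d/2$ points of $P$; then the number of rows always stays around $2n/d$, which violates \cref{decomp:grid-size} as soon as $|\cR_i|$ is much smaller than $n$ (late in the merge sequence the gridding must have $\fO(|\cR_i|/d)$ rows). If instead you meant ``$d/2$ rectangles per row,'' then \cref{decomp:super-wide} is no longer automatic: a row whose consecutive rectangles are each within $40/q$ of one another can still be much taller than $40/q$ while its rectangles contain many points of $P$. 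The paper escapes this tension by never rebuilding the gridding: it only coarsens $G_1$ by merging two adjacent non-wide columns (non-tall rows), so any column wider than $40/p_i$ must be an \emph{original} column of $G_1$ and hence contains at most $5c_\pi$ points. Your phase-transition rebuild, which you yourself flag as the main obstacle, is precisely where this issue lives, and ``align new boundaries with empty slabs'' does not by itself recover \cref{decomp:super-wide} for the rebuilt grid.

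Second, your availability argument for in-cell merges is too coarse: pigeonhole over all non-empty cells only gives \emph{some} doubly-occupied cell, but to preserve \cref{decomp:rect-size} the merge must take place in a cell that is neither wide nor tall. The paper's counting is specifically engineered for this: with $t=5c_\pi$, $C=20$, it bounds the number of wide columns and tall rows by $(p_i-1)/C$ and $(q_i-1)/C$, subtracts the rectangles they can absorb, and still gets strictly more than $c_\pi\max(p_i,q_i)$ occupied non-wide, non-tall cells, forcing a mergeable pair there; without this refinement your phase could stall with all multiply-occupied cells being wide or tall. Relatedly, your red-degree bound via ``at most $d-1$ non-empty cells in the row and column, each holding a constant number of rectangles'' does not give degree $<d$ unless that constant is $1$; the correct accounting (as in the paper) is at most $d/2$ rectangles per row and per column, hence at most $2\cdot(d/2)-2=d-2$ non-homogeneous partners. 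Finally, you also need invariant-(iii)-style control (two consecutive non-wide columns jointly hold more than $t$ rectangles) to get the \emph{lower} bound on $n-i+1$ in \cref{decomp:grid-size}; your proposal keeps $p^{(j)}$ tied to $|\cR|$ only up to the unproven phase transition, so this too remains open.
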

\begin{proof}
Let $t = 5 \, c_\pi$ and $C = 20$.
We say that a column $x$ of an $r \times s$-gridding $G$ is \emph{wide} if $|x| > \frac{C}{r}$, and a row $y$ is \emph{tall} if $|y| > \frac{C}{s}$ where $|\cdot|$ denotes the size of the interval, i.e., the width of the column $x$ (resp.~height of the row $y$).
Observe that in any gridding $G$ of the unit square, at most a constant fraction of columns can be wide and the same holds for tall rows. We call a cell of $G$ \emph{wide} (\emph{tall}) if it is in a wide column (tall row).

First, we describe an algorithm that constructs the desired merge sequence along with a sequence of griddings and then prove that \cref{decomp:gridding,decomp:rect-size,decomp:grid-size,decomp:super-wide} hold.
Throughout the execution the following invariants are maintained: 
\begin{enumerate}[(i)]
	\item Each rectangle $S \in \cR_i$ is contained in a single cell of $G_i$.\label{cond:cell}
	\item Each row (or column) of $G_i$ contains at most $t$ rectangles of $\cR_i$.\label{cond:single}
	\item The union of any two consecutive non-tall rows (or non-wide columns) contains strictly more than $t$ rectangles of $\cR$.\label{cond:pairs}
\end{enumerate}

Let $s = \lceil \frac{n}{t}\rceil$.
Initially, set $\cR_1 = P$ and let the gridding $G_1$ consist of columns $c_1, \ldots, c_s$ and rows $r_1, \ldots, r_s$ such that for every $i \in [s-1]$, the row $r_i$ and column $c_i$ both contain exactly $t$ points of $P$.
Observe that invariants \cref{cond:cell,cond:single,cond:pairs} hold trivially for $\cR_1$ and $G_1$.

The algorithm repeats the following main step as long as there are at least two rectangles.
If the rectangle family $\cR_i$ contains two rectangles sharing the same cell of the gridding $G_i$ that is neither wide nor tall, merge these two rectangles into a single new one, obtaining the  rectangle family $\cR_{i+1}$. 
Later, we show that $\cR_i$ must always contain such a pair of rectangles as otherwise, an occurrence of $\pi$ in $P$ is guaranteed.
Afterwards, the algorithm repeatedly merges all pairs of neighboring non-wide columns and non-tall rows in the gridding that violate invariant~\ref{cond:pairs}.
Note that one such merge might trigger a cascade of additional merges since the dimensions of the grid are decreasing and thus, other columns (resp. rows) might cease to be wide (resp. tall).
Once this process stops, invariant~\ref{cond:pairs} is satisfied and we set $G_{i+1}$ to be the obtained coarsening of~$G_i$.

\paragraph{Correctness.}
First, we verify that invariants \cref{cond:cell,cond:single,cond:pairs} are not violated in the $i$-th step.
Invariant~\ref{cond:cell} holds since the only new rectangle in $\cR_{i+1}$ is obtained by merging two rectangles inside a single cell of $G_i$, and $G_{i+1}$ is a coarsening of $G_i$.
Invariant~\ref{cond:single} is not violated since we are only merging consecutive rows and columns that violated condition~\ref{cond:pairs}, i.e., they together contained at most $t$ rectangles.
Finally, \ref{cond:pairs} is satisfied by the algorithm explicitly.

We now show that there must be two rectangles sharing a non-wide and non-tall cell in the $i$-th step of the algorithm. Suppose not.
Let us construct the point set \[M = \{(x,y) \in [p_i]\times [q_i] \mid \text{$\cR_i$ contains a rectangle in the $(x,y)$ cell}\}.\]
Observe that if $M$ contains a permutation $\pi$ then so does $P$ since each rectangle contains at least one point of $P$.
Let $W$ and $T$ denote the number of wide columns and tall rows in the gridding $G_i$, respectively.
We claim that $G_i$ contains at least $\lfloor\frac{p_i}{2}\rfloor  - W \ge \frac{p_i - 1}{2} - W$ pairs of neighboring non-wide columns.
This can be seen by partitioning the columns into $\lfloor \frac{p_i}{2}\rfloor$ neighboring pairs, since each wide column blocks at most one of the pairs.
Thus, condition \ref{cond:pairs} guarantees that there are strictly more than $ \left( \frac{p_i - 1}{2}  - W \right)\cdot t$ rectangles in non-wide columns.
At most $T\cdot t$ of these rectangles lie in tall rows by~\ref{cond:single}, and each remaining rectangle must occupy a cell on its own (by assumption).
Thus, we get $|M| > \left( \frac{p_i - 1}{2}  - W - T \right)\cdot t$.

Observe that $W < \frac{p_i}{C}$ since each wide column has width strictly larger than $\frac{C}{p_i}$ and thus $W \le \frac{p_i-1}{C}$, as $C$ is a positive integer.
Similarly, $T \le \frac{q_i-1}{C}$.
Putting these together:
\[|M| > \left(\frac{p_i-1}{2} - \frac{p_i-1}{C} - \frac{q_i - 1}{C}\right)\cdot t.\]
Assume w.l.o.g.\ that $p_i \ge q_i$ and $p_i \ge 2$.\footnote{If $p_i = q_i = 1$, the only cell is neither wide nor tall, so our claim follows trivially.} Then we have
\[|M| > \left(\frac{1}{2} - \frac{2}{C}\right) \cdot t \cdot (p_i-1) \ge \frac25 \cdot t \cdot \frac{p_i}{2} = c_\pi \cdot p_i.\]
Now \cref{p:marcus-tardos-with-limit-non-square} implies that $M$ contains  $\pi$, and thus $P$ contains $\pi$, a contradiction.

\paragraph{Width of the merge sequence.} Let $Q \in \cR_i$ be a rectangle in step $i$ of the construction. Each other rectangle in $\cR_i$ that is not homogeneous with $Q$ must share a row or column with $Q$. By \cref{cond:single,cond:cell}, there are at most $2t-2$ such rectangles. Hence, $\cR_i$ is in particular $2t = 10\,c_\pi$-wide.

\paragraph{Distance-balanced.}
Finally, let us show that the obtained merge sequence is distance-balanced. Let $d = 2t = 10\,c_\pi$.
First, observe that \cref{decomp:gridding} follows by a simple combination of \ref{cond:cell} and 
\ref{cond:single}.

Furthermore, any rectangle $S \in \cR_i$ corresponds either to an original point in $P$ and its width and height are both zero; or it was created by merging two rectangles $S, T$ in the $j$-th step for some $j < i$.
In that case, both $S$ and $T$ occupied the same cell of $G_j$ in a non-wide column and a non-tall row and thus, the width (resp.~height) of $S$ is at most $\frac{20}{p_j} \le \frac{20}{p_i}$ (resp.~$\frac{20}{q_j} \le \frac{20}{q_i}$).
Together with \ref{cond:cell}, this implies \cref{decomp:rect-size}.

Towards proving \cref{decomp:grid-size}, recall that $|\cR_i| = n - i + 1$.
The second inequality of \cref{decomp:grid-size} then follows from \ref{cond:single} since $|\cR_i| \le p_i \cdot t$ and $|\cR_i| \le q_i \cdot t$.
For the first inequality, recall that $G_i$ contains at most $\frac{p_i-1}{C}$ wide columns. Thus, there are at least $(\frac{1}{2} - \frac{1}{C})\cdot (p_i-1) = \frac{9}{20} \cdot (p_i-1)$ disjoint pairs of consecutive non-wide columns, so $|\cR_i| \ge \frac{9}{20} \cdot (p_i-1) \cdot t = \frac{9}{40} \cdot (p_i-1) \cdot d$ by \ref{cond:pairs}.
The same holds for $q_i$, and \cref{decomp:grid-size} follows.

Lastly, consider a column $x$ of $G_i$.
Either $x$ is already present in $G_1$ and thus, $x$ contains at most $t$ points of $P$, or $x$ was created by merging two non-wide columns in the $j$-th step for some $j < i$.
In the latter case, the width of $x$ is at most $2 \frac{C}{p_j} \le \frac{40}{p_j} \le \frac{40}{p_i}$.
The same argument on the rows of $G_i$ shows \cref{decomp:super-wide} and wraps up the proof.
\end{proof}

Note that the width of the resulting merge sequence can be improved at the cost of its balancedness.
Namely, for any integer $k > 0$, the proof above goes through when we set $t = (4 + \tfrac 1k) c_\pi$ and $C = 4 + 16k$.

Furthermore, observe that the proof is constructive and it describes a polynomial-time procedure that outputs the distance-balanced merge sequence. 
In fact, we claim that the algorithm of Guillemot and Marx~\cite{GM_PPM} can be adapted to compute the distance-balanced merge sequence in $\fO(n)$ time, given access to the points ordered by both $x$- and $y$-coordinates.
We choose to omit the discussion of necessary implementation details since we only use the existence of distance-balanced merge sequences to prove the existence of solutions of small cost.

Our construction of distance-balanced merge sequences does not directly work with bounded twin-width and instead requires the point set to actually avoid a given pattern. However, we note that since each point set with twin-width $d$ avoids some $(d+1) \times (d+1)$ grid permutation $\pi$, and $c_\pi \in 2^{\fO(d)}$~\cite{jfox},  \cref{thm:decomposition} extends to arbitrary point sets with bounded twin-width (albeit with exponential twin-width blowup).

\begin{corollary}
	Let $P$ be a point set of twin-width $d$. Then $P$ has a distance-balanced $2^{\fO(d)}$-wide merge sequence.
\end{corollary}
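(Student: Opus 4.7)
The plan is to reduce this to \cref{thm:decomposition} by producing an explicit forbidden pattern for $P$ whose Füredi–Hajnal limit is only singly exponential in $d$. The entire content of the corollary is then that ``bounded twin-width'' is not much weaker than ``avoids a specific pattern'' for the purposes of the distance-balanced decomposition.

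First, I would exhibit a suitable pattern $\pi$. Let $\pi$ be the canonical $(d+1)\times(d+1)$ grid permutation. By \cref{lem:canonical-grid-tww}, $\tww(\pi) = d+1 > d$, and by the monotonicity statement in \cref{obs:tww}, any point set containing $\pi$ must have twin-width at least $d+1$. Since $\tww(P) = d$, it follows that $P$ avoids $\pi$.

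Second, I would bound $c_\pi$. The excerpt cites Fox's result that $c_\sigma \in 2^{\fO(|\sigma|)}$ for every permutation $\sigma$. Applied to our $\pi$ of size $(d+1)^2$ this gives only $2^{\fO(d^2)}$, which is too weak. However, $\pi$ is a $(d+1)\times(d+1)$ grid permutation, and the same paragraph preceding the corollary explicitly invokes Fox's stronger bound $c_\pi \in 2^{\fO(d)}$ for grid permutations (this is precisely the bound we are allowed to use here). So $c_\pi \in 2^{\fO(d)}$.

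Finally, I would invoke \cref{thm:decomposition} with this $\pi$: since $P$ is $\pi$-avoiding, it admits a distance-balanced merge sequence of width $10\,c_\pi \in 2^{\fO(d)}$, as desired. There is essentially no obstacle here beyond citing the right ingredients in the right order; the only subtlety is to not apply Fox's generic $2^{\fO(|\pi|)}$ bound to $|\pi| = (d+1)^2$, but instead use the sharper bound for grid permutations already invoked by the authors in the paragraph just above the corollary. The cost of the reduction is the exponential blow-up from $d$ to $2^{\fO(d)}$ in the width of the resulting merge sequence, which is exactly the loss stated in the corollary.
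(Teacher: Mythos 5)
Your proposal is correct and follows essentially the same route as the paper: the authors also observe that a point set of twin-width $d$ avoids a $(d+1)\times(d+1)$ grid permutation $\pi$ (which has twin-width $d+1$), invoke Fox's bound $c_\pi \in 2^{\fO(d)}$ for such grid permutations, and then apply \cref{thm:decomposition}. Your explicit justification via \cref{lem:canonical-grid-tww} and the monotonicity in \cref{obs:tww} is exactly the reasoning the paper leaves implicit.
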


Since the full generality of distance-balanced merge sequences is not needed for our proofs, we isolate the required properties in two corollaries.
First, we show that the sum of the dimensions of created rectangles is logarithmic.

\begin{corollary}\label{p:bal-rect-size}
	Let $\cR_1, \dots, \cR_n$ be a distance-balanced $d$-wide merge sequence. For $i \in \{1, 2, \dots, n-1\}$, let $Q_i$ be the rectangle created in step $i$, and let $w_i, h_i$ be its width and height. Then
	\[ \sum_{i=1}^{n-1} w_i + h_i \in \fO( d \log n). \]
\end{corollary}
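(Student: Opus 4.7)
The plan is to combine \cref{decomp:rect-size} with \cref{decomp:grid-size} to bound each term $w_i + h_i$ by an inverse-harmonic quantity, and then sum the harmonic series.

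First I would bound the dimensions of $Q_i$. By \cref{decomp:rect-size}, any rectangle in $\cR_i$, and in particular the two rectangles merged at step $i$, sits in a single cell of $G_i$ of width less than $20/p_i$ and height less than $20/q_i$. Since $Q_i$ is the bounding box of two such rectangles sharing a cell, we get $w_i < 20/p_i$ and $h_i < 20/q_i$, hence
\[
	w_i + h_i \;<\; \frac{20}{p_i} + \frac{20}{q_i} \;\le\; \frac{40}{\min(p_i,q_i)}.
\]

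Next I would convert this bound on grid dimensions into a bound in terms of the step number $i$, using \cref{decomp:grid-size}. The right-hand inequality there gives $n - i + 1 \le \tfrac{1}{2}\,d\,\min(p_i, q_i)$, so $\min(p_i, q_i) \ge 2(n-i+1)/d$. Substituting,
\[
	w_i + h_i \;\le\; \frac{40}{\min(p_i, q_i)} \;\le\; \frac{20\, d}{n-i+1}.
\]

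Summing over the $n-1$ merge steps then yields a harmonic sum:
\[
	\sum_{i=1}^{n-1} (w_i + h_i) \;\le\; 20\,d \sum_{i=1}^{n-1} \frac{1}{n-i+1} \;=\; 20\,d \sum_{j=2}^{n} \frac{1}{j} \;\in\; \fO(d \log n),
\]
which is exactly the claimed bound. There is no real obstacle here: the proof is essentially a one-line calculation, and the only minor point worth being careful about is to invoke \cref{decomp:rect-size} at step $i$ (before the merge) rather than at step $i{+}1$, so that the two merged rectangles are guaranteed to lie in a common cell of the gridding $G_i$ of width $< 20/p_i$ and height $< 20/q_i$.
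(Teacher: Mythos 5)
Your overall plan (bound each $w_i+h_i$ via \cref{decomp:rect-size} and \cref{decomp:grid-size}, then sum a harmonic series) is exactly the paper's, but the key step as you wrote it does not go through. You bound $w_i<20/p_i$, $h_i<20/q_i$ by arguing that the two rectangles merged at step $i$ lie in a common cell of $G_i$. Two problems: first, the definition of a distance-balanced merge sequence (properties (a)--(d)) nowhere guarantees that a merge happens inside a single cell -- that is a feature of the particular construction in \cref{thm:decomposition}, not of the definition, and the corollary is stated for an arbitrary distance-balanced sequence. Second, even granting that the two rectangles share a cell, the bound fails: \cref{decomp:rect-size} bounds the \emph{rectangles}' dimensions by $20/p_i$ and $20/q_i$, not the cell's; cells in wide columns or tall rows can be far larger (only the number of points in them is controlled, via \cref{decomp:super-wide}), so the bounding box of two small rectangles sitting at opposite corners of such a cell can have width much larger than $20/p_i$.

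The repair is precisely the option you flagged and rejected in your closing remark: apply \cref{decomp:rect-size} at index $i+1$. The rectangle $Q_i$ created in step $i$ is itself a member of $\cR_{i+1}$, so the property applies to it directly, giving $w_i \le 20/p_{i+1}$ and $h_i \le 20/q_{i+1}$ with no need to reason about the two constituent rectangles at all. Then \cref{decomp:grid-size} at index $i+1$ gives $n-i \le \tfrac12 d \min(p_{i+1},q_{i+1})$, hence $w_i+h_i \le 20d/(n-i)$, and summing yields $20d\,H_{n-1} \in \fO(d\log n)$. This is the paper's proof; your calculation is otherwise identical, so the fix is a one-line change of index together with dropping the unsupported ``common cell of $G_i$'' claim.
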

\begin{proof}
	By \cref{decomp:rect-size}, we have $w_i \le \frac{20}{p_{i+1}}$. By \cref{decomp:grid-size}, we have $\frac{1}{p_{i+1}} \le \frac{d}{2(n-i)}$ so $w_i \le \frac{10\,d}{n-i}$. Similarly, $h_i \le \frac{10\,d}{n-i}$. Overall, we get
	\[
		\sum_{i=1}^{n-1} w_i + h_i \le \sum_{i=1}^{n-1} \frac{20\,d}{n-i} = 20\,d \cdot H_{n-1} \in \fO( d \log n ),
	\]
	where $H_k$ denotes the $k$-th Harmonic number.
\end{proof}

Second, we use distance-balanced merge sequences to construct a ``balanced'' gridding.

\begin{corollary}\label{p:bal-gridding}
	Let $\pi$ be a non-trivial permutation with Füredi-Hajnal limit $c_\pi$ and let $P \subset [0,1]^2$ be a $\pi$-avoiding set of $n$ points.

	For each real $m \in [2, \frac{n}{5c_\pi}]$, there exists a gridding $G$ of $[0,1]^2$ such that
	\begin{enumerate}[(I)]
		\item The number of rows (columns) in $G$ is at least $m$ and at most $3m$.\label[ipart]{item:bal-gridding:num-rows}
		\item Each row or column contains at most $5c_\pi$ non-empty cells.\label[ipart]{item:bal-gridding:points-per-row}
		\item Each row or column of width more than $\frac{40}{m}$ contains at most $5c_\pi$ points.\label[ipart]{item:bal-gridding:extra-wide-cols}
	\end{enumerate}
\end{corollary}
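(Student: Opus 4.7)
The plan is to apply \cref{thm:decomposition} to $P$ and $\pi$, obtaining a distance-balanced $d$-wide merge sequence $\cR_1, \dots, \cR_n$ with associated griddings $G_1, \dots, G_n$, where $d = 10\,c_\pi$. The output gridding $G$ will simply be $G_i$ for a judiciously chosen step $i$; the three required properties will then follow directly from \cref{decomp:gridding,decomp:rect-size,decomp:super-wide,decomp:grid-size}.

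The key step is choosing $i$. Since $|\cR_i| = n-i+1$ takes every integer value in $[1,n]$ as $i$ ranges over $[1,n]$, and since the assumption $m \le n/(5 c_\pi)$ is equivalent to $dm/2 \le n$, we may pick $i$ so that $|\cR_i| = \lceil dm/2 \rceil$. (Note $\lceil dm/2 \rceil \ge 1$ because $m \ge 2$ and $\pi$ is non-trivial, so $c_\pi \ge 1$ and $d \ge 10$.)

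Property~(I) then follows from \cref{decomp:grid-size}. The lower bound gives
\[
    \min(p_i, q_i) \;\ge\; \frac{2\,|\cR_i|}{d} \;\ge\; \frac{2 \cdot dm/2}{d} \;=\; m,
\]
while the upper bound of \cref{decomp:grid-size} together with $|\cR_i| \le dm/2 + 1$ yields
\[
    \max(p_i, q_i) \;\le\; \frac{40\,|\cR_i|}{9d} + 1 \;\le\; \frac{20 m}{9} + \frac{40}{9d} + 1 \;\le\; 3m,
\]
where the last inequality uses $m \ge 2$ and $d \ge 10$ (a direct calculation: $\frac{20m}{9} + \frac{13}{9} \le 3m$ iff $m \ge 13/7$). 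Property~(II) is immediate from \cref{decomp:gridding}, as $d/2 = 5\,c_\pi$. Finally, for property~(III), note that by property~(I) we have $p_i, q_i \ge m$, so $\frac{40}{p_i} \le \frac{40}{m}$ and $\frac{40}{q_i} \le \frac{40}{m}$; any row or column of $G_i$ wider (taller) than $\frac{40}{m}$ is therefore also wider (taller) than $\frac{40}{p_i}$ (resp.\ $\frac{40}{q_i}$), and thus contains at most $d/2 = 5\,c_\pi$ points of $P$ by \cref{decomp:super-wide}.

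There is no real obstacle here beyond bookkeeping the constants; the entire argument is a direct translation of the properties of a distance-balanced merge sequence at a well-chosen scale, with the only computation being the verification that the choice $|\cR_i| = \lceil dm/2 \rceil$ simultaneously pins down $\min(p_i, q_i)$ from below by $m$ and $\max(p_i, q_i)$ from above by $3m$.
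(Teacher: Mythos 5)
Your proposal is correct and follows essentially the same route as the paper: apply \cref{thm:decomposition}, take the gridding $G_i$ at the step with $|\cR_i| = \lceil 5c_\pi m\rceil$, and read off properties (I)--(III) from \cref{decomp:gridding,decomp:grid-size,decomp:super-wide} with the same constant bookkeeping. No gaps.
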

\begin{proof}
	Let $\cR_1, \dots, \cR_n$ be a distance-balanced $10c_\pi$-wide merge sequence of $P$ with griddings $G_1, G_2, \dots, G_n$. Choose $i$ such that $|\cR_i| = n - i + 1 = \ceil{5 c_\pi \cdot m}$, and set $G = G_i$.
	
	Let $p$ be the number of columns in $G$ and let $q$ be the number of rows in $G$. Using~\cref{decomp:grid-size}, $m \ge 2$, and $c_\pi \ge 1$ (since $\pi$ is non-trivial), we get
	\begin{align*}
		& \max(p,q) \le \frac{4}{9 c_\pi} (n-i+1) + 1 \le \frac{20}{9} m + \frac{4}{9 c_\pi} + 1 \le \frac{53}{18} m \le 3m, \text{ and}\\
		& \min(p,q) \ge \frac{1}{5 c_\pi} (n-i+1) \ge m.
	\end{align*}
	Hence, \cref{item:bal-gridding:num-rows} holds.
	
	\Cref{item:bal-gridding:points-per-row} follows directly from \cref{decomp:gridding}.
	
	\Cref{decomp:super-wide} implies that each column of width more than $\frac{40}{p}$ contains at most $5c_\pi$ points. Since $p, q \ge m$ (as shown above), \cref{item:bal-gridding:extra-wide-cols} follows.
\end{proof}

\section{\texorpdfstring{$k$-server}{k-server} on the line}\label{sec:k-server}


Consider a request sequence $X = (x_1, x_2, \dots, x_n) \in [0,1]^n$. We associate to $X$ the point set $P_X = \{ (\frac{i}{n},x_i) \mid i \in [n] \} \subset [0,1]^{2}$, i.e., the value of the request is the $y$-coordinate of the corresponding point and serving the requests means moving through the point set from left to right. We generally assume $X$ (and therefore $P_X$) to be in general position, and treat $X$ and $P_X$ interchangeably; this allows us to apply the definitions of distance-balanced merge sequences from \cref{sec:dist-bal} to request sequences. 

We start with the proof of our main upper bound.

\restateKServerTwinWidth*
\begin{proof}
Let $d = 5 c_\pi$ and let $f_t(n)$ denote the 
largest possible cost 
for a request sequence of length $n$ by exactly $d^t$ servers.
If $k \neq d^t$ for any positive integer $t$, we simply let the first $d^{\floor{\log_d k}}$ servers serve all requests in time $f_{\floor{\log_d k}}(n)$.
We show by induction on $t$ that 
$f_t(n) \le b_t \cdot d^t\cdot n^{1/(t+1)}$ where $b_t \ge 1$ is a constant dependent on $t$, specified later.

For $t = 0$, linear cost is clearly sufficient to handle all requests by one server and we can set $b_0 = 1$.
Assume $t \ge 1$.
If $n < d^t \cdot n^{1/(t+1)}$, we conclude that the bound holds trivially.
Otherwise, we apply \cref{p:bal-gridding} with $m = n^{1/(t+1)} \ge d^{t/(t+1)} \ge 2$ to get a gridding $G$.

Our strategy is as follows.
Each column of $G$ contains at most $d$ non-empty cells.
We split the servers into $d$ groups of $d^{t-1}$ servers and let each group serve a different non-empty cell in the first column of the gridding.
After the requests in the first column are handled, we move each group to a different non-empty cell in the second column of $G$.
We continue in the same fashion until all requests are handled.
In this way, each non-empty cell of $G$ is handled locally by a group of $d^{t-1}$ servers and these groups move around the gridding only in between the columns of $G$.

First, let us bound the cost of moving servers around in between the columns of $G$.
Denote by $q$ the number of columns in $G$.
We can upper bound the cost as
\begin{equation}
	d^t \cdot q \le d^t \cdot 3m = 3 d^t n^{1/(t+1)}\label{eq:kserver-cost-1}
\end{equation}
where the first inequality follows from \cref{item:bal-gridding:num-rows} of \cref{p:bal-gridding}.

In order to bound the cost of serving the individual non-empty cells, we count their contributions separately depending on their height.
We say that a row or a cell is \emph{extra-tall} if its height is larger than $\frac{40}{m}$.

Let $h_1, \dots, h_r$ be the heights of the extra-tall rows in $G$.
By \cref{item:bal-gridding:extra-wide-cols} of \cref{p:bal-gridding}, each extra-tall row contains at most $d$ points.
Therefore, the cost inside each cell of the $j$-th extra-tall row is at most $h_j \cdot d$, even with one server.
This makes the total cost occurring in all the extra-tall cells at most
\begin{equation}
\label{eq:kserver-cost-2}
\sum_{j=1}^r h_j \cdot d \le d.
\end{equation}

Consider now the remaining cells.
Let $N$ be the number of non-empty cells in $G$ that are not extra-tall (i.e., have height $\le \frac{40}{m}$), and let $n_1, \dots, n_N$ be the number of points in each of these cells.
The contribution of the non-extra-tall cells is, by induction, at most
\begin{equation*}
	\sum_{i=1}^{N} \frac{40}{m} f_{t-1}(n_i)
	\le \frac{40}{m} \sum_{i=1}^{N} b_{t-1} \cdot d^{t-1} \cdot n_i^{1/t}.
\end{equation*}

The function $g(x) = x^{1/t}$ is concave on $(0,\infty)$ and thus, by Jensen's inequality:
\begin{equation*}
	\frac{40 \cdot b_{t-1} \cdot d^{t-1}}{m} \sum_{i=1}^{N} n_i^{1/t}
	\le \frac{40 \cdot b_{t-1} \cdot d^{t-1}}{m} \cdot N \cdot \left(\frac{n}{N}\right)^{1/t}
	= \frac{40 \cdot b_{t-1} \cdot d^{t-1} \cdot N^{(t-1)/t} \cdot n^{1/t}}{m}.
\end{equation*}

Finally, \cref{item:bal-gridding:num-rows,item:bal-gridding:points-per-row} of \cref{p:bal-gridding} imply that $N \le 3dm$. Plugging in this inequality and $m = n^{1/{t+1}}$ yields
\begin{align}
	& \frac{40 \cdot b_{t-1} \cdot d^{t-1} \cdot N^{(t-1)/t} \cdot n^{1/t}}{m}
	\le 40 \cdot b_{t-1} \cdot d^{t-1} \cdot 3^{(t-1)/t} \cdot d^{(t-1)/t} \cdot m^{-1/t} \cdot n^{1/t}\notag\\
	& \le 120 \cdot b_{t-1} \cdot d^t \cdot n^{1/(t+1)}.\label{eq:kserver-cost-3}
\end{align}

Summing the costs \eqref{eq:kserver-cost-1}, \eqref{eq:kserver-cost-2} and \eqref{eq:kserver-cost-3}, we obtain
\begin{multline*}
f_t(n) \le 3 \cdot d^t \cdot n^{1/(t+1)} + d + 120 \cdot b_{t-1}\cdot d^t \cdot n^{1/(t+1)}
\le (120 \cdot b_{t-1} + 4) \cdot d^t \cdot n^{1/(t+1)}.
\end{multline*}

Thus, the desired inequality holds when we set $b_t = 120 \cdot b_{t-1} + 4$.
This recurrence solves to $b_t \in  \Theta(120^t)=\Theta(d^{t \cdot \log_d 120})$ which implies the claimed upper bound on $f_t(n)$.
\end{proof}

\subsection{Special upper bounds}

\subsubsection{231-avoiding inputs}\label{sec:k-server-231-ub}

\restateKServerAvUB*

\begin{figure}[h]
	\centering
	\begin{tikzpicture}[scale=2,
			comp/.style=subrect
		]
		\def\mar{0.15}
		\draw (-\mar,-\mar) rectangle (1+\mar,1+\mar);
		\draw[-latex] (0-\mar,1+2*\mar) -- (0-\mar+0.28,1+2*\mar);
		\draw[comp] (0,0) rectangle node {$X_1$} (.5,.5);
		\draw[comp] (.5,.5) rectangle node {$X_2$} (1,1);
		\node[point] (x0) at (0,0.5) {};
		\node[above] at (x0) {$x_0$};
	\end{tikzpicture}
	\caption{Structure of a 231-avoiding sequence}\label{fig:231-av-structure}
\end{figure}
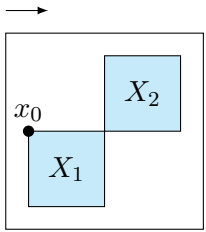

We use the specific structure of 231-avoiding inputs, as shown in \cref{fig:231-av-structure}. Let $X$ be a 231-avoiding sequence and let $x_0$ be the value of its first entry. Then there exist sequences $X_1, X_2$ such that $X = (x_0) \circ X_1 \circ X_2$, where $\circ$ denotes concatenation, and all entries in $X_1$ are at most $x_0$, and all entries in $X_2$ are at least $x_0$. We say $X$ \emph{decomposes} into $(x_0) \circ X_1 \circ X_2$.

An algorithm for serving a 231-avoiding sequence is given in \cref{alg:k-server-231}. The parameter $p$ is an arbitrary value that determines which sequences are small enough to be served with less than $k$ servers. The root call for a sequence of requests in $[0,1]$ will be $\Call{Serve}{k,X,\floor{|X|^{(k-1)/k}}, 0, 1}$, but note that 
$p$ is (re-)computed 
whenever we reduce the number of servers, i.e., call $\Call{Serve}{k-1, \dots}$.

\begin{figure}
	\begin{algorithmic}
		\Procedure{Serve}{$k, X, p, a, c$}
			\LComment{Serve $X$ with $k$ servers. Assume that all requests are in $[a,c]$. One server starts at $c$, the others at $a$. All servers end at $c$.}
			\If{$|X| = 0$}
				\State Move $k-1$ servers from $a$ to $c$
			\ElsIf{$k = 1$}
				\State Serve $X$, then move the single server to $c$
			\Else
				\State $b \gets $ first request in $X$
				\State Decompose $X$ into $(b) \circ X_1 \circ X_2$
				\If{$|X_1| \le p$}
					\State Move server from $a$ to $b$
					\State \Call{Serve}{$k-1, X_1, \floor{|X_1|^{(k-2)/(k-1)}}, a, b$}
				\Else
					\State Move server from $c$ to $b$
					\State \Call{Serve}{$k, X_1, p, a, b$}
					\State Move server from $b$ to $c$
				\EndIf
				\LComment{Now $k-1$ servers are stationed at $b$, and one server at $c$.}
				\State \Call{Serve}{$k, X_2, p, b, c$}
			\EndIf
		\EndProcedure
	\end{algorithmic}
	\caption{Algorithm for serving a 231-avoiding sequence with $k$ servers.}\label{alg:k-server-231}
\end{figure}

We now bound the total cost of a call to $\Call{Serve}{}$. Let $f_k(n,p,a,c)$ be the maximum cost of the algorithm when $X$ is a 231-avoiding sequence of length $n$ and the other parameters are as given. Observe that $f_k(n,p,a,c) = (c-a) f_k(n,p,0,1)$, hence we write $f_k(n,p) = f_k(n,p,0,1)$ and ignore the parameters $a$ and $c$ from now on.
We further write $g_k(n) = f_k(n, \floor{n^{(k-1)/k}})$.

The pseudocode directly yields the following:
\begin{align}
	& f_1(n,p) \le n+1,\label{eq:k-server-231-ub:base-k}\\
	& f_k(0,p) = (k-1).\label{eq:k-server-231-ub:base-n}
\end{align}

Moreover, $f_k(n,p)$ is bounded by the maximum of the following two terms over all $w_1, w_2, n_1, n_2$ with $w_1, w_2 \ge 0$, $w_1 + w_2 = 1$ and $n_1, n_2 \in \N$, $n_1 + n_2 = n$ (essentially, $w_1 = (b-a)$ and $w_2 = (c-b)$):
\begin{align}
	& w_1 + w_1\, g_{k-1}(n_1) + w_2\, f_k(n_2, p), & \text{ if } n_1 \le p,\label{eq:k-server-231-ub:step-unbal}\\
	& 2w_2 + w_1\, f_k(n_1, p) + w_2\, f_k(n_2, p), & \text{ if } n_1 > p.\label{eq:k-server-231-ub:step-bal}
\end{align}

Below, we show that $g_k(n) \in \fO( k^2 + k n^{1/k})$, thereby proving \cref{p:k-server-231-ub}. We first sketch the proof idea, ignoring constants and lower-order factors and treating $k$ as a constant.

Let $k \ge 2$. Compute an upper bound on $g_k(n) = f_k(n, \floor{n^{(k-1)/k}})$ by repeatedly applying \cref{eq:k-server-231-ub:step-unbal,eq:k-server-231-ub:step-bal} until all occurrences of $f_k$ are gone.
The occurrences of the first two terms $w_1 + w_1\, g_{k-1}(n_i)$ from \cref{eq:k-server-231-ub:step-unbal} add up to $\sum x_i + x_i\, g_{k-1}(n_i)$, where $\sum x_i = 1$ and $\sum n_i = n$. Using induction, the fact that $n_i \le p$, and Jensen's inequality, we can bound the result by $\fO(p^{1/(k-1)}) = \fO(n^{1/k})$.

It remains to compute the total contribution of the first term $w_2$ of \cref{eq:k-server-231-ub:step-bal}.
We simply bound $w_2 \le 1$ and argue that the term occurs at most $n/p \approx n^{1/k}$ times.
For this, consider the \emph{evaluation tree} of our calculation. This is a binary tree where each inner node is labeled $f_k(n',p)$ for some $n'$ and each leaf is labeled $g_{k-1}(n')$ for some $n'$.
A node $f_k(n',p)$ corresponding to an application of \cref{eq:k-server-231-ub:step-bal} must have two inner nodes as children, and must satisfy $n' \ge p$. It is easy to see that there can be no more than $n/p$ such nodes.

We now proceed with the formal proof.
For $n, p \in \N_+$, define a \emph{weighted $p$-bounded partition} of $n$ to be two sequences $(w_i)_{i \in [t]}$ and $(n_i)_{i \in [t]}$ such that $w_i \in [0,1]$ and $n_i \in [p]$ for all $i \in [t]$, as well as $\sum_{i=1}^t w_i = 1$ and $\sum_{i=1}^t n_i = n$.

Let $\phi_k(n,p)$ be the maximum value for $\sum_{i=1}^t w_i\, g_k(n_i)$ over all weighted $p$-bounded partitions $(w_i)_{i \in [t]}$, $(n_i)_{i \in [t]}$ of $n$.
\begin{observation}\label{p:k-server-231-ub:phi_obs}
	For $w_1, w_2 \in [0,1]$, $w_1 + w_2 = 1$ and $n_1 + n_2 = n$, we have $w_1 \phi_k(n_1,p) + w_2 \phi_k(n_2, p) \le \phi_k(n, p)$.
\end{observation}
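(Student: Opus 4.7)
The plan is to prove the inequality by an explicit construction: given optimal weighted $p$-bounded partitions witnessing $\phi_k(n_1, p)$ and $\phi_k(n_2, p)$, I would concatenate them with appropriately scaled weights to obtain a valid weighted $p$-bounded partition of $n$ whose associated sum equals $w_1 \phi_k(n_1, p) + w_2 \phi_k(n_2, p)$. By maximality of $\phi_k(n, p)$, the claim follows.

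In more detail, let $(a_i)_{i \in [s]}, (m_i)_{i \in [s]}$ be a weighted $p$-bounded partition of $n_1$ attaining $\phi_k(n_1, p) = \sum_i a_i \, g_k(m_i)$, and similarly let $(b_j)_{j \in [r]}, (m'_j)_{j \in [r]}$ be a weighted $p$-bounded partition of $n_2$ attaining $\phi_k(n_2, p)$. Consider the combined sequences obtained by concatenation, namely the weights $(w_1 a_1, \dots, w_1 a_s, w_2 b_1, \dots, w_2 b_r)$ together with the sizes $(m_1, \dots, m_s, m'_1, \dots, m'_r)$.

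Next I would verify the three defining conditions of a weighted $p$-bounded partition of $n$ for this concatenation: each scaled weight lies in $[0,1]$ since $w_1, w_2 \in [0,1]$ and $a_i, b_j \in [0,1]$; each size lies in $[p]$ by assumption on the two input partitions; the weights sum to $w_1 \sum_i a_i + w_2 \sum_j b_j = w_1 + w_2 = 1$; and the sizes sum to $n_1 + n_2 = n$. The associated value of this partition is exactly
\[
	\sum_{i} w_1 a_i \, g_k(m_i) + \sum_{j} w_2 b_j \, g_k(m'_j) = w_1 \phi_k(n_1, p) + w_2 \phi_k(n_2, p),
\]
so by definition of $\phi_k(n, p)$ as a maximum, this value is bounded above by $\phi_k(n, p)$.

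There is no real obstacle here: the observation is essentially the statement that $\phi_k(\cdot, p)$ is superadditive in $n$ with respect to convex combinations, which follows mechanically from the definition by concatenating partitions. The only minor technical point worth flagging is that $\phi_k(n, p)$ is defined only for $n \in \N_+$, so if either $n_1$ or $n_2$ equals $0$ the corresponding term should be interpreted as vacuous (an empty partition contributes zero and requires no weights), and the argument still goes through with the non-degenerate side alone.
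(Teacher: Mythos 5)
Your proof is correct and is exactly the argument the paper intends: the observation is stated without proof precisely because it follows by concatenating the two witnessing partitions with weights scaled by $w_1$ and $w_2$, as you do. Your handling of the degenerate case $n_1 = 0$ or $n_2 = 0$ (treating the empty side as vacuous) is a reasonable and harmless addition.
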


\begin{lemma}\label{p:k-server-231-ub:ind1}
	For each $k \ge 2$ and $n, p \in \N$, we have $f_k(n,p) \le 2k + 2\frac{n}{p} - 2 + \phi_{k-1}(n, p)$.
\end{lemma}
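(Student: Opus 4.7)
The plan is to proceed by induction on $n$. For the base case $n = 0$, I would use \cref{eq:k-server-231-ub:base-n}: $f_k(0, p) = k - 1$, and since $\phi_{k-1}(0, p) \ge 0$, the target inequality $k - 1 \le 2k - 2 + \phi_{k-1}(0, p)$ holds for $k \ge 2$. For the inductive step, I would apply the recurrences \cref{eq:k-server-231-ub:step-unbal,eq:k-server-231-ub:step-bal}; both split into subproblems of strictly smaller length (since $X$ has its first entry split off before recursing), so the hypothesis applies to $f_k(n_1,p)$ and $f_k(n_2,p)$.

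In the \emph{unbalanced case} ($n_1 \le p$), applying IH to $f_k(n_2,p)$ gives
\[
 f_k(n,p) \le w_1 + w_1 \, g_{k-1}(n_1) + w_2\bigl(2k - 2 + 2n_2/p + \phi_{k-1}(n_2,p)\bigr).
\]
The key step is to observe that since $n_1 \le p$, prepending $(w_1, n_1)$ to a weighted $p$-bounded partition of $n_2$ realizing $\phi_{k-1}(n_2,p)$ (with the latter's weights scaled by $w_2$) yields a weighted $p$-bounded partition of $n$. This gives $w_1 \, g_{k-1}(n_1) + w_2 \, \phi_{k-1}(n_2,p) \le \phi_{k-1}(n,p)$. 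It then remains to verify the purely numerical inequality $w_1 + w_2(2k-2) + 2w_2 n_2/p \le (2k - 2) + 2n/p$, which rearranges to $w_1(2k-3) + 2(n - w_2 n_2)/p \ge 0$, holding trivially for $k \ge 2$ and $n_2 \le n$.

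In the \emph{balanced case} ($n_1 > p$), I would apply IH to both $f_k(n_1,p)$ and $f_k(n_2,p)$ and invoke \cref{p:k-server-231-ub:phi_obs} to combine $w_1\,\phi_{k-1}(n_1,p) + w_2\,\phi_{k-1}(n_2,p) \le \phi_{k-1}(n,p)$. The remaining numerical check is $2w_2 + 2(w_1 n_1 + w_2 n_2)/p \le 2n/p$, which simplifies via $n_1 + n_2 = n$ to $w_2 \, p \le w_1 n_2 + w_2 n_1$. Since $n_1 > p$ by the case hypothesis, we have $w_2 n_1 \ge w_2 p$, and the inequality holds.

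The hard part will be the balanced case: the algorithm pays an extra $2w_2$ there (shuttling one server from $c$ to $b$ and back), and this has to be absorbed into the $2n/p$ slack in the target bound. The crucial leverage is precisely the hypothesis $n_1 > p$ that triggers this case, ensuring $2 n_1/p > 2 \ge 2 w_2$. This is exactly the mechanism that caps the number of ``balanced'' invocations at roughly $n/p$, as foreshadowed in the informal sketch preceding the lemma.
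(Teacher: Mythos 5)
Your proof is correct and follows essentially the same route as the paper: induction on $n$, splitting into the two recurrence cases, absorbing the $g_{k-1}$/$\phi_{k-1}$ terms into $\phi_{k-1}(n,p)$ (you do this by an explicit partition construction where the paper invokes \cref{p:k-server-231-ub:phi_obs} together with $g_{k-1}(n_1)\le\phi_{k-1}(n_1,p)$ for $n_1\le p$), and then the same elementary estimates, using $2k-3\ge 0$ in the unbalanced case and $n_1>p$ to absorb the extra $2w_2$ in the balanced case. The bookkeeping is only cosmetically different, so there is nothing to change.
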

\begin{proof}
	Fix $k$ and $p$. We prove the claim by induction on $n$. For $n = 0$, we have $f_k(0,p) = k-1$ by \cref{eq:k-server-231-ub:base-n} and are done.
	
	Let $n \ge 1$. Now either \cref{eq:k-server-231-ub:step-unbal} or \cref{eq:k-server-231-ub:step-bal} applies. Let $w_1, w_2 \in [0,1]$ with $w_1 + w_2 = 1$ and let $n_1 + n_2 = n$.
	In the first case, we have $n_1 \le p$ and
	\begin{align*}
		f_k(n,p) & = w_1 + w_1\, g_{k-1}(n_1) + w_2\, f_k(n_2, p)\\
			& \le w_1 + w_1\, g_{k-1}(n_1) + w_2 \cdot (2k + 2 \tfrac{n_2}{p} - 2) + w_2\, \phi_{k-1}(n_2, p) & (\text{induction})\\
			& \le 1 + w_2 \cdot (2k + 2 \tfrac{n_2}{p} - 3) + \phi_{k-1}(n, p) & (\text{\cref{p:k-server-231-ub:phi_obs}})\\
			& \le 1 + (2k + 2 \tfrac{n_2}{p} - 3) + \phi_{k-1}(n, p) & (2k - 3 \ge 0)\\
			& = 2k + 2 \tfrac{n_2}{p} - 2 + \phi_{k-1}(n, p).
	\end{align*}
	
	In the second case, for some $n_1 > p$, we have
	\begin{align*}
		f_k(n,p) & = 2 w_2 + w_1\, f_k(n_1, p) + w_2\, f_k(n_2, p)\\
			& \begin{aligned}
				\le 2 w_2 + 2k - 2 + 2 w_1\tfrac{n_1}{p} + w_1\, \phi_{k-1}(n_1, p)\\
				+ 2 w_2\tfrac{n_2}{p} + w_2\, \phi_{k-1}(n_2, p)
			\end{aligned} & (\text{induction})\\
			& \le 2k + w_1( 2 \tfrac{n_1}{p} - 2 ) + 2 w_2\tfrac{n_2}{p} + \phi_{k-1}(n, p) & (\text{\cref{p:k-server-231-ub:phi_obs}})\\
			& \le 2k + 2 \tfrac{n_1}{p} - 2 + 2 \tfrac{n_2}{p} + \phi_{k-1}(p, n) & (2\tfrac{n_1}{p} - 2 \ge 0)\\
			& = 2k + 2 \tfrac{n}{p} - 2 + \phi_{k-1}(n, p).\tag*{\qedhere}
	\end{align*}
\end{proof}

\begin{lemma}\label{p:k-server-231-ub:ind2}
	For each $k, n \in \N$, we have $g_k(n) \le 4 k n^{1/k} + k(k-1)$.
\end{lemma}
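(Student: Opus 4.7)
The plan is to induct on $k$, using the recurrence encoded by Lemma~\ref{p:k-server-231-ub:ind1} together with Jensen's inequality.

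For the base case $k=1$, we have $g_1(n) = f_1(n, \floor{n^0}) = f_1(n,1) \le n+1$ by \eqref{eq:k-server-231-ub:base-k}, which is at most $4n = 4\cdot 1 \cdot n^{1/1} + 1 \cdot 0$ for $n \ge 1$. For $n = 0$ we use \eqref{eq:k-server-231-ub:base-n} to get $g_1(0) = 0$. The $n=0$ case for larger $k$ is handled analogously: $g_k(0) = k-1 \le k(k-1)$.

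For the inductive step, fix $k \ge 2$ and write $p = \floor{n^{(k-1)/k}}$. Lemma~\ref{p:k-server-231-ub:ind1} gives
\[
  g_k(n) \;=\; f_k(n,p) \;\le\; 2k + \tfrac{2n}{p} - 2 + \phi_{k-1}(n,p).
\]
To bound $\phi_{k-1}(n,p)$, take any weighted $p$-bounded partition $(w_i),(n_i)$ of $n$. By the inductive hypothesis applied termwise,
\[
  \sum_i w_i\, g_{k-1}(n_i) \;\le\; (k-1)(k-2) + 4(k-1) \sum_i w_i \, n_i^{1/(k-1)}.
\]
The key observation is that $n_i \le p \le n^{(k-1)/k}$ implies $n_i^{1/(k-1)} \le n^{1/k}$, so $\sum_i w_i n_i^{1/(k-1)} \le n^{1/k}$ and therefore $\phi_{k-1}(n,p) \le (k-1)(k-2) + 4(k-1)n^{1/k}$. (Note that this is simpler than a Jensen argument; we use only that $x \mapsto x^{1/(k-1)}$ is monotone on $[0,p]$.)

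It remains to show $\tfrac{2n}{p} - 2 \le 4n^{1/k} + 2(k-1) - 2k = 4n^{1/k} - 2$, i.e., $n/p \le 2n^{1/k}$. When $n^{(k-1)/k} \ge 2$, we have $p \ge n^{(k-1)/k}/2$ and so $n/p \le 2 n^{1/k}$; the remaining small values of $n$ (at most $n \le 2^{k/(k-1)} \le 4$) are absorbed directly by the slack $(k-1)(k-2) \le k(k-1)$ and the $2k-2$ gap between the bounds, which can be verified in a one-line check since $n^{1/k} \ge 1$ in that regime. Combining the three pieces,
\[
  g_k(n) \;\le\; (2k-2) + 4n^{1/k} + (k-1)(k-2) + 4(k-1)n^{1/k} \;=\; 4k\, n^{1/k} + k(k-1),
\]
as required. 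The only mildly delicate step is the small-$n$ rounding verification, which is the reason the final bound carries an additive $k(k-1)$ slack rather than a sharper constant; apart from that, the argument is a direct combination of Lemma~\ref{p:k-server-231-ub:ind1}, induction on $k$, and the monotone estimate $n_i^{1/(k-1)} \le p^{1/(k-1)} \le n^{1/k}$.
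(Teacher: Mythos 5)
Your proof is correct and follows essentially the same route as the paper: induction on $k$, the base cases from \eqref{eq:k-server-231-ub:base-k} and \eqref{eq:k-server-231-ub:base-n}, \cref{p:k-server-231-ub:ind1} for the inductive step, the termwise bound $n_i^{1/(k-1)} \le p^{1/(k-1)} \le n^{1/k}$ on $\phi_{k-1}(n,p)$, and $n/p \le 2n^{1/k}$. The only cosmetic differences are that the paper invokes Jensen's inequality where your monotonicity estimate suffices, and that it states $n/p \le 2n^{1/k}$ without the small-$n$ discussion (which indeed holds for all $n \ge 1$ since $\lfloor x \rfloor \ge x/2$ for $x \ge 1$).
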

\begin{proof}
	For $k = 1$, we have $g_1(n) = n + 1 \le 4n$ from \cref{eq:k-server-231-ub:base-k}.
	
	For $k \ge 2$, we first bound $\phi_{k-1}(n,p)$ with $p = \floor{n^{(k-1)/k}}$. For some weighted $p$-bounded partition $(w_i)_{i \in [t]}$, $(n_i)_{i \in [t]}$ of $n$, we have
	\begin{align*}
		\phi_{k-1}(n,p) & \le \sum_{i=1}^t w_i \cdot g_{k-1}(n_i)\\
			& \le \sum_{i=1}^t w_i \cdot (4 (k-1) n_i^{1/(k-1)} + (k-1)(k-2)) & (\text{induction})\\
			& \le (k-1)(k-2) + 4 (k-1) \sum_{i=1}^t w_i \cdot n_i^{1/(k-1)}. & (\textstyle \sum w_i = 1)
	\end{align*}
	
	Since $\sum w_i = 1$ and $n_i \le p$ for each $i \in [t]$, Jensen's inequality yields
	\begin{align*}
		\sum_{i=1}^t w_i \cdot n_i^{1/(k-1)} \le \left( \sum_{i=1}^t w_i \cdot n_i \right)^{1/(k-1)} \le p^{1/(k-1)} \le n^{1/k}.
	\end{align*}
	
	Thus, we have $\phi_{k-1}(n,p) \le (k-1)(k-2) + 4 (k-1) n^{1/k}$.
	
	Together with $n/p \le 2 n^{1/k}$, we get
	\begin{align*}
		g_k(n) & = f_k(n, n^{(k-1)/k}) \le 2k + 2\tfrac{n}{p} - 2 + \phi_{k-1}(n, p) & (\text{\cref{p:k-server-231-ub:ind1}})\\
			& \le 2k + 4 n^{1/k} - 2 + (k-1)(k-2) + 4 (k-1) n^{1/k}\\
			& \le 4k n^{1/k} + k(k-1). \tag*{\qedhere}
	\end{align*}
\end{proof}

\Cref{p:k-server-231-ub:ind2} implies \cref{p:k-server-231-ub}.

\subsubsection{Proper subclasses of \texorpdfstring{$\Av(231)$}{Av(231)}}\label{sec:k-server-sep-subclasses}

\restateKServerSepSubUB*
\begin{proof}
	An algorithm for serving a $(231,\pi)$-avoiding sequence is given in \cref{alg:k-server-sep-subclasses-ind}.
	We prove that \textsc{Serve}($\pi, X, a, c$) handles any  $(231,\pi)$-avoiding sequence $X$ by $2^{|\pi|+1}$ servers with cost $(c-a) \cdot2^{|\pi|+2}$ by induction on $|X|+|\pi|$.
	Clearly, the claim holds if $|\pi| \le 1$ or $|X|=0$.
	Otherwise, let $X = (b) \circ X_1 \circ X_2$ be the decomposition of $X$ and let $\pi = (p) \circ \alpha \circ \beta$ be the decomposition of $\pi$.
	Note that $X_1$ avoids $\alpha$ or $X_2$ avoids $\beta$, otherwise $X$ contains $\pi$.
	
	\begin{figure}
		\begin{algorithmic}
			\Procedure{Serve}{$\pi, X, a, c$}
				\LComment{Serve $X$ that avoids $231$ and $\pi$ with $2^{|\pi|+1}$ servers. Assume that all requests are in $[a,c]$. The servers start and end evenly split between $a$ and $c$.}
				\If{$X \neq \emptyset$}
					\State $b \gets $ first request in $X$
					\State $p \gets $ first element of $\pi$
					\State Decompose $X$ into $(b) \circ X_1 \circ X_2$ and $\pi$ into $(p) \circ \alpha \circ \beta$
					\If{$X_1$ avoids $\alpha$}
						\State Move $2^{|\alpha|}$ servers from $a$ to $b$
						\State \Call{Serve}{$\alpha, X_1, a, b$}\Comment{$2^{|\alpha|}$ servers at $b$, $2^{|\pi|}-2^{|\alpha|} \ge 2^{|\alpha|}$ at $a$}
						\State Move $2^{|\pi|}-2^{|\alpha|}$ servers from $a$ to $b$
						\State \Call{Serve}{$\pi, X_2, b, c$}\Comment{$2^{|\pi|}$ servers at both $b$ and $c$}
						\State Move $2^{|\pi|}$ servers from $b$ to $a$
					\Else\Comment{$X_2$ avoids $\beta$}
						\State Move all $2^{|\pi|}$ servers from $c$ to $b$
						\State \Call{Serve}{$\pi, X_1, a, b$}\Comment{$2^{|\pi|}$ servers at both $a$ and $b$}
						\State Move $2^{|\beta|}$ servers from $b$ to $c$
						\State \Call{Serve}{$\beta, X_2, b, c$}\Comment{$2^{|\beta|}$ servers at $c$, $2^{|\pi|}-2^{|\beta|} \ge 2^{|\beta|}$ at $b$}
						\State Move $2^{|\pi|} -2^{|\beta|}$ servers from $b$ to $c$
					\EndIf
				\EndIf
			\EndProcedure
		\end{algorithmic}
		\caption{Algorithm for serving a (231, $\pi$)-avoiding sequence with $2^{|\pi|+1}$ servers.}\label{alg:k-server-sep-subclasses-ind}
	\end{figure}
	
	First assume that $X_1$ avoids $\alpha$.
	In that case, the call $\textsc{Serve}(\alpha, X_1, a, b)$ serves $X_1$ with cost $(b-a) \cdot2^{|\alpha|+2}$ by induction since $|\alpha|<|\pi|$.
	The call $\textsc{Serve}(\pi, X_2, b, c)$ serves $X_2$ with cost $(c-b) \cdot 2^{|\pi|+2}$ by induction since $|X_2|<|X|$.
	Moreover, the cost of moving servers between these calls is $(b-a) \cdot 2 \cdot 2^{|\pi|}$.
	Therefore, the total cost is at most
	\begin{align*}
		&(b-a) \cdot2^{|\alpha|+2} + (c-b) \cdot 2^{|\pi|+2} + (b-a) \cdot 2 \cdot 2^{|\pi|}\\
		&\le (b-a) \cdot2^{|\pi|+1} + (b-a) \cdot 2^{|\pi|+1} + (c-b) \cdot 2^{|\pi|+2} \\
		&\le (b-a) \cdot2^{|\pi|+2} + (c-b) \cdot 2^{|\pi|+2} \le (c - a)\cdot 2^{|\pi|+2}.
	\end{align*}
	
	An analogous argument proves the case when $X_2$ avoids $\beta$ and the claim follows.
\end{proof}

\subsubsection{\texorpdfstring{$t$}{t}-separable permutations}\label{sec:k-server-sep-ub}

\restateKServerSepUB*

A sequence $X$ of pairwise distinct requests is \emph{$t$-separable} if it is order-isomorphic to a $t$-separable permutation $\pi$. Then, $X_1 \circ X_2 \circ \dots \circ X_t = X$ is a \emph{decomposition} of $X$ if the subsequences $X_i$ correspond to the parts of the $t$-separable permutation $\pi$. It follows that, for each $i \neq j$, all values in $X_i$ are smaller than all values in $X_j$ or vice versa.

Fix some $t \ge 2$. We describe an algorithm $\Call{Serve}{\ell, X, p, a, b}$ to serve $t$-separable input sequence $X \in [a,b]^n$ with $2^\ell$ servers, $\ell \ge 0$ (see \cref{alg:k-server-sep}). Assume $\ceil{2^{\ell-1}}$ servers start and end at $a$ and $\floor{2^{\ell-1}}$ servers start and end at $b$.
Parameter $p$ is a threshold to lower the number of servers, like in \cref{sec:k-server-231-ub}. For the initial call, let $p = \floor{n^{\ell/(\ell+1)}}$.

\begin{figure}[h]
	\begin{algorithmic}
		\Procedure{Serve}{$\ell, X, p, a, b$}
			\LComment{Serve a $t$-separable request sequence $X \in [a,b]^n$, with $2^\ell \ge 1$ servers. Half of the servers each start at $a$ and at $b$. Half of the servers each end at $a$ and at $b$}
			\If{$\ell = 0$}
				\State Greedily serve $X$
			\Else
				\State Let $X_1 \circ X_2 \circ \dots \circ X_t$ be a decomposition of $X$
				\For{$i \in [t]$}
					\State Let $a_i = \min(X_i), b_i = \max(X_i)$.
					\If{$|X_i| > p$}
						\State Move $2^{\ell-1}$ servers to $a_i$ and $2^{\ell-1}$ servers to $b_i$
						\State \Call{Serve}{$\ell, X_i, p, a_i, b_i$}
					\Else
						\If{there is exactly one $i^*$ with $|X_{i^*}| > p$}
							\If{Values in $X_i$ are smaller than values in $X_{i^*}$}
								\State Move $2^{\ell-2}$ servers from $a$ to $a_i$ and $2^{\ell-2}$ servers from $a$ to $b_i$
							\Else
								\State Move $2^{\ell-2}$ servers from $b$ to $a_i$ and $2^{\ell-2}$ servers from $b$ to $b_i$
							\EndIf
						\Else
							\State Move $2^{\ell-1}$ servers to $a_i$ and $2^{\ell-1}$ servers to $b_i$
						\EndIf
						\State \Call{Serve}{$\ell-1, X_i, |X_i|^{\ell/(\ell+1)}, a_i, b_i$}¸
					\EndIf
				\EndFor
				\State Move $2^{\ell-1}$ servers to $a$ and $2^{\ell-1}$ servers to $b$
			\EndIf
		\EndProcedure
	\end{algorithmic}
	\caption{An algorithm for serving a $t$-separable request sequence.}\label{alg:k-server-sep}
\end{figure}

The idea of the algorithm is similar to the one in \cref{sec:k-server-231-ub}. For each block $X_i \in [a_i,b_i]^{n_i}$, we first move half of the servers to $a_i$ and half of the servers to $b_i$. If $|X_i| > p$, we serve $X_i$ recursively with all servers. Otherwise, we ignore half of the servers and call $\Call{Serve}{\ell-1, X_i, p_i, a_i, b_i}$, where $p_i = \floor{|X_i|^{\ell/(\ell+1)}}$.

Observe that we often move servers and end up not using them in the recursive call. Usually, this does not affect our cost bound, but there is one special case where we need to be more careful. If there is only one block $X_i$ with $|X_i| > p$, we handle the other blocks as follows. If the values in $X_j$ are smaller than the values in $X_i$, we only use the servers with positions below $X_i$ to serve $X_j$. Otherwise, we only use the servers above $X_i$. Observe that no servers cross $X_i$, except in the recursive call for $X_i$. Hence, we indeed always have the necessary $2^{\ell-1}$ servers below and above $X_i$.

We now bound the cost of the algorithm. Fix $t \ge 2$ and let $f_\ell(n,p)$ be the maximum cost of the algorithm when $X \in [0,1]^n$ and $p, \ell$ are as given.
As observed earlier in \cref{sec:k-server-231-ub}, the cost of $\Call{Serve}{\ell, X, p, a, b}$ is $(b-a) f_\ell(n,p)$.

We further write $g_\ell(n) = f_\ell(n, \floor{n^{\ell/(\ell+1)}})$, and let $\phi_k(n,p)$ be the maximum value for $\sum_{i=1}^t w_i\, g_k(n_i)$ over all weighted $p$-bounded partitions $(w_i)_{i \in [t]}$, $(n_i)_{i \in [t]}$ of $n$.

\begin{lemma}\label{p:k-server-sep-ub-ind1}
	$f_\ell(n,p) \le 2^\ell (t+1) (\frac{n}{p}+1) + \phi_{\ell-1}(n,p)$ for each $\ell \ge 1$.
\end{lemma}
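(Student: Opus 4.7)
The plan is to prove the lemma by induction on $n$, paralleling the strategy used for Lemma \ref{p:k-server-231-ub:ind1} but now with $t$-ary decomposition instead of binary. The base case $n=0$ is immediate: $\Call{Serve}{\ell,\emptyset,p,a,b}$ performs only moves of at most $2^\ell$ servers, so its normalized cost is at most $2^\ell(t+1)$, which already fits the claimed bound because $n/p+1\ge 1$ and $\phi_{\ell-1}(0,p)\ge 0$.

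For the inductive step, I would normalize to $[a,b]=[0,1]$ and consider the decomposition $X=X_1\circ\cdots\circ X_t$ with $|X_i|=n_i$ (so $\sum n_i=n$) and widths $w_i=b_i-a_i$ (so $\sum w_i\le 1$). The cost decomposes into three parts: (i) movement between consecutive blocks plus the final return to $a$ and $b$, bounded by $2^\ell(t+1)$ since the total server mass moved per block is at most $2^\ell$ and each such move traverses a distance at most $1$; (ii) for each \emph{big} block with $n_i>p$, the recursive contribution $w_i\cdot f_\ell(n_i,p)$, bounded by $w_i\bigl(2^\ell(t+1)(n_i/p+1)+\phi_{\ell-1}(n_i,p)\bigr)$ via the induction hypothesis; and (iii) for each \emph{small} block with $n_i\le p$, the contribution $w_i\cdot g_{\ell-1}(n_i)$ from the level-$(\ell-1)$ call.

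Summing (ii) and (iii) gives the quantity $\sum_{i:n_i>p} w_i\,\phi_{\ell-1}(n_i,p)+\sum_{i:n_i\le p} w_i\,g_{\ell-1}(n_i)$, which I would bound by $\phi_{\ell-1}(n,p)$ by generalizing Observation \ref{p:k-server-231-ub:phi_obs} to $t$ parts: if each $\phi_{\ell-1}(n_i,p)$ is realized by an internal partition $(v_{ij},n_{ij})$ of $n_i$ into parts of size $\le p$, then the concatenated collection $(w_iv_{ij},n_{ij})_{i,j}$ together with the atomic contributions from small blocks forms a weighted sub-partition of $n$; padding with a dummy part of weight $1-\sum w_i$ and size $0$ (using $g_{\ell-1}(0)=0$) yields a legal weighted $p$-bounded partition. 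The remaining movement part of the bound is $2^\ell(t+1)\bigl[1+\sum_{i:n_i>p} w_i(n_i/p+1)\bigr]\le 2^\ell(t+1)\bigl(1+n/p+1\bigr)$ using $\sum w_in_i\le n$ and $\sum w_i\le 1$, which yields the desired bound (possibly up to an additive constant of $1$ hidden in $(n/p+1)$, which one reabsorbs using the asymmetric treatment of the ``unique big block'' case).

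The main technical obstacle I anticipate is keeping the movement cost tight enough to match the $(n/p+1)$ factor rather than $(n/p+2)$. This requires exploiting the branch in \cref{alg:k-server-sep} where a unique big block $X_{i^*}$ causes the small blocks to be served by only $2^{\ell-1}$ servers (moved from a single side of $[a,b]$), which shaves the movement cost in exactly the configurations where the inductive bound would otherwise be slack. The other subtle point is the extension of Observation \ref{p:k-server-231-ub:phi_obs} to sub-partitions whose weights sum to strictly less than $1$; this is essentially a monotonicity/padding argument, but one has to justify that the $g_{\ell-1}(0)$ term is negligible (which follows from the trivial case of the algorithm).
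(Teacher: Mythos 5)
Your plan follows the same route as the paper (induction on $n$, splitting the cost into inter-block movement, recursive calls on large blocks bounded by the induction hypothesis, level-$(\ell-1)$ calls on small blocks, and merging the resulting terms into $\phi_{\ell-1}(n,p)$ via a padded, $t$-ary version of \cref{p:k-server-231-ub:phi_obs}); that part is sound. The gap is in the movement bookkeeping, which is exactly where the content of this lemma lies. Your estimate $2^\ell(t+1)\bigl[1+\sum_{i\in J} w_i(n_i/p+1)\bigr]$ with $J=\{i : n_i>p\}$, bounded only via $\sum w_i n_i\le n$ and $\sum w_i\le 1$, gives $2^\ell(t+1)(n/p+2)$, and your proposed repair of the surplus $2^\ell(t+1)$ --- the asymmetric branch of \cref{alg:k-server-sep} --- is only available when $|J|=1$. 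When $|J|\ge 2$ the algorithm runs its symmetric branch, no movement is saved, and your stated inequalities genuinely do not recover the $+1$ bound; when $|J|=0$ there is nothing to fix. So the case $|J|\ge 2$ is left without an argument.

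The missing step, which is the crux of the paper's proof in that case, is to exploit $n_i>p$ \emph{inside} the sum: since $n_i/p\ge 1$ and $w_i\le 1$, one has $w_i\,n_i/p\le n_i/p+(w_i-1)$, hence $\sum_{i\in J} w_i(n_i/p+1)\le n/p+2W_J-|J|$ with $W_J=\sum_{i\in J}w_i$, and $2W_J\le |J|$ holds precisely because $|J|\neq 1$ (trivially for $|J|=0$, and from $W_J\le 1$ for $|J|\ge 2$). Combined with the unique-large-block case, where the non-recursive movement is only $(1-w_{i^*})\,2^\ell(t+1)$ and exactly cancels the extra $w_{i^*}\,2^\ell(t+1)$ coming from the recursive call, this yields the claimed $(n/p+1)$. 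As written, your argument establishes only the weaker $(n/p+2)$ bound (which would in fact still push through the asymptotics of \cref{p:k-server-sep-ub-ind2}, but it is not the stated lemma), and the attribution of the slack recovery solely to the unique-large-block branch is incorrect in scope.
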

\begin{proof}
	Let $\alpha_\ell = 2^\ell (t+1)$. We need to show that $f_\ell(n,p) \le \alpha_\ell + \alpha_\ell \tfrac{n}{p} + \phi_{\ell-1}(n,p)$.
	
	Consider a call \Call{Serve}{$\ell, X, p, a, b$}. Without loss of generality, let $a = 0$ and $b = 1$. Let $X_1 \circ X_2 \circ \dots \circ X_t$ be a decomposition of $X$, and let $n_i = |X_i|$, $w_i = \max(X_i)-\min(X_i)$.
	Let $I \subseteq [t]$ be the set of indices of \emph{small} blocks, i.e., let $I = \{i \in [t] \mid |X_i| \le p\}$. Let $J = [t] \setminus I$ be the set of \emph{large} blocks.
	
	Apart from recursive calls, the algorithm moves each server at most $(t+1)$ times (once for each block, and back to $a$ and $b$ at the end), for a total cost of at most $(t+1) \cdot 2^\ell = \alpha_\ell$.
	If there is exactly one large block $X_{i^*}$, then no server ever crosses $X_{i^*}$ (outside of recursive calls), so the total cost is at most $(1-w_{i^*}) \alpha_\ell$.
	
	We now prove our claim by induction on $n$. If $n \le p$, then there are no large blocks, so we have $f_\ell(n,p) = (t+1) \cdot 2^\ell + \phi_{\ell-1}(n,p)$ and are done.
	
	Now suppose $n > p$, so there may be large blocks.
	Let $h_\ell(i)$ be the cost of serving $X_i$ (after moving servers in position). If $n_i \le p$, then $h_\ell(i) = w_i \cdot g_{\ell-1}(n_i) \le w_i \cdot \phi_{\ell-1}(n_i)$. Otherwise, we have
	\begin{align*}
		h_\ell(i) = w_i f_\ell(n_i, p) & \le w_i \cdot \alpha_\ell (\tfrac{n_i}{p}+1) + w_i \cdot \phi_{\ell-1}(n,p) & (\text{induction})\\
			& \le w_i \alpha_\ell + \alpha_\ell \tfrac{n_i}{p} + (w_i-1)\alpha_\ell + w_i \cdot \phi_{\ell-1}(n,p). & (\tfrac{n}{p} \ge 1)
	\end{align*}
	
	If $J = \{i^*\}$, i.e., there is precisely one large block $X_{i^*}$, we have
	\begin{align*}
		f_\ell(n,p) & \le (1-w_{i^*}) \alpha_\ell + \sum_{i = 1}^t h_\ell(i)\\
			& \le (1-w_{i^*}) \alpha_\ell + w_{i^*} \alpha_\ell + w_{i^*} \alpha_\ell \tfrac{n_{i^*}}{p} + w_{i^*} \cdot \phi_{\ell-1}(n,p) + \sum_{i \in I} w_i \cdot \phi_{\ell-1}(n_i,p)\\
			& \le \alpha_\ell + \alpha_\ell \tfrac{n}{p} + \phi_{\ell-1}(n,p).
	\end{align*}
	
	Otherwise, let $W_J = \sum_{i \in J} w_i$, and we have
	\begin{align*}
		f_\ell(n,p) & \le \alpha_\ell + \sum_{i = 1}^t h_\ell(i)\\
			& \le \alpha_\ell + \sum_{i \in I} w_i \phi_{\ell-1}(n_i,p) + \sum_{i \in J} w_i \alpha_\ell + \alpha_\ell \tfrac{n_i}{p} + (w_i-1)\alpha_\ell + w_i \cdot \phi_{\ell-1}(n,p)\\
			& \le \alpha_\ell + \alpha_\ell \tfrac{n}{p} + \phi_{\ell-1}(n,p) + W_J \alpha_\ell + (W_J - |J|) \alpha_\ell\\
			& \le \alpha_\ell + \alpha_\ell \tfrac{n}{p} + \phi_{\ell-1}(n,p).
	\end{align*}
	The last inequality uses $2W_J \le |J|$, which follows from $|J| \neq 1$.
\end{proof}

\begin{lemma}\label{p:k-server-sep-ub-ind2}
	$g_\ell(n) \le 2^{\ell+2} (t+1) (n^{1/(\ell+1)} + 1)$ for all $\ell \ge 0$.
\end{lemma}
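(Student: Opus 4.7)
The plan is to prove the bound by induction on $\ell$, using Lemma~\ref{p:k-server-sep-ub-ind1} as the main tool and Jensen's inequality to control the $\phi_{\ell-1}$ term.

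For the base case $\ell = 0$, observe that $g_0(n) = f_0(n,\floor{n})$, and with one server running greedily on a sequence of $n$ requests in $[a,b]$, the total cost is at most $n \cdot (b-a)$. Hence $g_0(n) \le n \le 4(t+1)(n+1)$ since $t \ge 2$, which matches the desired bound.

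For the inductive step, assume $g_{\ell-1}(m) \le 2^{\ell+1}(t+1)(m^{1/\ell}+1)$ for all $m$, and set $p = \floor{n^{\ell/(\ell+1)}}$. Lemma~\ref{p:k-server-sep-ub-ind1} gives
\[ g_\ell(n) = f_\ell(n,p) \le 2^\ell(t+1)\left(\tfrac{n}{p}+1\right) + \phi_{\ell-1}(n,p). \]
To bound $\phi_{\ell-1}(n,p)$, fix any weighted $p$-bounded partition $(w_i),(n_i)$ of $n$. By induction,
\[ \sum_i w_i\, g_{\ell-1}(n_i) \le 2^{\ell+1}(t+1)\left( \sum_i w_i\, n_i^{1/\ell} + 1\right). \]
Since $x \mapsto x^{1/\ell}$ is concave on $(0,\infty)$, Jensen's inequality gives $\sum_i w_i\, n_i^{1/\ell} \le (\sum_i w_i\, n_i)^{1/\ell} \le p^{1/\ell} \le n^{1/(\ell+1)}$, using $n_i \le p$ and $\sum_i w_i = 1$. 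Thus $\phi_{\ell-1}(n,p) \le 2^{\ell+1}(t+1)(n^{1/(\ell+1)}+1)$.

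It remains to bound $n/p$. For $n$ large enough that $n^{\ell/(\ell+1)} \ge 2$ (small $n$ can be absorbed into the additive term), we have $p \ge \tfrac{1}{2} n^{\ell/(\ell+1)}$ and hence $n/p \le 2\, n^{1/(\ell+1)}$. Combining,
\[ g_\ell(n) \le 2^\ell(t+1)\bigl(2 n^{1/(\ell+1)} + 1\bigr) + 2^{\ell+1}(t+1)\bigl(n^{1/(\ell+1)} + 1\bigr) \le 2^{\ell+2}(t+1)\bigl(n^{1/(\ell+1)}+1\bigr), \]
since the leading terms sum to $2^{\ell+2}(t+1) n^{1/(\ell+1)}$ and the constant terms sum to $3 \cdot 2^\ell(t+1) \le 2^{\ell+2}(t+1)$, as desired. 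The main subtlety is verifying that the small-$n$ regime (where the floor in $p$ matters, or where $n^{\ell/(\ell+1)} < 2$) is safely absorbed by the additive $+1$ in the bound; this can be checked directly, since for such $n$ the algorithm's cost is trivially at most $2^\ell(t+1)$, which is already within $2^{\ell+2}(t+1)$.
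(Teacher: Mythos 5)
Your proof is correct and follows essentially the same route as the paper's: induction on $\ell$, applying \cref{p:k-server-sep-ub-ind1}, bounding $\phi_{\ell-1}(n,p)$ via the inductive hypothesis and Jensen's inequality with $n_i \le p$, and using $n/p \le 2n^{1/(\ell+1)}$ before summing the terms. The only cosmetic difference is your separate treatment of small $n$, which is not actually needed, since $\floor{x} \ge x/2$ for all $x \ge 1$ already gives $p \ge \tfrac12 n^{\ell/(\ell+1)}$ and hence $n/p \le 2n^{1/(\ell+1)}$ for every $n \ge 1$.
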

\begin{proof}
	We trivially have $g_0(n) = n$.
	
	Suppose now $\ell \ge 1$ and let $p = \floor{n^{\ell/(\ell+1)}}$.
	By \cref{p:k-server-sep-ub-ind1}, for some weighted $p$-bounded partition $(w_i)_{i \in [u]}$, $(n_i)_{i \in [u]}$ of $n$, we have
	\begin{align*}
		g_\ell(n) = f_\ell(n,p) & \le 2^\ell (t+1) ( \tfrac{n}{p} + 1 ) + \phi_{\ell-1}(n,p)\\
			& \le 2^\ell (t+1) + 2^\ell (t+1) \cdot 2 n^{1/(\ell+1)} + \sum_{i=1}^u w_i \cdot 2^{\ell+1} (t+1) (n_i^{1/\ell} + 1)\\
			& \le 2^\ell (t+1) + 2^{\ell+1} (t+1) n^{1/(\ell+1)} + 2^{\ell+1} (t+1) + 2^{\ell+1} (t+1) \cdot p^{1/\ell}\\
			& \le 2^{\ell+2} (t+1) + 2^{\ell+2} (t+1) n^{1/(\ell+1)}.
	\end{align*}
	The third inequality uses $n_i \le p$ and Jensen's inequality.
\end{proof}

\Cref{p:k-server-sep-ub-ind2} with $\ell = \floor{\log k}$ immediately implies \Cref{p:k-server-sep-ub}.

\subsection{Lower bounds}

In the following, if $\alpha \in \R$ and $X = (x_1, x_2, \dots, x_n)$, we write $\alpha X = (\alpha x_1, \alpha x_2, \dots, \alpha x_n)$ and $\alpha + X = (\alpha + x_1, \alpha + x_2, \dots, \alpha + x_n)$. If $X, Y$ are sequences, we write $X \circ Y$ for the concatenation of $X$ and $Y$.

\subsubsection{Bounded twin-width}\label{sec:k-server-tww-lb}

\begin{figure}
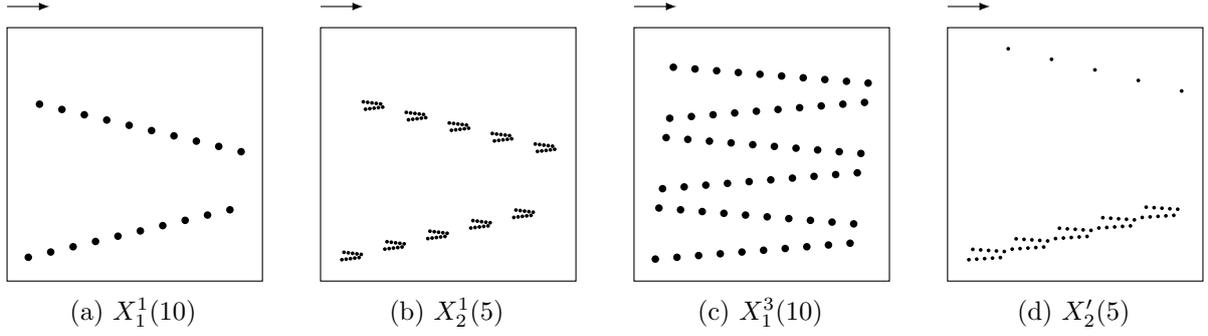

	\def\margin{.1}
	\def\ksbox{%
		\draw (0-\margin,0-\margin) rectangle (1+\margin,1+\margin);
		\draw[-latex] (0-\margin,1+2*\margin) -- (0-\margin+0.2,1+2*\margin);
	}
	\tikzset{
		point/.style={fill, circle, inner sep = 1pt}
	}
	
	\centering
	\begin{subfigure}{.25\textwidth}
		\centering
		\begin{tikzpicture}[scale=2.8]
			\ksbox
			\input{figs/k-server-lb-1.tex}
		\end{tikzpicture}
		\caption{$X_1^1(10)$}\label{sfig:k-server-lb-sep-1}
	\end{subfigure}%
	\begin{subfigure}{.25\textwidth}
		\centering
		\begin{tikzpicture}[scale=2.8, point/.append style={inner sep = 0.5pt}]
			\ksbox
			\input{figs/k-server-lb-2.tex}
		\end{tikzpicture}
		\caption{$X_2^1(5)$}\label{sfig:k-server-lb-sep-2}
	\end{subfigure}%
	\begin{subfigure}{.25\textwidth}
		\centering
		\begin{tikzpicture}[scale=2.8]
			\ksbox
			\input{figs/k-server-lb-tww-1.tex}
		\end{tikzpicture}
		\caption{$X_1^3(10)$}\label{sfig:k-server-lb-tww-1}
	\end{subfigure}%
	\begin{subfigure}{.25\textwidth}
		\centering
		\begin{tikzpicture}[scale=2.8, point/.append style={inner sep = 0.5pt}]
			\ksbox
			\input{figs/k-server-lb-3.tex}
		\end{tikzpicture}
		\caption{$X'_2(5)$}\label{sfig:k-server-lb-213}
	\end{subfigure}
	\caption{Lower bound constructions.}
\end{figure}

\restateKServerTwwLB*

We first illustrate the idea for $k \in \{1,2\}$ and $d=1$. \Cref{sfig:k-server-lb-sep-1} sketches a sequence $X$ with $n$ values (properly defined below).
The sequence is separable, i.e., has twin-width~1.
The cost for one server is clearly $\Theta(n)$, since the server has to switch between the left and right part in every step. On the other hand, two servers can serve the sequence with cost $\fO(1)$, by positioning one server at each part.

To construct a sequence that is hard for two servers, we take $X$ and replace each point with a copy of $X$ itself, scaled down by a factor of roughly $\frac{1}{n}$. The resulting sequence, shown in \cref{sfig:k-server-lb-sep-2}, is still separable (it is order-isomorphic to the inflation of $X$ with $2n$ copies of itself). Now consider serving it with two servers. For each small copy of $X$, we essentially have two choices. Either use both servers, which means that afterwards, at least one server has to be moved over to the other side for the next small copy of $X_1(n)$, for a cost of $\Omega(1)$; or use only one server, which costs $\Omega(\frac{1}{n})$ for each point, for a total cost of again $\Omega(1)$. Overall, the cost is $\Omega(n)$ for a point set of size $n^2$. For larger $k$, we recursively inflate the construction.

If $d > 1$, we can make the construction a little more ``efficient'' by using multiple interleaved copies of $X$ (\cref{sfig:k-server-lb-tww-1}) as the base construction, which is again recursively inflated.
We proceed with the formal definition and proof of \cref{p:k-server-tww-lb}.


Let $X$ be a sequence of reals in~$[0,1]$.
For $k,d \in \N$, we define the sequence $S_d(X,k)$ as follows. Let $\alpha = \frac{1}{4dk}$, let $Y^j_i = \alpha( 4kj + i + X )$, and let $Z^j_i = \alpha ( 4jk+3k - i - X )$ for $i \in \{0,1,\dots,k-1\}$ and $j \in \{0,1,\dots,d-1\}$.
We define
\begin{align*}
	S_d(X,k) &= W_0  \dots \circ W_{k-1} \text{, where}\\
	W_i &= Y^0_i \circ Z^0_i \circ \dots \circ Y^{d-1}_i \circ Z^{d-1}_i.
\end{align*}

We call each $W_i$ an \emph{epoch} and each $Y_i^j$, $Z_i^j$ a \emph{block} of $S_d(X,k)$. Each block is a copy of $X$ that is scaled down by $\alpha$ and shifted, so $S_d(X,k)$ contains $2dk|X|$ values in total.


\begin{lemma}
We have $\tww(S_d(X,k)) = \max(\tww(X),d)$.
\end{lemma}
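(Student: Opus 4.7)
The plan is to prove the two inequalities $\tww(S_d(X,k))\ge\max(\tww(X),d)$ and $\tww(S_d(X,k))\le\max(\tww(X),d)$ separately, exploiting the structural feature that $S_d(X,k)$ consists of $2dk$ pairwise disjoint blocks, each order-isomorphic to $X$ or to its y-complement, partitioned into $d$ \emph{layers} whose y-projections are pairwise disjoint: the multiplicative factor $4k$ in the definitions of $Y^j_i$ and $Z^j_i$ forces layer $j$ to occupy the y-interval $[\alpha\cdot 4kj,\alpha(4kj+3k)]$, disjoint from every other layer, and the blocks within a layer are further separated by an additive shift of order $\alpha$.

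For the lower bound, $\tww(S_d(X,k))\ge\tww(X)$ is immediate from monotonicity of twin-width (\cref{obs:tww}) since any single block is a scaled copy of $X$. To additionally obtain $\tww(S_d(X,k))\ge d$, I would select one representative point from each block $Y^j_i$ for $(j,i)\in\{0,\dots,d-1\}\times\{0,\dots,k-1\}$: because the dominant term in each y-coordinate is $4kj$, the y-ranking of these $dk$ points is lexicographic in $(j,i)$ while their temporal order is lexicographic in $(i,j)$. A short reindexing identifies this sub-pattern with a symmetry of the canonical $d\times k$ grid permutation; in the regime $k\ge d$ (the only one where the bound $d$ is non-trivial) it contains a canonical $d\times d$ grid, whose twin-width is $d$ by \cref{lem:canonical-grid-tww}, and monotonicity concludes.

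For the upper bound, set $d'=\max(\tww(X),d)$ and build a $d'$-wide merge sequence in two intertwined parts. Inside each of the $2dk$ blocks I run an optimal merge sequence for $X$ of width $\tww(X)$; because block projections are pairwise disjoint in both axes, every sub-rectangle produced in this way stays inside its block's bounding box and is homogeneous with everything outside, so the red degree contributed is at most $\tww(X)\le d'$. Once each block has been collapsed to a single block-rectangle, the non-homogeneity graph on the $2dk$ block-rectangles depends only on layer membership (y-overlap) and on x-adjacency, and this pattern is, up to a symmetry, a direct analogue of the canonical $d\times 2k$ grid permutation. I would then apply the merge strategy from the proof of \cref{lem:canonical-grid-tww} to the block-rectangles, processing them in a balanced order (layer by layer across epochs, or epoch by epoch within layers, as dictated by the analogue construction) so that every rectangle produced is non-homogeneous with at most $d-1$ others.

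The main obstacle is the second part: a naïve pair-merge of $(Y^j_i,Z^j_i)$ produces a rectangle non-homogeneous with all $2(k-1)$ remaining atoms of layer $j$, badly exceeding $d-1$ when $k$ is large. The fix is to interleave the block-internal merges with the cross-block merges, only partially collapsing blocks before beginning to join their layer, and to mimic the column-by-column/row-by-row scheduling from the proof of \cref{lem:canonical-grid-tww} so that the growing rectangle's x- and y-extent expand in a balanced way. Verifying the degree bound at every intermediate state is the heart of the proof.
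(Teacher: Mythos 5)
Your lower bound picks the wrong half of the construction. Taking one representative from each $Y^j_i$ gives a permutation whose temporal order is lexicographic in $(i,j)$ and whose value order is lexicographic in $(j,i)$, i.e.\ a $k$-column, $d$-row grid pattern in which \emph{both} rows and columns are increasing. This is not a symmetry of the canonical grid permutation: for $d=k=2$ it is $1324$, which is separable (twin-width $1$) and does not contain the canonical $2\times 2$ grid $3142$, so your claim that for $k\ge d$ this sub-pattern contains a canonical $d\times d$ grid fails exactly in the regime you restrict to (it only becomes true for substantially larger $k$). The repair is to take representatives of the $Z^j_i$ blocks instead: their values are lexicographic in $(j,-i)$, so each layer is a \emph{decreasing} row and the resulting pattern is a reflection of the canonical $k\times d$ grid; \cref{lem:canonical-grid-tww} then gives twin-width $\min(k,d)=d$ once $k\ge d$, and monotonicity (\cref{obs:tww}) finishes. (Restricting to $k\ge d$ is indeed necessary for the $\ge d$ direction -- for $k=1$ the whole sequence is increasing -- but with the $Y$-blocks the argument already breaks at $k=d$.)

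Your upper bound stops precisely where the work lies, and its diagnosis of the obstacle is partly off. The paper sidesteps your two-phase bookkeeping: $S_d(X,k)$ is an inflation of $S_d((\tfrac12),k)$ by copies of $X$ (and reflections of $X$), so by \cref{obs:tww} it suffices to treat $|X|=1$, where the sequence is a $k\times 2d$ grid permutation with alternating increasing/decreasing rows. Moreover, once every block is collapsed, the $2dk$ block-rectangles have pairwise disjoint $x$- and $y$-projections and are therefore pairwise \emph{homogeneous}; the danger is not "layer membership and $x$-adjacency" but solely the \emph{order} of the cross-block merges: merging the pair $Y^j_i, Z^j_i$ of an early epoch engulfs the $y$-ranges of all later blocks of layer $j$, whereas no interleaving of block-internal and cross-block merges is needed at all. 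The schedule that works -- and that your sketch does not supply -- is the paper's: for each layer $j$, first merge $Y^j_{k-1}$ with $Z^j_{k-1}$ (consecutive in time, and their bounding box contains no other block), then for $i=k-2,\dots,0$ absorb $Z^j_i$ and then $Y^j_i$, processing each epoch from its last block backwards; every layer-rectangle then remains homogeneous with all unmerged blocks, the red degree never exceeds $d-1$, and the final $d$ layer-rectangles are merged in any order. Mimicking the column-by-column strategy from the proof of \cref{lem:canonical-grid-tww}, as you propose, does not transfer to this alternating-row grid; exhibiting and verifying a concrete schedule is the heart of the lemma and is missing from your proposal.
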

\begin{proof}
It suffices to prove that $\tww(S_d(X,k)) = d$ when $|X| = 1$.
The claim then follows by the behavior of twin-width with respect to inflations (see \cref{obs:tww}).

Without loss of generality, let $X = (\frac{1}{2})$, so each block consists of a single value. Observe that $S_d(X,k)$ is order-isomorphic to a $k \times 2d$ grid permutation, where each epoch corresponds to a column, and the rows alternate between being increasing ($Y_0^j, Y_1^j, \dots, Y_{k-1}^j$) and decreasing ($Z_0^j, Z_1^j, \dots, Z_{k-1}^j$). See \cref{sfig:k-server-lb-tww-1} for an example.


We first merge each pair of neighboring rows into a single rectangle, gradually from the end of the sequence.
More precisely, we start by merging $Y_{k-1}^j$ and $Z_{k-1}^j$ into a rectangle $R_j$, for each $j \in \{0,1,\dots,d-1\}$. Then, for $i = k-2, k-3, \dots, 0$, we merge each $Z_{k-1}^j$ into $R_j$ and then each $Y_{k-1}^j$ into $R_j$.
At any time, each rectangle $R_j$ is homogeneous to every non-rectangle value (but may be non-homogeneous with any other rectangle $R_{j'}$), so the rectangle family is always $d$-wide.

In the end, we simply merge the $d$ rectangles (each corresponding to a double row) in any order.
\end{proof}

Let $X^d_1(n) = S_d(X,n)$, where $X = (\tfrac{1}{2})$. For $t \ge 2$, let $X^d_t(n) = S(X^d_{t-1}(n), n)$.

\begin{lemma}\label{p:k-server-tww-constr}
	Serving $X^d_t(n)$ with strictly less than $(2d)^t$ servers has cost at least $n / (8d)^t$.
\end{lemma}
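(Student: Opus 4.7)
The plan is to prove the claim by induction on $t$, using a per-epoch charging argument in the outer $n$-epoch structure of $X_t^d(n)$. For the induction to go through, I will actually prove the stronger statement that the bound $n/(8d)^t$ holds for \emph{arbitrary} initial server positions in $[0,1]$; this matters when the inductive hypothesis is applied to an inflation block whose servers may start essentially anywhere.

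For the base case $t = 1$, a direct computation on the formulas for $Y_i^j$, $Z_i^j$ with $X = (\tfrac{1}{2})$ and $k = n$ shows that any two request values inside a single epoch $W_i$ of $X_1^d(n)$ are separated by at least $\alpha(n+1) \ge 1/(4d)$, where $\alpha = 1/(4dn)$. Each epoch contains $2d$ distinct requests but only $s < 2d$ servers, so by pigeonhole some server handles two requests in the epoch and pays at least $1/(4d)$ of movement during $W_i$. Summing this disjoint contribution over the $n$ epochs yields total cost at least $n/(4d) \ge n/(8d)$.

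For the inductive step, let $s < (2d)^t$. Since $X_t^d(n) = S_d(X_{t-1}^d(n), n)$, each outer epoch $W_i$ contains $2d$ blocks, each a copy of $X_{t-1}^d(n)$ scaled by $\alpha = 1/(4dn)$; a calculation mirroring the base case shows that any two distinct blocks in the same epoch are separated by at least $1/(4d)$. For each epoch I will distinguish two cases. Either some server is \emph{responsible} (visits a request) in two different blocks of $W_i$, and then it incurs at least $1/(4d)$ of movement inside $W_i$; or every responsible server touches a unique block, and then since $s < 2d \cdot (2d)^{t-1}$, pigeonhole forces some block to be handled by fewer than $(2d)^{t-1}$ responsible servers, so the inductive hypothesis applied to that block (a scaled copy of $X_{t-1}^d(n)$, served by those servers from whatever positions they happen to start in) yields cost at least $\alpha \cdot n/(8d)^{t-1} = 1/(4d \cdot (8d)^{t-1})$ locally. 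In either case an amount at least $1/(4d \cdot (8d)^{t-1})$ is charged to $W_i$, and summing over the $n$ epochs gives total cost at least $n/(4d \cdot (8d)^{t-1}) \ge n/(8d)^t$.

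The main care needed will be verifying that the charges are disjoint so that they sum correctly: movements in distinct epochs occur in disjoint time intervals, and within a single epoch the ``inside-a-block'' cost of the second case is separated in time from the ``between-blocks'' cost of the first case, so no movement is ever double-counted. The strengthening to arbitrary initial positions is essentially free, because the pigeonhole lower bounds above depend only on the request values and not on where the servers happen to be at the start.
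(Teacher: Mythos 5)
Your proof is correct and takes essentially the same route as the paper's: induction on $t$, the $1/(4d)$ spatial separation between distinct blocks of an epoch, and the per-epoch pigeonhole dichotomy between a server serving requests in two blocks (cost $1/(4d)$) and a block handled by fewer than $(2d)^{t-1}$ servers (handled by induction). The only differences are cosmetic --- you charge each epoch with the minimum of the two cases, while the paper counts ``saturated'' epochs globally and argues by majority --- and your explicit strengthening of the induction to arbitrary starting positions makes precise a point the paper leaves implicit.
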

\begin{proof}
Serving $X^d_1(n)$ with $(2d)^1-1=2d-1$ servers costs at least $1/(2d)$ per epoch, for a total of $\frac{1}{2d} n \ge n/(8d)$.
	
Now let $t \ge 2$ and consider a solution serving $X^d_t(n)$ with at most $(2d)^t-1$ servers.
We say that an epoch is \emph{saturated} if all of its blocks are touched by at least $(2d)^{t-1}$ servers.
Let $u$ be the number of saturated epochs.

If $u \leq \frac{1}{2} n$, then in at least $\frac{1}{2}n$ epochs there is a block served by less than $(2d)^{t-1}$ servers.
By induction, each of these blocks incurs a cost of $\frac{1}{4nd} \frac{n}{(8d)^{t-1}} = 1/(4d \cdot (8d)^{t-1})$, for a total cost of $n/(8d)^t$, as desired.

Otherwise, $u > \frac{1}{2} n$.
In each of the $u$ satured epochs, there is at least one server that touches two different blocks.
The distance between different blocks in the same epoch is at least $\frac{1}{4d}$ and thus, each saturated epoch incurs a cost of at least $\frac{1}{4d}$, for a total cost of $\frac{1}{4d} u > \frac{1}{8d} n \ge \frac{n}{(8d)^t}$.
\end{proof}

To prove \cref{p:k-server-tww-lb}, let $m = \frac{1}{2d} \floor{n^{1/t}}$ and $t = \floor{\log_{2d} k} + 1$.
Then $X^d_t(m)$ is of length $(2dm)^t \le n$, and serving it with $k < (2d)^t$ servers costs, by \cref{p:k-server-tww-constr},
\begin{align*}
	\frac{m}{(8d)^t} = \frac{1}{2d} \cdot \frac{1}{ (8d)^{\floor{\log_{2d} k} + 1}} \floor{n^{1/t}} \ge \frac{1}{16d^2 k^{1 + \log_{2d} 4}} \floor{n^{1/t}} \ge \frac{1}{16d^2 k^{3}} \floor{n^{1/t}}.
\end{align*}

\subsubsection{231-avoiding inputs}\label{sec:k-server-231-lb}

\restateKServerAvLB*

The construction from \cref{sec:k-server-tww-lb} with $d=1$ is separable (\cref{sfig:k-server-lb-sep-1,sfig:k-server-lb-sep-2}), but does not avoid any of the patterns 132, 213, 231, or 312. In fact, $X_k^1(2)$ contains every separable permutation of length~$k$. However, we can modify the construction to avoid one of those patterns, at the cost of a worse lower bound (which is necessary because of \cref{p:k-server-231-ub}). The recursive construction is modified as follows: Instead of inflating every value in the sequence, we only inflate the lower half (see \cref{sfig:k-server-lb-213}). A formal description follows.

Let $X$ be a sequence of reals in $[0,1]$. For $k \in \N$, define the sequence $S(X,k)$ as follows. Let $\alpha = \frac{1}{4k}$, let $Y_i = \alpha i + \alpha X$, and let $z_i = 1 - \alpha i$ for $i \in \{0,1,\dots,k-1\}$. Let
\begin{align*}
	S'(X,k) = Y_0 \circ (z_0) \circ \dots \circ Y_{k-1} \circ (z_{k-1}).
\end{align*}
Note that $S'(X,k)$ contains $k|X|$ values in $[0,\frac{1}{4}]$ and $k$ values in $[\frac{3}{4},1]$. As in \cref{sec:k-server-tww-lb}, call $Y_0, Y_1, \dots, Y_{k-1}$ \emph{blocks}.

We now argue that $S'(X,k)$ avoids 231 if $X$ avoids 231. Suppose $S'(X,k)$ contains an occurrence $(a,b,c)$ of 231. If any two of the three values are in a single block, then the third must be in the same block, so $X$ contains 231, a contradiction. If the three values are in two or three distinct blocks, then 231 must be the \emph{sum} of two smaller permutations, which is not true.
We also clearly cannot have $a = z_i$ or $c = z_i$ for any $z_i$.
The only remaining possibility is that $b = z_i$ for some $z_i$. Then $a$ is in some block $Y_j$ with $j \le i$, and $c$ is in some block $Y_\ell$ with $i < \ell$. But then $a < c$, a contradiction.

Let $X'_1(n) = S'(X,n)$, where $X = (\frac{1}{2})$. For $t \ge 2$, let $X'_t(n) = S(X'_{t-1}(n), n)$. By the discussion above, $X'_t(n)$ avoids 231 for all $t, n$.

\begin{lemma}\label{p:k-server-231-constr}
	Serving $X'_k(n)$ with at most $k$ servers has cost at least $n / 4^k$.
\end{lemma}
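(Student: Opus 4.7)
The plan is to proceed by induction on $k$, strengthening the statement to allow the $k$ servers to start at arbitrary positions in $[0,1]$ rather than at $0$; this flexibility is necessary because when the induction is applied to a block of $X'_k(n)$, the participating servers inherit whatever positions they held after the previous request. The base case $k=1$ is immediate: $X'_1(n)$ consists of $2n$ requests strictly alternating between values in $[0,1/4]$ (the single-point blocks $Y_i = \alpha(i+1/2)$) and values in $[3/4,1]$ (the points $z_i$), so any two consecutive requests are at distance at least $1/2$ and a single server pays at least $(2n-1)/2 \ge n/4$ for $n \ge 1$.

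For the inductive step with $k \ge 2$, I fix any $k$-server schedule for $X'_k(n) = S'(X'_{k-1}(n), n)$ and classify each block $Y_i$ as \emph{light} if at most $k-1$ of the $k$ servers serve some request inside $Y_i$, and \emph{heavy} otherwise. For a light block, the touching servers, restricted to $Y_i$'s time interval, constitute a valid solution with arbitrary starting positions for the scaled-and-shifted copy of $X'_{k-1}(n)$ that is $Y_i$; since cost is monotone non-increasing in the number of servers, the inductive hypothesis yields in-block cost at least $\alpha \cdot n/4^{k-1} = 1/4^k$, where $\alpha = 1/(4n)$. For a heavy block, every server visits $[\alpha i, \alpha(i+1)] \subseteq [0,1/4]$ at some request of $Y_i$, so in particular the server that eventually serves $z_i \ge 3/4$ must, by the one-dimensional triangle inequality, travel total distance at least $1/2$ between its last visit to $Y_i$ and the step serving $z_i$.

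The crucial bookkeeping observation is that these contributions lie in pairwise disjoint time intervals: the light-block contribution for $Y_i$ is charged to requests inside $Y_i$, while the heavy-block transit for $Y_i$ is charged to the interval between one server's last visit in $Y_i$ and the step serving $z_i$, and successive blocks are separated by exactly the $z_i$ points. Letting $a$ and $b$ denote the numbers of light and heavy blocks respectively, the total cost is therefore at least $a/4^k + b/2 \ge (a+b)/4^k = n/4^k$, using $1/2 \ge 1/4^k$ for $k \ge 1$. The main technical care will go into verifying that the scaled sub-problem for a light block really does fall under the IH—servers might temporarily leave $[\alpha i, \alpha(i+1)]$ or even $[0,1]$ during $Y_i$, but clamping positions to $[0,1]$ never increases cost since all requests lie in that range—and into setting up the induction with arbitrary starting positions from the start, so that no additive $O(k)$ term is lost at each recursive level.
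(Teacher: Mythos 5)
Your proposal is correct and follows essentially the same route as the paper's proof: classify each block by whether all $k$ servers touch it, charge heavy blocks $\ge \tfrac12$ for the transit to $z_i$ and light blocks $\ge \tfrac{1}{4n}\cdot\tfrac{n}{4^{k-1}}$ by induction, and sum. Your explicit strengthening to arbitrary starting positions and the disjoint-charging check are sensible points of care that the paper's (terser) argument leaves implicit, but they do not change the underlying approach.
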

\begin{proof}
	Serving $X_1(n)$ with $2^1-1=1$ server costs at least $\frac{1}{2}$ per move, for a total of $\frac{1}{2} (2n-1) \ge n/4$.
	
	Now let $k \ge 2$ and consider a solution serving $X'_k(n)$ with $k$ servers.
	Let $u$ be the number of blocks that are touched by all $k$ servers. For each such block $Y_i$, immediately after serving the block, we have to move a server to $z_i$, with cost at least~$\frac12$.
	
	For each of the $(n-u)$ blocks that are touched by at most $k-1$ servers, by induction, the block incurs a cost of $\frac{1}{4n} \cdot n/4^{k-1}$. The overall cost is thus at least
	\begin{align*}
		& \frac12 \cdot u + \frac{1}{4n} \cdot \frac{n}{4^{k-1}} \cdot  (n-u)\\
		& = \left( \frac12 - \frac{1}{4^k} \right) u + \frac{n}{4^k} \ge \frac{n}{4^k}.\tag*{\qedhere}
	\end{align*}
\end{proof}

To prove \cref{p:k-server-312-lb}, let $m = \frac12 \floor{n^{1/k}}$. The sequence $X'_k(m)$ has length at most $(2m)^k \le n$, and serving it with $k$ servers costs (by \cref{p:k-server-231-constr}):
\begin{align*}
	\frac{1}{4^k} m = \frac12 \cdot \frac{1}{4^k} \floor{n^{1/k}}.
\end{align*}

\section{Euclidean TSP}\label{sec:mst}

 In this section, we show upper and lower bounds on the optimum euclidean TSP tour of a point set. As mentioned in \S\,\ref{sec1}, several characteristics of point sets are known to be within a constant factor of the TSP optimum, and it will be helpful to use them in our proofs when showing asymptotic bounds. 

For a point set $P$, let $\MST(P)$ be the cost of the \emph{euclidean minimum spanning tree} on $P$, i.e., the minimum spanning tree on $G_P$, where $G_P$ is the complete graph on $P$ where each edge has weight equal to the distance between its two endpoints.
Further, let $\NN(P) = \sum_{x \in P} d_x$, where $d_x$ is the minimum distance between $x$ and a different point from $P$ (i.e., the \emph{nearest neighbor}).
It is easy to see that \[ \tfrac12 \NN(P) \le \MST(P) \le \TSP(P) \le 2 \MST(P). \]
Let $\MStT(P)$ denote the minimum $\MST(P')$ over all supersets $P' \supseteq P$, i.e., the \emph{minimum euclidean Steiner tree}. We have
\[ \MStT(P) \le \MST(P) \le 1.22 \MStT(P), \] where the first inequality is trivial and the second is due to Chung and Graham~\cite{ChungGraham1985}.


\restateThmGeneralMST*
\begin{proof}
	Let $n$ be the size of $P$. We show $\MST(P) \in \fO( c_\pi \log n )$.
	
	Given a merge sequence $\cR_1, \dots, \cR_n$ of $P$, we construct a spanning tree as follows. Replay the merge sequence, and whenever two rectangles $Q_1$ and $Q_2$ are merged, connect an arbitrary point in $Q_1$ with an arbitrary point in $Q_2$. Observe that after step~$i$, the points in every rectangle $Q \in \cR_i$ are connected via a spanning tree. Thus, we obtain a spanning tree $T$ of $P$ at the end.
	
	To control the total length of $T$, use a distance-balanced $10 c_\pi$-wide merge sequence (Theorem~\ref{thm:decomposition}). 
	Edge $e_i$ added in step $i$ of the construction is contained in rectangle $S_i$ that is newly created in step $i$. Thus, the length of $e_i$ is bounded by the dimensions of $S_i$.
	\Cref{p:bal-rect-size} implies that the sum of the dimensions of $S_1, S_2, \dots, S_n$ is $\fO( c_\pi \log n )$.
\end{proof}

\subsection{231-avoiding point sets}

\begin{figure}
	\def\k{5}
	\pgfmathsetmacro\n{2^\k-1}
	\pgfmathsetmacro\na{2^(\k-1)-1}
	\pgfmathsetmacro\nb{2^(\k-2)-1}
	\newcommand{\measureDist}{\n*0.05}
	
	\hfill
	\begin{subfigure}{.4\textwidth}
		\centering
		\begin{tikzpicture}[
				scale=5/\n,
				yscale=-1,
				box/.style=subrect
			]
			\draw (0,0) rectangle (\n+1, \n+1);
			
			\node[point] (p) at (1,\na+1) {};
			\node[below right] at (p) {$p$};
			
			\node[point] (r) at (2,\na+\nb+2) {};
			\node[above right] at (r) {$r$};
			\draw[box] (3,\na+\nb+3) rectangle node {$A_1$} (\nb+2,\n);
			\draw[box] (\nb+3,\na+2) rectangle node {$A_2$} (\na+1,\na+\nb+1);
	
			\draw[box] (\na+2,1) rectangle node {$B$} (\n,\na);
			
			\draw[|-|] (1,\na+1-\measureDist) -- node[above] {$2^{\ell-2}$}
				(\nb+2,\na+1-\measureDist);
			
			\draw[|-|] (\na+1+\measureDist,\na+1) -- node[right] {$2^{\ell-2}$}
				(\na+1+\measureDist,\na+\nb+2);
		\end{tikzpicture}
		\caption{Distance between $p$ and its proper descendants.}\label{sfig:mst-sep-lb:desc}
	\end{subfigure}%
	\hfill
	\begin{subfigure}{.4\textwidth}
		\centering
		\begin{tikzpicture}[
				scale=5/\n,
				box/.style={draw, fill={white!80!cyan}},
				graybox/.style={fill, {white!85!black}}
			]
			
			\begin{scope}[shift={(\nb+1, \nb+1)}]
			\pgfmathsetmacro\ngap{\nb+1}
			\draw[box] (1,1) rectangle (\na,\na);
			\node[point] (p) at (1,\nb+1) {};
			\node[right] at (p) {$p$};
			
			\draw[graybox] (-\ngap,0) rectangle (0,-\ngap);
			\node at (-\ngap/2, - \ngap/2) {$q$?};
			\draw[|-|] (1-\measureDist,0) -- node[left] {$2^{\ell-1}$}
			(1-\measureDist,\nb+1);
			
			\draw[graybox] (\na+1+\ngap,\na+1) rectangle (\na+1,\na+1+\ngap);
			\node at (\na+1+\ngap/2, \na+1+\ngap/2) {$q$?};
			\draw[|-|] (1,\na+\measureDist) -- node[above] {$2^\ell - 1$} (\na+1,\na+\measureDist);
			\end{scope}
			
			\draw (0,0) rectangle (\n+1, \n+1);
		\end{tikzpicture}
		\caption{Distance between $p$ and an unrelated point $q$.}\label{sfig:mst-sep-lb:unrelated}
	\end{subfigure}
	\hfill
	\caption{Sketches for the proof of \cref{p:mst-sep-lb}.} 
\end{figure}

\restateMSTSepLowerBound*
\begin{proof}
	Let $k \in \N_+$. We recursively define a point set $P_k$ of size $n = 2^k-1$ on the integer grid $[n] \times [n]$ and an associated rooted tree $T_k$ with node set $P_k$.
	We show that $\NN(P_k) \ge k 2^{k-2} > \frac{1}{4} n \log n$, even in $L_\infty$. This implies the theorem (after scaling).
	
	$P_k$ consists of a single point $p_k = (1, 2^{k-1})$ and two shifted copies of $P_{k-1}$, one directly to the right and below $p$, and the other in the top right corner.
	More formally, let $P_1 = \{p_1\} = \{(1,1)\}$, with $T_1$ being the tree on the single node $p_1$. For $k \ge 2$, let $P_k = \{p_k\} \cup A \cup B$, where
	\begin{align*}
		& A = \{ (x+1, y) \mid (x,y) \in P_{k-1} \} \text{ and}\\
		& B = \{ (x+2^{k-1}, y+2^{k-1}) \mid (x,y) \in P_{k-1} \}.
	\end{align*}
	
	Observe that $P_k$ is in general position, avoids 231, and contains precisely $2^k-1$ points in $[2^k-1]^2$. Define $T_k$ as the tree with $p_k$ at its root, and the trees associated to $A$ and $B$ as subtrees.
	
	For a point $p \in P_k$, let the \emph{level} $\ell_p$ of $p$ in $T_k$ be the depth of the subtree of $T_k$ rooted at $p$. Observe that $p$ is the leftmost point in a shifted copy of $P_{\ell_p}$. Call that copy $Q_p$.
	
	We now give lower bounds for all distances between points.
	
	First, let $p, q \in P_k$ such that $q$ is a proper descendant of $p$, and let $\ell$ be the level of~$p$.
	\Cref{sfig:mst-sep-lb:desc} shows $Q_p$.
	Observe that $q \in Q_p$, and further, there is a child $r$ of $p$ such that $q \in Q_r$.
	Let $A_1, A_2$ be the point sets corresponding to the subtrees of $r$.
	If $q = r$, then $d(p,q) \ge 2^{\ell-2}$. The same is true if $q \in A_1 \cup A_2$. If $q \in B$, we even have $d(p,q) \ge 2^{\ell-1}$. In any case, $d(p,q) \ge 2^{\ell-2}$.
	
	Now suppose $p, q \in P_k$ are unrelated in $T_k$, and let $\ell$ be the level of $p$. Then $q$ is below and to the left of all points in $Q_p$, or $q$ is above and to the right of all points in $Q_p$. \Cref{sfig:mst-sep-lb:unrelated} clearly shows that then $d(p,q) \ge 2^{\ell-1}$.
	
	Our two observations imply that for each point $p$ of level $\ell$, the distance to the nearest neighbor in $P_k$ is at least $2^{\ell-2}$. The total sum of nearest neighbor distances is therefore
	\begin{align*}
	\sum_{\ell=1}^k 2^{k+1-\ell} \cdot 2^{\ell-2} = k \cdot 2^{k-2}.\tag*{\qedhere}
	\end{align*}
\end{proof}

\restateMSTSepSubclasses*
\begin{proof}
	Let $k = |\pi|$ and $n = |P|$.
	We slightly strengthen the claim to prove it by induction on $n+k$. For every point set $P \subset [0,w]\times[0,h]$ avoiding 231 and $\pi$, there is a spanning tree on a superset of $P' = P \cup \{(0,0), (w,0), (0,h), (w,h)\}$ of total weight at most $2k (w+h)$. This implies $\TSP(P) \le 3 \MStT(P) \le 12k$, as desired.
	
	If $k = 1$, then $P = \emptyset$, and thus $\MST(P') \le 2(w+h)$. If $n = 1$, then clearly $\MST(P) = 0$.
	Now suppose $k, n \ge 2$. Since $P$ and $\pi$ both avoid 231, they can be decomposed as shown in \cref{sfig:mst-231-av-struct}, where $P = \{(0,h_A)\} \cup A \cup B$ with $A \subset [0,w_A] \times [0,h_A]$ and $B \subset [w_A,w] \times [h_A,h]$, and $\pi = (p) \circ \alpha \circ \beta$.
	We additionally write $w_B = w-w_A$ and $h_B = h-h_A$. Note that $A$ avoids $\alpha$ or $B$ avoids $\beta$, otherwise $P$ contains $\pi$.\footnote{One of $\alpha$ and $\beta$ might be empty, in which case it cannot be avoided.}
	
	\begin{figure}
		\def\bboxDist{0.2}
		\def\measureDist{0.2}
		
		\begin{subfigure}{.3\textwidth}
			\centering
			\begin{tikzpicture}[
					yscale=-1,
					box/.style={draw, fill={white!80!cyan}},
					pointheap/.style={box, circle, inner sep=3mm},
					newpoint/.style={point, red},
					edge/.style={thick},
					newedge/.style={thick, red, densely dashed}
				]
				
				\begin{scope}[local bounding box=figA]
					\draw[box] (0,1) rectangle node {$A$} (1,2);
					\draw[box] (1,0) rectangle node {$B$} (2,1);
					
					\node[point] (root) at (0,1) {};
					
					\draw[|-|] (0,0-\measureDist) -- node[above] {$w_A$} (1,0-\measureDist);
					\draw[-|]  (1,0-\measureDist) -- node[above] {$w_B$} (2,0-\measureDist);
					
					\draw[|-|] (2+\measureDist,0) -- node[right] {$h_B$} (2+\measureDist,1);
					\draw[-|]  (2+\measureDist,1) -- node[right] {$h_A$} (2+\measureDist,2);
					
					\node at (0,-0.5) {};
					\node at (0,2) {};
				\end{scope}
				\draw ($(figA.north west)-(\bboxDist,\bboxDist)$) rectangle ($(figA.south east)+(\bboxDist,\bboxDist)$);
			\end{tikzpicture}
			\caption{A 231-avoiding point set.\\\strut}\label{sfig:mst-231-av-struct}
		\end{subfigure}%
		\hfill
		\begin{subfigure}{.6\textwidth}
			\centering
			\begin{tikzpicture}[
					yscale=-1,
					box/.style={draw, fill={white!80!cyan}},
					pointheap/.style={box, circle, inner sep=3mm},
					newpoint/.style={point, red},
					edge/.style={thick},
					newedge/.style={thick, red, densely dashed}
				]
				\begin{scope}[local bounding box=figB]
					\node[pointheap] (A) at (0.5,1.5) {};
					\node[pointheap] (B) at (1.5,0.5) {};
					
					\node[newpoint] (bl) at (0,0) {};
					\node[point]   (blA) at (0,1) {};
					\node[point]   (tlA) at (0,2) {};
					\node[point]   (blB) at (1,0) {};
					\node[point]   (c)   at (1,1) {};
					\node[point]   (trA) at (1,2) {};
					\node[point]   (brB) at (2,0) {};
					\node[point]   (trB) at (2,1) {};
					\node[newpoint] (tr) at (2,2) {};
					
					\draw (A) -- (blA) (A) -- (tlA) (A) -- (c) (A) -- (trA);
					\draw (B) -- (blB) (B) -- (c) (B) -- (brB) (B) -- (trB);
					
					\draw[newedge]
							(bl) -- node[above] {$w_A$} (blB)
							(tr) -- node[right] {$h_A$} (trB);
					
					\node at (0,-0.5) {};
					\node at (0,2) {};
				\end{scope}
				\draw ($(figB.north west)-(\bboxDist,\bboxDist)$) rectangle ($(figB.south east)+(\bboxDist,\bboxDist)$);
			\end{tikzpicture}
			\caption{Recursively constructing a spanning tree if $A$ avoids~$\alpha$. Newly added edges and points are colored \textcolor{red}{red}.}\label{sfig:mst-231-av-constr}
		\end{subfigure}
		\caption{Sketches for \cref{p:mst-sep-subclasses-ind}.}
	\end{figure}
	
	We recursively construct spanning trees on $A' = A \cup \{(0,0), (0,h_A), (w_A,0), (w_A,h_A)\}$ and $B' = B \cup \{(w_A,h_A), (w_A,h), (w,h_A), (w,h)\}$. 
	 Together, these two trees already form a spanning tree on a subset of $P'$ excluding the two points $(w,0)$ and $(0,h)$.
	To connect the two remaining points, we distinguish between two cases:
	\begin{itemize}
		\item $A$ avoids $\alpha$. Then we connect $(w,0)$ to $(w,h_A)$ and $(0,h)$ to $(w_A,h)$. The total cost of the spanning trees is (by induction, and considering $|\alpha| \le |\pi|-1$):
		\begin{align*}
			& 2 |\alpha| (w_A+h_A) + 2 |\pi| (w_B+h_B) + (w_A+h_A) \le 2 |\pi|(w+h).
		\end{align*}
		\item $B$ avoids $\beta$. Then we connect $(w,0)$ to $(w_B,0)$ and $(0,h)$ to $(0,h_A)$. We obtain the desired bound similarly to the previous case.\qedhere
	\end{itemize}
\end{proof}

\subsection{Twin-width lower bound}

Next, we show that there are point sets $P$ such that the constant factor hidden in $\fO(\log n)$ is at least $\Omega(d / \log d)$ where $d$ is the twin-width of $P$.

\restateMSTTwwLB*


For $d \ge 2$, let $G_d$ be the point set with coordinates $((i-1)\frac{1}{d}+(d-j+1)\frac{1}{d^2}, (j-1)\frac{1}{d} + i\frac{1}{d^2})$ for $i,j\in [d]$, i.e., the $d \times d$ canonical grid scaled to fit $[0,1]^2$ (see \cref{sfig:can-grid}). By \cref{lem:canonical-grid-tww} we have $\tww(G_d) = d$.

First, we show that any Steiner tree of $G_d$ must have weight at least linear in $d$. Indeed, since for every pair $p_1, p_2 \in G_d$ we have the distance $d(p_1,p_2) \geq \frac{1}{d}$, it follows that $\NN(G_d) \geq d^2 \cdot \frac{1}{d} = d$. Thus, $\MStT(P) \geq \frac{0.5}{1.22} \cdot \NN(P) \geq 0.4 d$.

Now we inductively define for each $d \ge 2$ and $t \ge 1$ a point set $P^d_t$.
Let $P^d_1 = G_d$ and for $t \ge 2$, let $P^d_t$ be the point set obtained by inflating each point of~$G_d$ with a properly scaled-down copy of $P^d_{t-1}$.
Formally, we set $P^d_t = \bigcup_{p \in G_d} p + \frac{1}{d^2}P^d_{t-1}$.

\begin{lemma}\label{p:mst-tww-constr}
	For every $d \ge 40$ and $t \ge 1$, we have $\MStT(P^d_t) \ge \frac{d}{5} t$.
\end{lemma}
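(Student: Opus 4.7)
The plan is to proceed by induction on $t$. The base case $t=1$ is immediate from the earlier observation that $\MStT(P^d_1) = \MStT(G_d) \ge 0.4\,d \ge d/5$. For the inductive step, let $T$ be an optimal Steiner tree of $P^d_t$ of length $L$, and enclose each inflated copy $C_p = p + \tfrac{1}{d^2}\,P^d_{t-1}$ in the axis-aligned square $B_p = p + [0, 1/d^2]^2$ of perimeter $4/d^2$. The squares $B_p$ are pairwise disjoint, since the minimum $L_\infty$-distance in $G_d$ is $1/d \gg 1/d^2$. Split the length as $L = L_{\mathrm{in}} + L_{\mathrm{out}}$ with $L_{\mathrm{in}} = \sum_{p \in G_d} \mathrm{length}(T \cap B_p)$.

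For the inner bound I would observe that for each $p$, the subforest $T \cap \overline{B_p}$ together with $\partial B_p$ is a connected superset of $C_p$, and hence a valid (generalized) Steiner tree for $C_p$ of length at most $\mathrm{length}(T \cap B_p) + 4/d^2$. Summing over $p$ and using the scaling identity $\sum_p \MStT(C_p) = d^2 \cdot \tfrac{1}{d^2}\,\MStT(P^d_{t-1}) = \MStT(P^d_{t-1})$ gives $L_{\mathrm{in}} \ge \MStT(P^d_{t-1}) - 4$.

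For the outer bound I would apply the same augmentation globally, defining $T^{*} = (T \setminus \bigcup_p \mathrm{int}(B_p)) \cup \bigcup_p \partial B_p$, of total length at most $L_{\mathrm{out}} + 4$. The crux is that $T^{*}$ is connected: since $T$ is a tree, any simple path in $T$ enters each $\overline{B_p}$ in a single contiguous subpath between two boundary points $e_{\mathrm{in}}, e_{\mathrm{out}} \in \partial B_p$, and replacing this interior portion by the boundary arc from $e_{\mathrm{in}}$ to $e_{\mathrm{out}}$ yields a path in $T^{*}$. Because each grid point $p \in G_d$ is a corner of $B_p$ and hence lies on $\partial B_p \subseteq T^{*}$, the connected set $T^{*}$ spans $G_d$, giving $L_{\mathrm{out}} + 4 \ge \MStT(G_d) \ge 0.4\,d$.

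Combining the two bounds with the inductive hypothesis yields
\[
L \;\ge\; \MStT(P^d_{t-1}) + 0.4\,d - 8 \;\ge\; \frac{d(t-1)}{5} + 0.4\,d - 8 \;=\; \frac{dt}{5} + \frac{d}{5} - 8,
\]
which is at least $\frac{dt}{5}$ precisely when $d \ge 40$. The main obstacle in this plan is justifying connectedness of $T^{*}$, which is where the tree structure of $T$ is essential (guaranteeing that each simple path enters each box boundary at a single entry/exit pair); the rest reduces to direct bookkeeping with scaling and perimeters.
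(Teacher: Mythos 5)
Your proof is correct and follows essentially the same route as the paper's: induction on $t$, splitting an optimal Steiner tree of $P^d_t$ into the portions inside and outside the boxes $p+[0,1/d^2]^2$, paying total length $4$ to add the box boundaries to each portion so that the inner parts dominate $\MStT(P^d_{t-1})$ and the outer part dominates $\MStT(G_d)\ge 0.4d$, and closing with the same arithmetic using $d\ge 40$. One small correction: a simple path in $T$ may enter a box's interior several times rather than in a single contiguous subpath, but replacing each maximal interior excursion (whose endpoints necessarily lie on the box boundary) by a boundary arc repairs the connectedness argument, and for this only connectedness of $T$ is needed, not its tree structure.
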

\begin{proof}
	\newcommand{\rI}{\mathrm{I}}
	\newcommand{\rO}{\mathrm{O}}
	We proceed by induction on~$t$.
	When $t=1$,  we have $\MStT(P^d_1) \ge 0.4 d > \frac{d}{5}$.
 

	Now let $t > 1$ and let $T$ be a minimum Steiner tree of $P^d_t$.
	For every $p \in G_d$, let $A_p$ be the box $[0,\frac{1}{d^2}]$ translated such that its bottom left corner coincides with~$p$, i.e., $ A_p = p + [0,\frac{1}{d^2}]^2$.
	Observe that each $A_p$ contains precisely one small copy of $P^d_{t-1}$. 
	We split all the edges of $T$ into parts that lie inside and outside of $\bigcup_{p \in G_d} A_p$.
	Let $T_\rI$ be the part inside $\bigcup_{p \in G_d} A_p$ and let $W_\rI$ be its total weight.
	Similarly, let $T_\rO$ be the part outside $\bigcup_{p \in G_d} A_p$ and let $W_\rO$ be its total weight.

	First, let us bound $W_\rO$.
	If we add to $T_\rO$ the boundary of each $A_p$ for $p \in G_d$, we obtain a 
 Steiner tree of $G_d$.\footnote{Or, technically, a supergraph of a Steiner tree.}
	The weight of the added edges is exactly $d^2 \frac{4}{d^2} = 4$, so  
 by the above observation 
 we get
	$W_\rO \ge 0.4d - 4$.
	We proceed to bound $W_\rI$.
	Inside each $A_p$, we get a Steiner tree $T_p$ of the point set $\frac{1}{d^2} P^d_{t-1}$ by adding the boundary of $A_p$ to the part of $T_\rI$ that lies inside $A_p$.
	By induction, the weight of $T_p$ must be at least $\frac{1}{d^2}\frac{d}{5} (t-1) = \frac{t-1}{5 d}$.
	Summing over $A_p$ for every $p \in G_d$ and subtracting the total length of their boundaries, we get
	\[W_\rI \ge d^2 \cdot \frac{t-1}{5 d} - d^2 \frac{4}{d^2} \ge \frac{d}{5} (t-1) - 4.\]

	The total weight of $T$ is then
	\[W_\rI + W_\rO \ge \frac{d}{5} (t-1) + \frac{2d}{5} - 8 \ge \frac{d}{5} (t-1) + \frac{2d}{5} - \frac{d}{5} = \frac{d}{5} t,\]
	where the second inequality holds since $d \ge 40$.
\end{proof}

To prove \cref{p:mst-tww-lb}, consider two separate cases.
If $d < 40$, apply \cref{p:mst-sep-lb} to obtain a separable  point set $P$ such that $\TSP(P) \in \Omega(\log n)$.
Otherwise $d \ge 40$ and we apply \cref{p:mst-tww-constr} with $t = \floor{\frac12 \log_d n} = \left\lfloor\frac{\log n}{2 \log d}\right\rfloor$.
Then $P^d_t$ is a point set of size $d^{2t} \le n$ and twin-width~$d$ such that $\TSP(P^d_t) \in \Omega(\frac{d}{\log d} \log n)$.

\section{Open questions}
\label{sec8}

In this paper we initiated the study of optimization problems with pattern-avoiding input, with three central problems as case studies. Our work raises several open questions and directions; we list the ones we find the most interesting. 

\paragraph{BST}

\begin{itemize}
\item Our upper bound $\fO(n \, {c_\pi^2})$ and lower bound $\Omega(n\log{c_\pi})$ for $\OPT$ hold for each avoided pattern $\pi$. While we know special families of patterns where the upper bound is not tight~\cite{Isaac18}, we have no examples where the lower bound is not tight. 
\item Our upper bound holds for the offline optimum of serving BST access sequences. Can similar bounds be shown for online algorithms, such as splay or greedy trees? 
\end{itemize}

\paragraph{k-server}
\begin{itemize}
\item We give tight bounds on the $k$-server cost of pattern-avoiding access sequences. Yet, a full characterization in terms of the concrete avoided pattern $\pi$ is still missing (gaps between $n^{\Omega(1/k)}$ and $n^{\fO(1/\log k)}$). We do establish that the cost can have at least three different growth rates depending on $\pi$: $\fO(1)$ if $\pi$ is monotone; $n^{\Theta(1/k)}$ if $\pi$ is non-monotone and has length three; and $n^{\Theta(1/ \log k)}$ if $\pi$ is non-separable.


\item Our results concern the offline cost. Does the online \emph{competitive ratio} $k$ improve under pattern-avoidance? Note that a lower bound of $k$ holds even for the simplest $k$-point metrics~\cite{Manasse}. Yet, when viewed on the line, the lower bound constructions are emphatically \emph{not} pattern-avoiding, suggesting a possible improvement. 
\item Does the effect of pattern-avoidance extend to 
more general metric spaces, e.g., trees (the extension of double-coverage (DC) is $k$-competitive on trees). This would require the definition of more general pattern-concepts, which may be interesting in itself. 
\item By the known competitive results, our bounds immediately transfer to the DC algorithm (with a factor $k$ cost-increase). DC is simple and has an intuitive potential-based analysis~\cite{Chrobak, Borodin}. This analysis does not hint at why DC is adaptive to avoided patterns. Can this be seen directly through the analysis of DC or other competitive $k$-server algorithms?
\end{itemize}

\paragraph{TSP}
\begin{itemize}
	\item Our $\fO(\log{n})$ bound for TSP raises the question for which special cases this cost may be $\fO(1)$. We have a complete characterization for principal classes (families defined by the avoidance of a single pattern). A broader characterization, e.g., in terms of \emph{path-width} and other parameters and \emph{merging}~\cite{gridwidth} may be possible.
	
	%
	%
\end{itemize}

\paragraph{Further questions}
\begin{itemize}
\item Our result for BST holds for bounded twin-width, regardless of avoided patterns. For $k$-server and TSP, our main bounds depend on $c_{\pi}$, for a pattern $\pi$ avoided in the input. Recall that distance-balanced merge sequences exist for individual permutations with bounded twin-width, but with exponential width blowup. 
We leave open whether this dependence can be improved. 

\item Which other online or geometric problems benefit from pattern-avoidance in the input? Online problems with sequential structure that may be good candidates include list labeling, scheduling, online matching, or interval coloring~\cite{Borodin}. 
\item 
Pattern-avoidance is sensitive to perturbation, whereas the optimal cost typically changes smoothly. (Note however, that \emph{twin-width} affords some robustness, e.g., a constant number of pattern-occurrences can increase twin-width only by a constant.) Do other, more robust concepts of pattern  (e.g.,~\cite{Newman}) relate to complexity?

\item The relation between $\tww(\Av(\pi))$ and $c_{\pi}$, i.e., the largest possible twin-width attainable by a $\pi$-avoiding permutation and the Füredi-Hajnal limit of $\pi$, is not yet understood.
For instance, it is possible that they are polynomially related (which would imply that $\tww(\Av(\pi))$ is exponential in $|\pi|$ for almost all $\pi$), but it is also possible that $\tww(\Av(\pi))$ is polynomially bounded in $|\pi|$.

\end{itemize}

\paragraph{Acknowledgement.} The work evolved from discussions at Dagstuhl Seminar 23121. We thank Yaniv Sadeh for helpful comments on an earlier version of this paper. We also thank anonymous reviewers for their thoughtful feedback. 

\newpage
\small
\bibliographystyle{alphaurl}
\bibliography{main}

\end{document}